\newtheorem{theorem}{Theorem}
\newtheorem{proposition}{Proposition}
\newtheorem{corollary}{Corollary}
\newtheorem{lemma}{Lemma}
\newtheorem{definition}{Definition}
\def\beq{\begin{equation}}
\def\eeq{\end{equation}}
\def\bea{\begin{eqnarray}}
\def\eea{\end{eqnarray}}
\def\beann{\begin{eqnarray*}}
\def\eeann{\end{eqnarray*}}
\def\beasn{\begin{sneqnarray}}
\def\eeasn{\end{sneqnarray}}
\def\ben{\begin{enumerate}}
\def\een{\end{enumerate}}
\def\bit{\begin{itemize}}
\def\eit{\end{itemize}}
\def\derpar#1#2{\displaystyle\frac{\partial{#1}}{\partial{#2}}}
\def\derpars#1#2#3{\displaystyle\frac{\partial^2{#1}}{\partial{#2}\partial{#3}}}
\def\restric#1#2{\left.#1\right|_{#2}}
\def\W{{\cal W}}
\def\C{{\cal C}}
\def\P{{\cal P}}
\def\X{{\cal X}}
\def\vf{\mathfrak{X}}
\def\Lag{{\cal L}}
\def\Leg{{\cal FL}}
\def\Lie{\mathop\textnormal{L}\nolimits}
\def\tabaddress#1{{\small\it\begin{tabular}[t]{c}#1
\\[1.2ex]\end{tabular}}}
\title{A NEW MULTISYMPLECTIC UNIFIED FORMALISM FOR SECOND ORDER
CLASSICAL FIELD THEORIES}
\author{
{\sc  Pedro Daniel Prieto-Mart\'\i nez\thanks{{\bf e}-{\it mail}:
   peredaniel@ma4.upc.edu} }\\
   {\sc Narciso Rom\'an-Roy\thanks{{\bf e}-{\it mail}:
   nrr@ma4.upc.edu}}  \\
   \tabaddress{Departamento de Matem\'atica Aplicada IV.
   Edificio C-3, Campus Norte UPC\\
   C/ Jordi Girona 1. 08034 Barcelona. Spain}}
   \date{\today \\
   }
\begin{document}

\maketitle

\pagestyle{myheadings}

\thispagestyle{empty}

\begin{abstract}
We present a new multisymplectic framework for second-order classical field theories which is based
on an extension of the unified Lagrangian-Hamiltonian formalism to these kinds of systems. This
model provides a straightforward and simple way to define the Poincar\'e-Cartan form and clarifies
the construction of the Legendre map (univocally obtained as a consequence of the constraint
algorithm). Likewise, it removes the undesirable arbitrariness in the solutions to the field
equations, which are analyzed in-depth, and written in terms of holonomic sections and multivector
fields. Our treatment therefore completes previous attempt to achieve this aim. The formulation is
applied to describing some physical examples; in particular, to giving another alternative
multisymplectic description of the Korteweg-de Vries equation.
\end{abstract}

 \bigskip
\noindent {\bf Key words}:
 \textsl{Higher-order field theories, Lagrangian and Hamiltonian formalisms, Multisymplectic manifolds,
KdV equation.}

\vbox{\raggedleft AMS s.\,c.\,(2010): \null  70H50, 70S05, 35G99, 53D42, 55R10}\null
\markright{\textnormal{P.D. Prieto-Mart\'\i nez, N. Rom\'an-Roy:  
 \sl Unified formalism for second order field theories.}}

\clearpage

\tableofcontents

\section{Introduction}
\label{sec:Introduction}

Higher-order field theories are relevant in physics and  applied mathematics because they appear
in many of important situations; for instance, the standard gravitational theories, in particular
Hilbert's Lagrangian for gravitation, are of this kind; as well as string theories, Podolsky's
generalization  of electromagnetism,  the different forms of the Korteweg-de Vries equation in
fluid theory, and other interesting models in physics. As a consequence,  many works are devoted
to the development of a formalism for these kinds of theories and their application to many models
in mechanics and field theory (a long but non-exhaustive list of references can be found in
\cite{art:Batlle_Gomis_Pons_Roman88,art:Prieto_Roman11,art:Prieto_Roman12_1}).

In higher-order mechanical systems and field theories, the formalism shows explicit dependence
on accelerations or higher-order derivatives of the generalized coordinates of position, or in
the higher-order derivatives of the fields. Thus, for Lagrangian systems, if the Lagrangian
function depends on derivatives of order $k$, the corresponding Euler-Lagrange equations are of
order $2k$. These kinds of systems are therefore modeled geometrically using higher-order tangent
and jet bundles as the main tool  (see, for instance,
\cite{proc:Cantrijn_Crampin_Sarlet86,unpub:Crampin_Saunders11,
book:DeLeon_Rodrigues85,proc:DeLeon_Rodrigues87,art:Gracia_Pons_Roman91,
art:Gracia_Pons_Roman92,art:Krupkova00,unpub:Mukherjee_Paul11,
book:Saunders89,art:Saunders_Crampin90}).

In particular, as regards higher-order field theories, great efforts have been made to extend
the classical multisymplectic framework developed for describing first-order field theories to this
realm. The usual way to do this consists in generalizing the construction of the
\emph{Poincar\'e-Cartan form} for a higher-order Lagrangian density and then stating the Lagrangian
formalism \cite{art:Aldaya_Azcarraga78_2,art:Aldaya_Azcarraga78_1,proc:Ferraris_Francaviglia86,
proc:Garcia_Munoz83,proc:Garcia_Munoz91,procs:Grigore2009,art:Horak-Kolar83,art:Saunders87}.
Nevertheless, this procedure involves some ambiguity, since the definition of the
Poincar\'e-Cartan form in a higher-order jet bundle is not unique, and despite that for the second-order
case it is proved that all these forms are equivalent \cite{book:Saunders89,art:Saunders_Crampin90},
this is not true for the general higher-order cases. These and other kinds of problems involving
the non-uniqueness of the geometrical constructions also appear in the definition of the Legendre
transformation associated with a higher-order Lagrangian and as well as a suitable choice of the
multimomentum phase space for the Hamiltonian formalism of the theory
\cite{art:Aldaya_Azcarraga80,art:Francaviglia_Krupka82,art:Kolar84,proc:Krupka84}.

A way to overcome these difficulties and simplify the formalism was recently achieved in
\cite{art:Campos_DeLeon_Martin_Vankerschaver09} using the so-called \emph{Skinner-Rusk} or
\emph{Lagrangian-Hamiltonian unified formalism} for field theories. The origin of this formalism
is the seminal paper \cite{art:Skinner_Rusk83}, where R. Skinner and R. Rusk present a new framework
for first-order autonomous mechanical systems  that compresses the Lagrangian and Hamiltonian formalisms
into a single one. This was subsequently generalized to first-order non-autonomous dynamical systems
\cite{art:Barbero_Echeverria_Martin_Munoz_Roman08,art:Cortes_Martinez_Cantrijn02}, control systems
\cite{art:Barbero_Echeverria_Martin_Munoz_Roman07}, higher-order autonomous and non-autonomous
mechanical systems \cite{proc:Cantrijn_Crampin_Sarlet86,art:Colombo_Martin_Zuccalli10,
book:DeLeon_Rodrigues85,art:Gracia_Pons_Roman91,book:Miron10,art:Prieto_Roman11,
art:Prieto_Roman12_1,art:Prieto_Roman14}, and first-order classical field theories
\cite{art:DeLeon_Marrero_Martin03,art:Echeverria_Lopez_Marin_Munoz_Roman04,RRS-2005,RRSV-2011}.
Then, in \cite{art:Campos_DeLeon_Martin_Vankerschaver09} the authors present an extension of this
formulation to higher-order field theories in order to develop an unambiguous framework for
higher-order classical field theories. While this model allows us to simplify previous formulations,
some arbitrary parameters appearing in the solutions of the higher-order field equations
and in the definition of
the Legendre transformation must be fixed ``ad-hoc''. Another interesting approach to the
higher-order unified formalism for field theory, but using infinite-order jet bundles, is given in
\cite{art:Vitagliano10}.

In this paper, we present a modification of the model given in
\cite{art:Campos_DeLeon_Martin_Vankerschaver09} by using finite higher-order bundles to overcome
some of the ambiguities in the solutions of the equations given by the model, thus clarifying the
construction of the Legendre map and the choice of the jet and the multimomentum bundles for the
Lagrangian and the Hamiltonian formalisms, as well as the field equations in both formalisms.
Our model is therefore a completion of the approaches given in
\cite{art:Campos_DeLeon_Martin_Vankerschaver09,art:Vitagliano10}.
Our treatment works for second-order field theories because we want it to be applied here and in
future papers to describe the well known theories previously cited: gravitation, Korteweg-de Vries
equation and other models in physics, all of which are of second-order. Another advantage of
working at this order is that we can use the diffeomorphism among several geometric structures
in order to avoid part of the ambiguity inherent to the theory. In any case, further work to
generalize our results to higher-order cases is in progress.

The organization of the paper is as follows. First, in Section \ref{sec:MathBackground}, we review
the geometric structures of higher-order jet bundles, introduce the concepts of holonomic sections
and multivector fields in order to state the field equations on these bundles, and define the space
of symmetric multimomenta suitable for the Hamiltonian formalism. Section \ref{sec:LagHamFormalism}
is devoted to developing our proposal of the Lagrangian-Hamiltonian unified formalism for
second-order field theories. After introducing the unified jet-multimomentum bundles and their
relevant submanifolds where the formalism takes place, we state the field equations in the unified
formalism using sections and multivector fields. Thanks to this unified framework, we establish the
Lagrangian and Hamiltonian formalisms for second-order field theories (in Sections
\ref{sec:LagrangianForm} and \ref{sec:HamiltonianForm}) for both the regular and singular
(almost-regular) cases. Finally, in Section \ref{sec:Examples} we apply our formulation to describe
an academic model: a first-order Lagrangian as a second-order one, and two physical systems:
the bending or deflection of a plate with clamped edges and the classical
Korteweg-de Vries equation. A comparison of our results with those of previous papers is given
in the last Section \ref{sec:Conclusions}, where we also summarize our results and outlook.

All the manifolds are real, second countable and smooth ($C^\infty$). The maps and the structures are assumed
to be $C^\infty$.  Sum over repeated indices is understood. The usual multi-index notation introduced
in \cite{book:Saunders89} is used: a multi-index $I$ is an element of $\mathbb{Z}^m$ such that every
component is positive, the $i$th position of the multi-index is denoted $I(i)$, and
$|I| = \sum_{i=1}^{m} I(i)$ is the length of the multi-index, while $I! = \prod_{i=1}^{m} I(i)!$.
Finally, an expression of the type $|I| = k$ means that the expression (or the sum) is taken for
every multi-index of length $k$. The same applies for inequalities. (See \cite{book:Saunders89},
\S 6.1 for details).

\section{Geometric structures of higher-order jet bundles}
\label{sec:MathBackground}

\subsection{Higher-order jet bundles. Coordinate total derivatives}
\label{sec:HOJetBundles}

(See \cite{book:Saunders89} for details).

Let $M$ be an orientable $m$-dimensional smooth manifold, and let $\eta \in \Omega^m(M)$ be a
volume form for $M$. Let $E \stackrel{\pi}{\longrightarrow} M$ be a bundle with $\dim E = m + n$.
If $k \in \mathbb{N}$, the \emph{$k$th-order jet bundle} of the projection $\pi$, $J^k\pi$, is
the manifold of the $k$-jets of local sections $\phi \in \Gamma(\pi)$; that is, equivalence classes
of local sections of $\pi$ by the relation of equality on every partial derivative up to order $k$.
A point in $J^k\pi$ is denoted by $j^k_x\phi$, where $x \in M$ and $\phi \in \Gamma(\pi)$ is a
representative of the equivalence class. We have the following natural projections: if $r \leqslant k$,
$$
\begin{array}{rcl}
\pi^k_r \colon J^k\pi & \longrightarrow & J^r\pi \\
j^k_x\phi & \longmapsto & j^r_x\phi
\end{array}
\quad ; \quad
\begin{array}{rcl}
\pi^k \colon J^k\pi & \longrightarrow & E \\
j^k_x\phi & \longmapsto & \phi(x)
\end{array}
\quad ; \quad
\begin{array}{rcl}
\bar{\pi}^k \colon J^k\pi & \longrightarrow & M \\
j^k_x\phi & \longmapsto & x
\end{array}
$$
Observe that $\pi^s_r\circ\pi^k_s = \pi^k_r$, $\pi^k_0 = \pi^k$ (where $J^0\pi$ is canonically
identified with $E$), $\pi^k_k = \textnormal{Id}_{J^k\pi}$, and $\bar{\pi}^k = \pi \circ \pi^k$.

Local coordinates in $J^k\pi$ are introduced as follows: let $(x^i)$, ($1 \leqslant i \leqslant m$)
be local coordinates in $M$, and $(x^i,u^\alpha)$, ($1 \leqslant \alpha \leqslant n$), local
coordinates in $E$ adapted to the bundle structure. Let $\phi \in \Gamma(\pi)$ be a section with
coordinate expression $\phi(x^i) = (x^i,\phi^\alpha(x^i))$. Then, local coordinates in $J^k\pi$ are
$(x^i,u^\alpha,u_I^\alpha)$, where
$$
u^\alpha = \phi^\alpha \quad ; \quad u_I^\alpha =
\frac{\partial^{|I|}\phi^\alpha}{\partial x^I} \quad (1 \leqslant |I| \leqslant k) \, .
$$

Using these coordinates, the local expressions of the natural projections are
$$
\pi^k_r(x^i,u^\alpha,u_I^\alpha) = (x^i,u^\alpha,u_J^\alpha) \ ; \
\pi^k(x^i,u^\alpha,u_I^\alpha) = (x^i,u^\alpha) \ ; \
\bar{\pi}^k(x^i,u^\alpha,u_I^\alpha) = (x^i) \, .
$$

If $\phi \in \Gamma(\pi)$, we denote the \emph{$k$th prolongation} of $\phi$ to
$J^k\pi$ by $j^k\phi \in \Gamma(\bar{\pi}^k)$. In natural coordinates of $J^k\pi$, if
$\phi(x^i) = (x^i,\phi^\alpha(x^i))$, its $k$th prolongation is given by
$$
j^k\phi(x^i) = \left( x^i,\phi^\alpha,\frac{\partial^{|I|}\phi^\alpha}{\partial x^{I}} \right) \, ,
\quad 1 \leqslant |I| \leqslant k \, .
$$

\begin{definition}
Let $E \stackrel{\pi}{\longrightarrow} M$ be a bundle, $x \in M$, $\phi \in \Gamma(\pi)$
a section in $x$, and $v \in T_xM$. The \emph{$k$th holonomic lift} of $v$ by $\phi$
is defined as
$$
((j^k\phi)_*(v),j^{k+1}_x\phi) \in (\pi^{k+1}_{k})^*T J^k\pi \, .
$$
\end{definition}

In coordinates, if $v \in T_xM$ is given by $v = v^i\restric{\frac{\partial}{\partial x^i}}{x}$,
its $k$th holonomic lift is
\begin{equation}\label{eqn:LocalCoordHolonomicLiftingTangentVectors}
(j^k\phi)_*(v) = v^i\left( \restric{\derpar{}{x^i}}{j_x^{k}\phi}
+ \sum_{|I|=0}^{k}\restric{u_{I+1_i}^\alpha(j_x^{k+1}\phi)\derpar{}{u_{I}^\alpha}}{j_x^k\phi} \right) \, .
\end{equation}

The vector space $(\pi^{k+1}_{k})^*(T J^k\pi)_{j^{k+1}_x\pi}$ has a canonical splitting as a
direct sum of two subspaces:
$$
(\pi^{k+1}_{k})^*(T J^k\pi)_{j^{k+1}_x\phi} =
(\pi^{k+1}_{k})^*(V(\bar{\pi}^k))_{j^{k+1}_x\phi}
\oplus (j^k\phi)_*(T_xM) \, ,
$$
where $(j^k\phi)_*T_xM$ denotes the set of $k$th holonomic lifts of tangent vectors in $T_xM$
by $\phi$. As a consequence, the vector bundle
$(\pi^{k+1}_{k})^*\tau_{J^k\pi} \colon (\pi^{k+1}_{k})^*T J^k\pi \to J^k\pi$
has a canonical splitting as a direct sum of two subbundles
$$
\xymatrix{
(\pi^{k+1}_{k})^*T J^k\pi = (\pi^{k+1}_{k})^*V(\bar{\pi}^k) \oplus
H(\pi^{k+1}_{k}) \ar[rr]^-{(\pi^{k+1}_{k})^*\tau_{J^k\pi}}
& \ & J^k\pi
} \, ,
$$
where $H(\pi^{k+1}_k)$ is the union of the fibres $(j^k\phi)_*(T_xM)$, for $x \in M$.

Now, if $\vf(\pi^{k+1}_{k})$ denotes the module of vector fields along the projection
$\pi^{k+1}_{k}$, the submodule corresponding to sections of
$\restric{(\pi^{k+1}_{k})^*\tau_{J^{k}\pi}}{(\pi^{k+1}_{k})^*V(\bar{\pi}^k)}$
is denoted by $\vf^v(\pi^{k+1}_{k})$, and the submodule corresponding to sections of
$\restric{(\pi^{k+1}_{k})^*\tau_{J^{k}\pi}}{H(\pi^{k+1}_{k})}$ is denoted by $\vf^h(\pi^{k+1}_{k})$.
The splitting for the bundles given above induces the following canonical splitting for the module
$\vf(\pi^{k+1}_{k})$:
$$
\vf(\pi^{k+1}_{k}) = \vf^v(\pi^{k+1}_{k}) \oplus \vf^h(\pi^{k+1}_{k}) \, .
$$
An element of the submodule $\vf^h(\pi^{k+1}_{k})$ is called a \emph{total derivative}.

\begin{definition}
Given a vector field $X \in \vf(M)$, a section $\phi \in \Gamma(\pi)$ and a point $x \in M$,
the \emph{$k$th holonomic lift} of $X$ by $\phi$, $j^kX \in \vf^{h}(\pi^{k+1}_{k})$,
is defined as
$$
(j^kX)_{j^{k+1}_x\phi} = (j^k\phi)_*(X_{x}) \, .
$$
\end{definition}

In local coordinates, if $X \in \vf(M)$ is given by $\displaystyle X = X^i\derpar{}{x^i}$,
then, bearing in mind the local expression \eqref{eqn:LocalCoordHolonomicLiftingTangentVectors}
of the $k$th holonomic lift for tangent vectors, the $k$th holonomic lift of $X$ is
$$
j^kX = X^i\left( \derpar{}{x^i} + \sum_{|I|=0}^{k}u_{I+1_i}^\alpha \derpar{}{u_I^\alpha} \right) \, .
$$

Finally, the \emph{coordinate total derivatives} are the holonomic lifts of the local vector
fields $\partial / \partial x^i \in \vf(M)$, which are denoted by $d / dx^i \in \vf(\pi^{k+1}_{k})$,
and whose coordinate expressions are
$$
\frac{d}{dx^i} = \derpar{}{x^i} + \sum_{|I|=0}^{k} u_{I+1_i}^\alpha \derpar{}{u_I^\alpha} \, ,
\quad 1 \leqslant i \leqslant m \, .
$$

\subsection{Holonomic sections and multivector fields}
\label{sec:Holonomy&Semisprays}

(See appendix \ref{sec:MultiVF} for the terminology and notation on multivector fields in a manifold).

\begin{definition}
A section $\psi \in \Gamma(\bar{\pi}^k)$ is \emph{holonomic of type $r$}
($1 \leqslant r \leqslant k$) if $j^{k-r+1}\phi = \pi^{k}_{k-r+1} \circ \psi$, where
$\phi = \pi^k \circ \psi \in \Gamma(\pi)$; that is, the section $\pi^{k}_{k-r+1} \circ \psi$
is the prolongation to the jet bundle $J^{k-r+1}\pi$ of a section $\phi \in \Gamma(\pi)$.
In particular, a section $\psi$ is \emph{holonomic of type $1$} (or simply
\emph{holonomic}) if $j^k(\pi^{k} \circ \psi) = \psi$; that is, $\psi$ is the $k$th
prolongation of a section $\phi = \pi^{k} \circ \psi \in \Gamma(\pi)$.
\end{definition}

The commutative diagram that illustrates the previous definition is the following
$$
\xymatrix{
\ & \ & J^k\pi \ar[d]_{\pi^k_{k-r+1}} \ar@/^2.5pc/[ddd]^{\pi^k} \\
M \ar@/^1.5pc/[urr]^{\psi} \ar@/_1.5pc/[ddrr]_{\phi = \pi^k\circ\psi}
\ar[rr]^-{\pi^k_{k-r+1}\circ\psi} \ar[drr]_{j^{k-r+1}\phi}
& \ & J^{k-r+1}\pi \ar[d]_{\textnormal{Id}} \\
\ & \ & J^{k-r+1}\pi \ar[d]_{\pi^{k-r+1}} \\
\ & \ & E
}
$$

In the natural coordinates of $J^k\pi$, if $\psi \in \Gamma(\bar{\pi}^k)$ is given by
$\psi(x^i) = (x^i,\psi^\alpha,\psi_I^\alpha)$ ($1 \leqslant |I| \leqslant k$),
then the condition for $\psi$ to be holonomic of type $r$ gives the system of partial
differential equations
\begin{equation}\label{eqn:HolonomyConditionSect1}
\psi_{I}^\alpha = \frac{\partial^{|I|} \psi^\alpha}{\partial x^{I}} \, ,
\qquad 1 \leqslant |I| \leqslant k-r+1 \, , \ 1 \leqslant \alpha \leqslant n \, ,
\end{equation}
or, equivalently,
\begin{equation}\label{eqn:HolonomyConditionSect2}
\psi_{I+1_i}^\alpha = \derpar{\psi_I^\alpha}{x^i} \, ,
\qquad 1 \leqslant |I| \leqslant k-r \, , \ 1 \leqslant i \leqslant m \, ,
\ 1 \leqslant \alpha \leqslant n \, .
\end{equation}

\begin{definition}
A multivector field $\X \in \vf^m(J^{k}\pi)$ is \emph{holonomic of type $r$},
with $1 \leqslant r \leqslant k$, if the following conditions are satisfied:
\begin{enumerate}
\item $\X$ is integrable.
\item $\X$ is $\bar{\pi}^k$-transverse.
\item The integral sections $\psi \in \Gamma(\bar{\pi}^k)$ of $\X$ are holonomic of type $r$.
\end{enumerate}
In particular, a multivector field $\X \in \vf^{m}(J^{k}\pi)$ is
\emph{holonomic of type $1$} (or simply \emph{holonomic}) if it is integrable,
$\bar{\pi}^{k}$-transverse and its integral sections $\psi \in \Gamma(\bar{\pi}^k)$ are the
$k$th prolongations of sections $\phi \in \Gamma(\pi)$.
\end{definition}

In natural coordinates, if $\X \in \vf^{m}(J^k\pi)$ is a locally decomposable
and $\bar{\pi}^k$-transverse multivector field locally given by
$$
\X = \bigwedge_{i=1}^{m} f_i
\left(  \derpar{}{x^i} + F_i^\alpha\derpar{}{u^\alpha} + F_{I,i}^\alpha\derpar{}{u_I^\alpha} \right) \, ,
\quad (1 \leqslant |I| \leqslant k) \, ,
$$
with $f_i$ non-vanishing local functions. Then, the condition for $\X$ to be holonomic of type $r$
gives the following equations:
\begin{equation}\label{eqn:MultiVFHolonomyLocal}
F_i^\alpha = u_i^\alpha \quad ; \quad
F_{I,i}^\alpha = u_{I+1_i}^\alpha \, , \quad 1 \leqslant |I| \leqslant k-r \, , \
1 \leqslant i \leqslant m \, , \ 1 \leqslant \alpha \leqslant n \, .
\end{equation}
Hence, the local expression of a locally decomposable holonomic multivector field of type $r$ is
$$
\X = \bigwedge_{i=1}^{m} f_i
\left(  \derpar{}{x^i} + u_i^\alpha\derpar{}{u^\alpha} + \sum_{|I|=1}^{k-r} u_{I+1_i}^\alpha\derpar{}{u_I^\alpha}
+ \sum_{|I|=k-r+1}^{k} F_{I,i}^\alpha\derpar{}{u_I^\alpha}\right) \, ,
$$
In the particular case $r=1$, the local expression is
$$
\X = \bigwedge_{i=1}^{m} f_i
\left(  \derpar{}{x^i} + u_i^\alpha\derpar{}{u^\alpha} + \sum_{|I|=1}^{k-1} u_{I+1_i}^\alpha\derpar{}{u_I^\alpha}
+ F_{K,i}^\alpha\derpar{}{u_K^\alpha}\right) \, , \quad |K| = k \, .
$$

\noindent\textbf{Remark:}
It is important to point out that a locally decomposable and $\bar{\pi}^k$-transverse multivector
field $\X$ satisfying the local equations \eqref{eqn:MultiVFHolonomyLocal} may not be
holonomic of type $r$, since these local equations are not a sufficient or necessary condition for
the multivector field to be integrable. However, we can assure that if such a multivector field
admits integral sections, then its integral sections are holonomic of type $r$. In first-order
theories, these equations are equivalent to the so-called \emph{semi-holonomy (or SOPDE) condition}
\cite{art:Echeverria_Munoz_Roman98}.

\subsection{The space of 2-symmetric multimomenta}
\label{sec:SymmetricMultimomenta}

For the sake of simplicity, in the following we restrict ourselves to the case $k=2$, that is, the
second-order case, which is our main goal in this paper. However, all the results that follow in
this Section can be stated for an arbitrary value of $k$ (see \cite{phd:Campos} for details).

Following \cite{proc:Carinena_Crampin_Ibort91,art:Echeverria_DeLeon_Munoz_Roman07,
art:Echeverria_Munoz_Roman00_JMP}, let us consider $\Lambda_2^m(J^{1}\pi)$ as the phase space for
the Hamiltonian formalism of a second-order field theory; that is, the bundle of $m$-forms over
$J^{1}\pi$ vanishing by the action of two $\bar{\pi}^{1}$-vertical vector fields. We have the
following canonical projections:
$$
\pi_{J^{1}\pi} \colon \Lambda_2^m(J^{1}\pi) \to J^{1}\pi \quad ; \quad
\bar{\pi}_{J^{1}\pi} = \bar{\pi}^{1} \circ \pi_{J^{1}\pi} \colon \Lambda_2^m(J^{1}\pi) \to M \, .
$$

This bundle is endowed with some canonical structures. First, we define the
\emph{tautological (or Liouville) $m$-form} on $\Lambda_2^m(J^{1}\pi)$ by
$$
\Theta_{1}(\omega)(X_1,\ldots,X_m) =
\omega(T\pi_{J^{1}\pi}(X_1),\ldots,T\pi_{J^{1}\pi}(X_m)) \, ,
$$
where $\omega \in \Lambda_2^m(J^{1}\pi)$, and $X_1,\ldots,X_m \in T_{\omega}\Lambda_2^m(J^{1}\pi)$.
Then, we can define a multisymplectic $(m+1)$-form $\Omega_{1} \in \Omega^{m+1}(\Lambda_2^m(J^{1}\pi))$
as $\Omega_{1} = -d\Theta_{1}$, which is called the \emph{canonical (or Liouville) multisymplectic
$(m+1)$-form} on $\Lambda_2^m(J^{1}\pi)$. Recall that a multisymplectic $k$-form in a $n$-dimensional
manifold $N$ is a closed $k$-form $\Omega$ (with $1 \leqslant k \leqslant n$) which is $1$-nondegenerate;
that is, for $p\in N$,we have that $i(X_p)\Omega_p = 0$ if, and only if, $X_p= 0$.

In addition, the bundle $\Lambda_2^m(J^{1}\pi)$ is diffeomorphic to the union of the affine maps
from $J^1_u\bar{\pi}^{1}$ to $(\Lambda^mM)_{\bar{\pi}^{1}(u)}$, where $u \in J^{1}\pi$ is an
arbitrary point; that is,
$$
\Lambda_2^m(J^{1}\pi) \cong \bigcup_{u \in J^{1}\pi}
 \textnormal{Aff}(J^1_u\bar{\pi}^{1},(\Lambda^mM)_{\bar{\pi}^{1}(u)}) \, .
$$
Using this identification and the fact that $J^2\pi$ is embedded into $J^1\bar{\pi}^{1}$, we can
define a canonical pairing between the elements of $J^2\pi$ and the elements of
$\Lambda_2^m(J^{1}\pi)$ as a fibered map over $J^{1}\pi$, defined as follows
$$
\begin{array}{rcl}
\C \colon J^{2}\pi \times_{J^{1}\pi} \Lambda_2^m(J^{1}\pi) & \longrightarrow & \Lambda_1^m(J^{1}\pi) \\
(j^{2}_x\phi,\omega) & \longmapsto & (j^{1}\phi)^*_{j^{1}_x\phi}\omega
\end{array}
$$
As $\C$ takes values in $\Lambda_1^m(J^{1}\pi)$, there exists a \emph{pairing function}
associated to $\C$ and the volume form $\eta \in \Omega^{m}(M)$,
denoted by  $C \colon J^{2}\pi \times_{J^{1}\pi} \Lambda_2^m(J^{1}\pi) \to \mathbb{R}$,
and such that
$C(j^{2}_x\phi,\omega)\cdot(\bar{\pi}_{J^{1}\pi})^*\eta = (j^{1}\phi)^*_{j^{1}_x\phi}\omega$.

Let $(U;x^i,u^\alpha)$, $1 \leqslant i \leqslant m$, $1 \leqslant \alpha \leqslant n$, be a local
chart in $E$ adapted to the bundle structure and such that
$\eta = d x^1 \wedge \ldots \wedge d x^m \equiv d^mx$. Then, the induced natural coordinates
in $J^{1}\pi$ are $((\pi^{1})^{-1}(U);x^i,u^\alpha,u_i^\alpha)$. Therefore, the induced local
coordinates in $\Lambda_2^m(J^{1}\pi)$ are
$((\pi^{1} \circ \pi_{J^{1}\pi})^{-1}(U);x^i,u^\alpha,u_i^\alpha,p,p_\alpha^i,p_\alpha^{ij})$,
$1 \leqslant i,j \leqslant m$, $1 \leqslant \alpha \leqslant n$. Observe that
$\dim\Lambda_2^m(J^1\pi) = m + n + 2nm + nm^2 + 1$.
In these coordinates, the Liouville $m$ and $(m+1)$-forms have the following local expressions
\begin{equation}\label{eqn:LiouvilleFormsLocal}
\begin{array}{l}
\Theta_{1} = pd^mx + p^i_\alpha d u^\alpha \wedge d^{m-1}x_i +
p^{ij}_\alpha d u_i^\alpha \wedge d^{m-1}x_j \, , \\[10pt]
\Omega_{1} = -d p \wedge d^mx - d p^i_\alpha \wedge d u^\alpha \wedge d^{m-1}x_i -
d p^{ij}_\alpha \wedge d u_j^\alpha \wedge d^{m-1}x_j \, .
\end{array}
\end{equation}
Finally, the pairing function $C$ associated to $\C$ and $\eta$ has the following
coordinate expression
\begin{equation}\label{eqn:CanonicalPairingLocal}
C(x^i,u^\alpha,u_i^\alpha,p,p_\alpha^i,p_\alpha^{ij}) = p + p_\alpha^iu_i^\alpha +
p_\alpha^{ij}u_{1_i+1_j}^\alpha \, .
\end{equation}

According to the results in \cite{art:Saunders_Crampin90}, let us consider the submanifold
$J^2\pi^\dagger \hookrightarrow \Lambda_2^m(J^1\pi)$ defined locally by
$$
J^2\pi^\dagger = \left\{ \omega \in \Lambda_2^m(J^1\pi) \, \colon \, p_\alpha^{ij} = p_\alpha^{ji} \
\mbox{ for every } 1 \leqslant i,j\leqslant m \, , \, 1 \leqslant \alpha \leqslant n \right\} \, .
$$
This submanifold is $\pi_{J^{1}\pi}$-transverse, and therefore fibers over $J^{1}\pi$, $E$ and $M$.
Let $\pi_{J^1\pi}^\dagger \colon J^2\pi^\dagger \to J^1\pi$ and
$\bar{\pi}_{J^1\pi}^\dagger = \bar{\pi}^{1} \circ \pi_{J^1\pi}^\dagger \colon J^2\pi^\dagger \to M$
be the canonical projections. Natural coordinates in $J^2\pi^\dagger$ adapted to the bundle structure
are $(x^i,u^\alpha,u^\alpha_i,p,p_\alpha^{i},p_\alpha^{I})$, where $|I| = 2$. Using these coordinates,
the natural embedding  $j_s \colon J^2\pi^\dagger \hookrightarrow \Lambda_2^m(J^1\pi)$ is given by
\begin{equation}\label{eqn:EmbeddingSymmetricMultimomentaLocal}
\begin{array}{c}
\displaystyle j_s^*x^i = x^i \quad ; \quad j_s^*u^\alpha = u^\alpha \quad ; \quad j_s^*u_i^\alpha = u_i^\alpha
\quad ; \quad j_s^*p_\alpha^i = p_\alpha^i \, , \\[10pt]
\displaystyle j_s^*p_\alpha^{ij} = \frac{1}{n(ij)} \, p_\alpha^{1_i+1_j} \, , \quad \mbox{where }
n(ij) = \begin{cases} 1 \, , & \mbox{ if } i=j \\ 2 \, , & \mbox{ if } i \neq j \end{cases}
\end{array}
\end{equation}

The submanifold $J^2\pi^\dagger \hookrightarrow \Lambda_2^m(J^1\pi)$ is called the \emph{extended
$2$-symmetric multimomentum bundle}. Although this submanifold is defined using coordinates,
this construction is canonical \cite{art:Saunders_Crampin90,phd:Campos}.

\noindent\textbf{Remark:}
Observe that $J^2\pi^\dagger$ is defined by $nm(m-1)/2$ local constraints, and therefore we have
$$
\dim J^{2}\pi^\dagger = \dim\Lambda_2^m(J^1\pi) - \frac{nm(m-1)}{2} = m+n+2mn+\frac{nm(m+1)}{2} + 1 \, .
$$

All the geometric structures defined above for $\Lambda_2^m(J^1\pi)$ can be restricted to
$J^2\pi^\dagger$. In particular, let us denote $\Theta_1^s = j_s^*\Theta_1 \in \Omega^{m}(J^2\pi^\dagger)$
and $\Omega_1^s = j_s^*\Omega_1 = -d\Theta_1^s \in \Omega^{m+1}(J^2\pi^\dagger)$ the pull-back of the
Liouville $m$ and $(m+1)$-forms to $J^2\pi^\dagger$, which we call the
\emph{symmetrized Liouville $m$ and $(m+1)$-forms}. Bearing in mind the local expressions
\eqref{eqn:LiouvilleFormsLocal} of the Liouville $m$ and $(m+1)$-forms, and
\eqref{eqn:EmbeddingSymmetricMultimomentaLocal} of the canonical embedding
$j_s \colon J^2\pi^\dagger \hookrightarrow \Lambda_2^m(J^1\pi)$,
the coordinate expressions of $\Theta_1^s$ and $\Omega_1^s$ are
\begin{equation}\label{eqn:LiouvilleSymmetricFormsLocal}
\begin{array}{l}
\displaystyle \Theta_{1}^s = p d^mx + p^i_\alpha d u^\alpha \wedge d^{m-1}x_i +
 \frac{1}{n(ij)} \, p^{1_i+1_j}_\alpha d u_i^\alpha \wedge d^{m-1}x_j \, , \\[10pt]
\displaystyle \Omega_{1}^s = -d p \wedge d^mx -
d p^i_\alpha \wedge d u^\alpha \wedge d^{m-1}x_i -
 \frac{1}{n(ij)} \, d p^{1_i+1_j}_\alpha \wedge d u_i^\alpha \wedge d^{m-1}x_j \, .
\end{array}
\end{equation}

An important fact concerning the pull-back of the multisymplectic $(m+1)$-form $\Omega_1$ to
$J^2\pi^\dagger$ is that it is multisymplectic in $J^2\pi^\dagger$. Since $\Omega_1^s = -d\Theta_1^s$
is obviously closed, it suffices to show that it is $1$-nondegenerate, that is,
$i(X)\Omega_1^s = 0$ if, and only if, $X = 0$. In coordinates:
let $X \in \vf(J^2\pi^\dagger)$ be a generic vector field locally given by
$$
X = f^i\derpar{}{x^i} + F^\alpha\derpar{}{u^\alpha} + F_i^\alpha\derpar{}{u_i^\alpha}
+ g \derpar{}{p} + G_\alpha^i\derpar{}{p_\alpha^i} + G_\alpha^I\derpar{}{p_\alpha^I} \, .
$$
Then, taking into account the coordinate expression \eqref{eqn:LiouvilleSymmetricFormsLocal}
of the $(m+1)$-form $\Omega_1^s$, the $m$-form $i(X)\Omega_1^s$ is locally given by
\begin{align*}
i(X)\Omega_1^s &=
f^k\left( d p \wedge d^{m-1}x_k - d p_\alpha^i \wedge d u^\alpha \wedge
d^{m-2}x_{ik} - \frac{d p^{1_i+1_j}_\alpha \wedge d u_i^\alpha \wedge d^{m-2}x_{jk}}{n(ij)} \right) \\
&\qquad {} +
F^\alpha d p_\alpha^i \wedge d^{m-1}x_i + F_i^\alpha \frac{1}{n(ij)} \, d p^{1_i+1_j} \wedge d^{m-1}x_j
- gd^mx \\
&\qquad {} - G_\alpha^id u^\alpha \wedge d^{m-1}x_i
- G_\alpha^I\sum_{1_i+1_j=I}\frac{1}{n(ij)} d u_i^\alpha \wedge d^{m-1}x_j \, ,
\end{align*}
where $d^{m-2}x_{jk} = i(\partial / \partial x^k) d^{m-1}x_j$.
From this coordinate expression it is clear that $i(X)\Omega_1^s = 0$ if, and only if, $X = 0$.
Hence $\Omega_1^s$ is multisymplectic.

Furthermore, from the canonical pairing
$\C \colon J^2\pi \times_{J^1\pi} \Lambda_2^m(J^1\pi) \to \Lambda_1^m(J^1\pi)$,
we can define a pairing $\C^s \colon J^2\pi \times_{J^1\pi} J^2\pi^\dagger \to \Lambda_1^m(J^1\pi)$ as
$$
\C^s(j^{2}_x\phi,\omega) = \C(j^{2}_x\phi,j_s(\omega)) = (j^1\phi)_{j^1_x\phi}^* \ j_s(\omega) \, .
$$
Again, since $\C^s$ takes values in $\Lambda_1^m(J^1\pi)$, there exists
$C^s \in C^\infty(J^2\pi \times_{J^1\pi} J^2\pi^\dagger)$ such that
$C^s(j^2_x\phi,\omega) \cdot (\bar{\pi}_{J^1\pi}^\dagger)^*\eta = (j^1\phi)_{j^1_x\phi}^* \ j_s(\omega)$.
In the natural coordinates of $J^2\pi^\dagger$, bearing in mind the local expressions
\eqref{eqn:CanonicalPairingLocal} of the pairing function $C$ and
\eqref{eqn:EmbeddingSymmetricMultimomentaLocal} of the canonical embedding, the coordinate
expression of $C^s$ is
\begin{equation}\label{eqn:CanonicalPairingSymmetricLocal}
C^s(x^i,u^\alpha,u^\alpha_i,u^\alpha_I,p,p_\alpha^i,p_\alpha^I) =
p + p_\alpha^iu_i^\alpha + p_\alpha^{I}u_{I}^\alpha \, .
\end{equation}

Finally, let us consider the quotient bundle $J^2\pi^\ddagger = J^2\pi^\dagger / \Lambda^m_1(J^1\pi)$,
which is called the \emph{restricted $2$-symmetric multimomentum bundle}. This bundle is endowed
with some natural projections, namely the quotient map
$\mu \colon J^2\pi^\dagger \to J^2\pi^\ddagger$, and the projections
$\pi_{J^1\pi}^\ddagger \colon J^2\pi^\ddagger \to J^1\pi$ and
$\bar{\pi}_{J^1\pi}^\ddagger \colon J^2\pi^\ddagger \to M$.

Observe that $J^2\pi^\ddagger$ can also be defined as the submanifold of
$\Lambda^m_2(J^1\pi) / \Lambda^m_1(J^1\pi)$ defined by the $nm(m-1)/2$ local constraints
$p_\alpha^{ij} - p_\alpha^{ji} = 0$. Hence, natural coordinates
$(x^i,u^\alpha,u^\alpha_i,p,p_\alpha^i,p_\alpha^{ij})$ in $\Lambda_2^m(J^1\pi)$ induce local
coordinates $(x^i,u^\alpha,u^\alpha_i,p_\alpha^i,p_\alpha^{ij})$ in the quotient. Therefore, natural
coordinates in $J^2\pi^\ddagger$ are $(x^i,u^\alpha,u^\alpha_i,p_\alpha^i,p_\alpha^I)$.
Observe that
$$\dim J^{2}\pi^\ddagger = \dim J^{2}\pi^\dagger - 1 = m+n+2mn+\frac{nm(m+1)}{2} \, .
$$

\section{Lagrangian-Hamiltonian unified formalism}
\label{sec:LagHamFormalism}

\subsection{Geometrical setting}
\label{sec:GeomSetting}

Let $E \stackrel{\pi}{\longrightarrow} M$ be the configuration bundle describing a classical field
theory, where $M$ is a $m$-dimensional orientable manifold with fixed volume form
$\eta \in \Omega^{m}(M)$ and $E$ is a $(m+n)$-dimensional manifold. Let
$\Lag \in \Omega^{m}(J^2\pi)$ be a \emph{second-order Lagrangian density} for this theory,
that is, a $\bar{\pi}^2$-semibasic $m$-form on $J^{2}\pi$. Since $\Lag$ is a
$\bar{\pi}^2$-semibasic $m$-form,
we can write $\Lag = L\cdot(\bar{\pi}^2)^*\eta$, where $L \in C^\infty(J^2\pi)$ is the
\emph{second-order Lagrangian function} associated to $\Lag$ and $\eta$.

According to \cite{art:Barbero_Echeverria_Martin_Munoz_Roman08,
art:Echeverria_Lopez_Marin_Munoz_Roman04, art:Prieto_Roman12_1}, let us consider the fiber bundles
$$
\W = J^3\pi \times_{J^1\pi} J^{2}\pi^\dagger \quad ; \quad
\W_r = J^3\pi \times_{J^1\pi} J^{2}\pi^\ddagger \, .
$$
The bundles $\W$ and $\W_r$ are called the \emph{extended $2$-symmetric jet-multimomentum bundle}
and the \emph{restricted $2$-symmetric jet-multimomentum bundle}, respectively.

These bundles are endowed with the canonical projections
\begin{gather*}
\rho_1 \colon \W \to J^{3}\pi \quad ; \quad
\rho_2 \colon \W \to J^{2}\pi^\dagger \quad ; \quad
\rho_{J^{1}\pi} \colon \W \to J^{1}\pi \quad ; \quad
\rho_M \colon \W \to M \, , \\
\rho_1^r \colon \W_r \to J^{3}\pi \quad ; \quad
\rho^r_2 \colon \W_r \to J^{2}\pi^\ddagger \quad ; \quad
\rho_{J^{1}\pi}^r \colon \W_r \to J^{1}\pi \quad ; \quad
\rho_M^r \colon \W_r \to M \, .
\end{gather*}

In addition, the natural quotient map $\mu \colon J^{2}\pi^\dagger \to J^{2}\pi^\ddagger$ induces
a natural projection (that is, a surjective submersion) $\mu_\W \colon \W \to \W_r$. Thus, we have
the following diagram
$$
\xymatrix{
\ & \ & \W \ar@/_1.3pc/[llddd]_{\rho_1} \ar[d]^-{\mu_\W} \ar@/^1.3pc/[rrdd]^{\rho_2} & \ & \ \\
\ & \ & \W_r \ar[lldd]_{\rho_1^r} \ar[rrdd]^{\rho_2^r} \ar[ddd]^<(0.4){\rho_{J^{1}\pi}^r} \ar@/_2.5pc/[dddd]_-{\rho_M^r}|(.675){\hole} & \ & \ \\
\ & \ & \ & \ & J^{2}\pi^\dagger \ar[d]^-{\mu} \ar[lldd]_{\pi_{J^{1}\pi}^\dagger}|(.25){\hole} \\
J^{3}\pi \ar[rrd]_{\pi^{3}_{1}} & \ & \ & \ & J^{2}\pi^\ddagger \ar[dll]^{\pi_{J^{1}\pi}^\ddagger} \\
\ & \ & J^{1}\pi \ar[d]^{\bar{\pi}^{1}} & \ & \ \\
\ & \ & M & \ & \
}
$$

Let $(U;x^i,u^\alpha)$ be a local chart of coordinates in $E$ adapted to the bundle structure
and such that $\eta = d x^1 \wedge \ldots \wedge d x^m \equiv d^mx$. Then, we denote by
$((\pi^{3})^{-1}(U);x^i,u^\alpha,u^\alpha_{i},u^\alpha_{I},u^\alpha_{J})$ and
$((\pi^{1} \circ \pi_{J^{1}\pi}^\dagger)^{-1}(U);x^i,u^\alpha,u^\alpha_i,p,p_\alpha^i,p_\alpha^{I})$
the induced local charts in $J^3\pi$ and $J^{2}\pi^\dagger$, respectively, with $|I| = 2$
and $|J|=3$. Thus, $(x^i,u^\alpha,u^\alpha_i,p_\alpha^i,p_\alpha^{I})$ are the natural
coordinates in $J^{2}\pi^\ddagger$, and the coordinates in $\W$ and $\W_r$ are
$(x^i,u^\alpha,u^\alpha_i,u^\alpha_{I},u^\alpha_{J},p,p_\alpha^{i},p_\alpha^{I})$ and
$(x^i,u^\alpha,u^\alpha_i,u^\alpha_{I},u^\alpha_{J},p_\alpha^{i},p_\alpha^{I})$, respectively.
Observe that
$$
\dim\W = m + n + 2nm + nm(m+1) + \frac{nm(m+1)(m+2)}{6} + 1 \, ,
$$
and $\dim\W_r = \dim\W - 1$.

The bundle $\W$ is endowed with some canonical structures.

\begin{definition}
Let $\Theta_{1}^s \in \Omega^{m}(J^{2}\pi^\dagger)$ and
$\Omega_{1}^s \in \Omega^{m+1}(J^{2}\pi^\dagger)$
be the symmetrized Liouville forms. Then we define the following forms in $\W$
\begin{equation}\label{eqn:UnifiedCanonicalFormDef}
\Theta = \rho_2^*\Theta_1^s \in \Omega^{m}(\W) \quad ; \quad
\Omega = \rho_2^*\Omega_1^s \in \Omega^{m+1}(\W) \, ,
\end{equation}
which are called the \emph{second-order unified canonical forms}.
\end{definition}

Bearing in mind the local expressions \eqref{eqn:LiouvilleSymmetricFormsLocal} of the forms
$\Theta_1^s$ and $\Omega_1^s$, and taking into account that the projection $\rho_2$ is locally
given by
$$
\rho_2(x^i,u^\alpha,u^\alpha_i,u^\alpha_{I},u^\alpha_{J},p,p_\alpha^{i},p_\alpha^{I}) =
(x^i,u^\alpha,u^\alpha_i,p,p_\alpha^{i},p_\alpha^{I}) \, ,
$$
we obtain the coordinate expression of the unified canonical forms, which are
\begin{equation}\label{eqn:UnifiedCanonicalFormsLocal}
\begin{array}{l}
\displaystyle
\Theta = pd^mx + p_\alpha^i d u^\alpha \wedge d^{m-1}x_i + \frac{1}{n(ij)} \, p_\alpha^{1_i+1_j} d u_i^\alpha \wedge d^{m-1}x_j \, , \\[10pt]
\displaystyle
\Omega = - d p \wedge d^mx - d p_\alpha^i \wedge d u^\alpha \wedge d^{m-1}x_i - \frac{1}{n(ij)} \, d p_\alpha^{1_i+1_j} \wedge d u_i^\alpha \wedge d^{m-1}x_j \, .
\end{array}
\end{equation}
Observe that, although $\Omega_1^s$ is multisymplectic, the $(m+1)$-form $\Omega$ is
premultisymplectic, since it is closed and $1$-degenerate. Indeed, for every
$X \in \vf^{V(\rho_2)}(\W)$ we have $i(X)\Omega = 0$. This is easy to check in coordinates:
the $C^\infty(\W)$-module $\vf^{V(\rho_2)}(\W)$ is locally given by
\begin{equation}\label{eqn:PremultisymplecticKernelLocal}
\vf^{V(\rho_2)}(\W) = \left\langle \derpar{}{u^\alpha_{I}},\derpar{}{u^\alpha_{J}} \right\rangle \, ,
\end{equation}
with $|I| = 2$ and $|J| = 3$. Bearing in mind the local expression
\eqref{eqn:UnifiedCanonicalFormsLocal} for $\Omega$, we have
$$
i\left(\derpar{}{u^\alpha_{I}}\right)\Omega = i\left(\derpar{}{u^\alpha_{J}}\right) \Omega = 0 \, .
$$
Hence, $(\W,\Omega)$ is a premultisymplectic manifold of degree $m+1$, and we have
$\ker\Omega = \vf^{V({\rho}_2)}(\W)$.

The second canonical structure in $\W$ is the following:

\begin{definition}
The \emph{second-order coupling $m$-form} in $\W$ is the $\rho_M$-semibasic $m$-form
$\hat{\C} \in \Omega^{m}(\W)$ defined as follows: for every
$(j^3_x\phi,\omega) \in \W$ we have
\begin{equation}\label{eqn:CouplingFormDef}
\hat{C}(j^3_x\phi,\omega) = \C^s(\pi^3_2(j^3_x\phi),\omega)  \, .
\end{equation}
\end{definition}

As before, since $\hat{\C}$ is a $\rho_M$-semibasic $m$-form, there exists a function
$\hat{C} \in C^\infty(\W)$ such that $\hat{\C} = \hat{C}\cdot\rho_M^*\eta$. Bearing
in mind the local expression \eqref{eqn:CanonicalPairingSymmetricLocal} of $C^s$, the coordinate
expression of the second-order coupling form is
\begin{equation}\label{eqn:CouplingFormLocal}
\hat{\C} = \left( p + p_\alpha^iu_i^\alpha + p_\alpha^{I}u_{I}^\alpha \right)d^mx \, .
\end{equation}

We denote $\hat{\Lag} = (\pi^3_2 \circ \rho_1)^*\Lag \in \Omega^{m}(\W)$.
Since the $\Lag$ is a $\bar{\pi}^2$-semibasic form, we have that $\hat{\Lag}$ is a
$\rho_M$-semibasic $m$-form, and thus we can write $\hat{\Lag} = \hat{L} \cdot \rho_M^*\eta$,
where $\hat{L} = (\pi^3_2 \circ \rho_1)^*L \in C^\infty(\W)$ is the pull-back of the Lagrangian
function associated with $\Lag$ and $\eta$. Then, we define a \emph{Hamiltonian submanifold}
$$
\W_o = \left\{ w \in \W \colon \hat{\Lag}(w) = \hat{\C}(w) \right\} \stackrel{j_o}{\hookrightarrow} \W \, .
$$
Since both $\hat{\Lag}$ and $\hat{\C}$ are $\rho_M$-semibasic $m$-forms, the submanifold $\W_o$
is defined by the constraint $\hat{C} - \hat{L} = 0$. In local coordinates, bearing in mind the local
expression \eqref{eqn:CouplingFormLocal} of $\hat{\C}$, the constraint function is
$$
p + p_\alpha^iu_i^\alpha + p_\alpha^{I}u_{I}^\alpha - \hat{L} = 0 \, , \quad |I|=2 \, .
$$

\begin{proposition}\label{prop:WoDiffWr}
The submanifold $\W_o \hookrightarrow \W$ is $1$-codimensional, $\mu_\W$-transverse, and the map
$\Phi = \mu_\W \circ j_o \colon \W_o \to \W_r$ is a diffeomorphism.
\end{proposition}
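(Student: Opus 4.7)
The plan is to prove the three assertions in order, with all the work done by the observation that in the natural coordinates introduced above, the fibers of $\mu_\W$ are one-dimensional and parametrized exactly by $p$, while the defining constraint of $\W_o$ involves $p$ with unit coefficient.

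First, I would verify that $\W_o$ is a $1$-codimensional embedded submanifold. Since $\hat{\Lag}$ and $\hat{\C}$ are both $\rho_M$-semibasic $m$-forms, the equality $\hat{\Lag}(w)=\hat{\C}(w)$ is controlled by a single scalar equation, namely the vanishing of the global function $\xi = \hat C - \hat L \in \Cinfty(\W)$. In the coordinates of the excerpt,
$$
\xi(x^i,u^\alpha,u^\alpha_i,u^\alpha_{I},u^\alpha_{J},p,p^i_\alpha,p^I_\alpha) = p + p^i_\alpha u^\alpha_i + p^I_\alpha u^\alpha_I - \hat L,
$$
with $|I|=2$, and $\partial\xi/\partial p = 1$. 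Hence $d\xi$ is nowhere zero on $\W$, and $\W_o=\xi^{-1}(0)$ is a closed embedded submanifold of codimension $1$.

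Second, I would establish $\mu_\W$-transversality. Because $J^{2}\pi^\ddagger=J^{2}\pi^\dagger/\Lambda^m_1(J^1\pi)$ is the quotient that erases precisely the $p$-coordinate, the projection $\mu_\W=\operatorname{id}_{J^3\pi}\times\mu$ is a surjective submersion whose vertical bundle $V(\mu_\W)=\ker T\mu_\W$ is locally spanned by $\partial/\partial p$. Transversality means $T_w\W = T_w\W_o + V_w(\mu_\W)$ for every $w\in\W_o$; since $d\xi(\partial/\partial p)=1\neq 0$, the vector $\partial/\partial p$ is not tangent to $\W_o$, giving a direct-sum decomposition $T_w\W = T_w\W_o\oplus V_w(\mu_\W)$. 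In particular, the restriction $T_w\W_o\to T_{\mu_\W(w)}\W_r$ of $T\mu_\W$ is an isomorphism, so $\Phi=\mu_\W\circ j_o$ is a local diffeomorphism.

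Third, I would upgrade $\Phi$ to a global diffeomorphism by checking bijectivity. In the common coordinates, $\W_o$ is the graph
$$
p = \hat L(x^i,u^\alpha,u^\alpha_i,u^\alpha_I,u^\alpha_J) - p^i_\alpha u^\alpha_i - p^I_\alpha u^\alpha_I
$$
over the remaining coordinates, which are precisely the natural coordinates on $\W_r$. Thus every fiber of $\mu_\W$ meets $\W_o$ in exactly one point: surjectivity is guaranteed because the formula above defines $p$ from any choice of data on $\W_r$, and injectivity because two points in $\W_o$ with the same image under $\mu_\W$ agree in all coordinates except possibly $p$ and then also in $p$ by the constraint. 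Combined with the local diffeomorphism property from step two, $\Phi$ is a smooth bijection with smooth inverse, hence a diffeomorphism.

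The only non-routine point is verifying that the coordinate description of $V(\mu_\W)$ as $\langle\partial/\partial p\rangle$ is canonical, i.e.\ independent of the adapted chart; but this follows directly from the definition $J^{2}\pi^\ddagger=J^{2}\pi^\dagger/\Lambda^m_1(J^1\pi)$, since $\Lambda^m_1(J^1\pi)$ is the line bundle producing the $p$-direction. No further obstacle is expected.
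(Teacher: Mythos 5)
Your proposal is correct and follows essentially the same route as the paper: both arguments hinge on the unit coefficient of $p$ in the constraint $\hat C-\hat L=0$ together with the identification of $\ker T\mu_\W$ with the $p$-direction, yielding codimension one, transversality, and bijectivity (your graph description of $\W_o$ is just a tidier packaging of the paper's pointwise injectivity/surjectivity computation). The only organisational difference is that you deduce the local diffeomorphism property from the transversal splitting $T_w\W=T_w\W_o\oplus V_w(\mu_\W)$, whereas the paper invokes the fact that a one-to-one submersion between manifolds of equal dimension is a diffeomorphism; both are sound.
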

\begin{proof}
First of all, observe that $\W_o$ is obviously $1$-codimensional, since it is defined by a single
constraint function.

To prove that $\Phi = \mu_\W \circ j_o \colon \W_o \to \W$ is a diffeomorphism,
we show that it is one-to-one. First, observe that for every $(j^3_x\phi,\omega) \in \W_o$, we have
$$
L(\pi^{3}_{2}(j^3_x\phi)) = \hat{L}(j^3_x\phi,\omega) = \hat{C}(j^3_x\phi,\omega) \, ,
$$
and, on the other hand,
$$
(\mu_\W \circ j_o)(j^3_x\phi,\omega) = \mu_\W(j^3_x\phi,\omega)
= (j^3_x\phi,\mu(\omega))
= (j^3_x\phi,[\omega]) \, .
$$

First, let us prove that $\mu_\W \circ j_o$ is injective. In fact, let
$(j^3_x\phi_1,\omega_1), (j^3_x\phi_2,\omega_2) \in \W_o$, then we wish to prove that
\begin{align*}
(\mu_\W \circ j_o)(j^3_x\phi_1,\omega_1) = (\mu_\W \circ j_o)(j^3_x\phi_2,\omega_2) &\Longleftrightarrow (j^3_x\phi_1,\omega_1) = (j^3_x\phi_2,\omega_2) \\
&\Longleftrightarrow j^3_x\phi_1 = j^3_x\phi_2 \mbox{ and } \omega_1 = \omega_2 \, .
\end{align*}
Now, using the previous expression for $(\mu_\W \circ j_o)(j^3_x\phi,\omega)$, we have
\begin{align*}
(\mu_\W \circ j_o)(j^3_x\phi_1,\omega_1) = (\mu_\W \circ j_o)(j^3_x\phi_2,\omega_2) &\Longleftrightarrow
(j^3_x\phi_1,[\omega_1]) = (j^3_x\phi_2,[\omega_2]) \\ &\Longleftrightarrow
j^3_x\phi_1 = j^3_x\phi_2 \mbox{ and } [\omega_1] = [\omega_2] \, ,
\end{align*}
From where we deduce $j^3_x\phi_1 = j^3_x\phi_2 \equiv j^3_x\phi$. Now, to prove $\omega_1 = \omega_2$,
observe that by definition of $\W_o$, we have
$$
L(\pi^{3}_{2}(j^3_x\phi)) = L(\pi^{3}_{2}(j^3_x\phi))
= \hat{C}(j^3_x\phi,\omega_1) = \hat{C}(j^3_x\phi,\omega_2) \, .
$$
Locally, from the third equality we obtain
$$
p(\omega_1) + p_\alpha^i(\omega_1)u_{i}^\alpha(j^3_x\phi) + p_\alpha^I(\omega_1)u_I^\alpha(j^3_x\phi)
= p(\omega_2) + p_\alpha^i(\omega_2)u_{i}^\alpha(j^3_x\phi) + p_\alpha^I(\omega_2)u_I^\alpha(j^3_x\phi) \, ,
$$
but $[\omega_1] = [\omega_2]$ implies
\begin{gather*}
p^i_\alpha(\omega_1) = p_\alpha^i([\omega_1]) = p_\alpha^i([\omega_2]) = p^i_\alpha(\omega_2) \, , \\
p^I_\alpha(\omega_1) = p_\alpha^I([\omega_1]) = p_\alpha^I([\omega_2]) = p^I_\alpha(\omega_2) \, .
\end{gather*}
Then $p(\omega_1) = p(\omega_2)$, and hence $\omega_1 = \omega_2$. Now, let us prove that
$\mu_\W \circ j_o$ is surjective. In fact, given $(j^3_x\phi,[\omega]) \in \W_r$, we
wish to find $(j^3_x\phi,\zeta) \in j_o(\W_o)$ such that $[\zeta] = [\omega]$. It suffices to take
$[\zeta]$ such that, in local coordinates of $\W$, it satisfies
\begin{gather*}
p_\alpha^i(\zeta) = p_\alpha^i([\zeta]) \quad , \quad
p_\alpha^I(\zeta) = p_\alpha^I([\zeta]) \\
p(\zeta) = L(\pi^{3}_{2}(j^3_x\phi)) - p_\alpha^i([\omega])u_{i}^\alpha(j^3_x\phi) - p_\alpha^I([\omega])u_I^\alpha(j^3_x\phi) \, .
\end{gather*}
This $\zeta$ exists as a consequence of the definition of $\W_o$. Now, since $\mu_\W \circ j_o$
is a one-to-one submersion, then, by equality on the dimensions of $\W_o$ and $\W_r$, it is a
one-to-one local diffeomorphism, and thus a global diffeomorphism.

Finally, in order to prove that $\W_o$ is $\mu_\W$-transversal, it is necessary to check if
$\Lie(X)(\xi) \equiv X(\xi) \neq 0$, for every $X \in \ker{\mu_\W}_*$ and every constraint function
$\xi$ defining $\W_o$. Since $\W_o$ is defined by the constraint $\hat{C} - \hat{L} = 0$ and
$\ker{\mu_\W}_* = \langle \partial/\partial p \rangle$, computing we have
$$
\derpar{}{p}(\hat{C} - \hat{L}) = \derpar{}{p}(p + p_\alpha^iu_{i}^\alpha + p_\alpha^Iu_I^\alpha - \hat{L}) = 1 \neq 0 \, ,
$$
then $\W_o$ is $\mu_\W$-transverse.
\end{proof}

As a consequence of Proposition \ref{prop:WoDiffWr}, the submanifold $\W_o$ induces a section
$\hat{h} \in \Gamma(\mu_\W)$ defined as
$\hat{h} = j_o \circ \Phi^{-1} \colon \W_r \to \W$,
which is called a \emph{Hamiltonian section of $\mu_\W$} or a
\emph{Hamiltonian $\mu_\W$-section}.
This section is specified by giving the local \emph{Hamiltonian function}
\begin{equation}\label{eqn:HamiltonianFunctionLocal}
\hat{H} = p_\alpha^iu_i^\alpha + p_\alpha^{I}u_{I}^\alpha - \hat{L} \, ,
\end{equation}
that is, $\hat{h}(x^i,u^\alpha,u^\alpha_i,u^\alpha_{I},u^\alpha_{J},p_\alpha^i,p_\alpha^{I}) =
(x^i,u^\alpha,u^\alpha_i,u^\alpha_{I},u^\alpha_{J},-\hat{H},p_\alpha^i,p_\alpha^{I})$.
Observe that $\hat{h}$ satisfies $\rho_1^r = \rho_1 \circ \hat{h}$ and
$\rho_2^r = \mu \circ \rho_2 \circ \hat{h}$. Hence, we have the following commutative diagram:
$$
\xymatrix{
\ & \ & \W \ar@/_1.3pc/[llddd]_{\rho_1} \ar[d]_-{\mu_\W} \ar@/^1.3pc/[rrdd]^{\rho_2} & \ & \ \\
\ & \ & \W_r \ar@/_1pc/[u]_{\hat{h}} \ar[lldd]_{\rho_1^r} \ar[rrdd]_{\rho_2^r} \ar[ddd]^<(0.4){\rho_{J^{1}\pi}^r} \ar@/_2.5pc/[dddd]_-{\rho_M^r}|(.675){\hole} & \ & \ \\
\ & \ & \ & \ & J^{2}\pi^\dagger \ar[d]^-{\mu} \ar[lldd]_{\pi_{J^{1}\pi}^\dagger}|(.25){\hole} \\
J^{3}\pi \ar[rrd]_{\pi^{3}_{1}} & \ & \ & \ & J^{2}\pi^\ddagger \ar[dll]^{\pi_{J^{1}\pi}^\ddagger} \\
\ & \ & J^{1}\pi \ar[d]^{\bar{\pi}^{1}} & \ & \ \\
\ & \ & M & \ & \
}
$$

Next, we define the forms
$$
\Theta_r = \hat{h}^*\Theta \in \Omega^{m}(\W_r) \quad ; \quad
\Omega_r = \hat{h}^*\Omega \in \Omega^{m+1}(\W_r) \, ,
$$
with local expressions
\begin{equation}\label{eqn:HamiltonCartanFormsLocal}
\begin{array}{l}
\displaystyle
\Theta_r = - \hat{H} d^mx + p_\alpha^id u^\alpha \wedge d^{m-1}x_i + \frac{1}{n(ij)} \, p_\alpha^{1_i+1_j}d u_i^\alpha \wedge d^{m-1}x_j \, , \\[10pt]
\displaystyle
\Omega_r = d \hat{H} \wedge d^mx - d p_\alpha^i \wedge d u^\alpha \wedge d^{m-1}x_i - \frac{1}{n(ij)} \, d p_\alpha^{1_i+1_j} \wedge d u_i^\alpha \wedge d^{m-1}x_j \, .
\end{array}
\end{equation}

Finally, we generalize the definition of holonomic sections and multivector fields to the unified setting.

\begin{definition}
A section $\psi \in \Gamma(\rho_M^r)$ is \emph{holonomic of type $s$ in $\W_r$},
$1 \leqslant s \leqslant 3$, if the section $\rho_1^r \circ \psi \in \Gamma(\bar{\pi}^3)$
is holonomic of type $s$ in $J^3\pi$.
\end{definition}

\begin{definition}
A multivector field $\X \in \vf^{m}(\W_r)$ is
\emph{holonomic of type $s$ in $\W_r$},
$1 \leqslant s \leqslant 3$, if
\begin{enumerate}
\item $\X$ is integrable.
\item $\X$ is $\rho_M^r$-transverse.
\item The integral sections $\psi \in \Gamma(\rho_M^r)$ of $\X$ are holonomic of type $s$
in $\W_r$.
\end{enumerate}
\end{definition}

\subsection{Field equations for sections}
\label{sec:UnifFieldEquationsSect}

The \emph{Lagrangian-Hamiltonian problem for sections} associated with the system $(\W_r,\Omega_r)$
consists in finding holonomic sections $\psi \in \Gamma(\rho_M^r)$ satisfying the following condition
\begin{equation}\label{eqn:UnifDynEqSect}
\psi^*i(X)\Omega_r = 0 \, , \quad \mbox{for every } X \in \vf(\W_r) \, .
\end{equation}
In the induced natural coordinates of $\W_r$, let $\psi \in \Gamma(\rho_M^r)$ be a section
locally given by $\psi(x^i) = (x^i,u^\alpha,u^\alpha_i,u^\alpha_{I},u^\alpha_{J},p_\alpha^i,p_\alpha^{I})$.
Then, bearing in mind the coordinate expression \eqref{eqn:HamiltonCartanFormsLocal} of $\Omega_r$,
we obtain the following system of partial differential equations for the component functions of
the section $\psi$
\begin{align}
& \sum_{i=1}^{m}\derpar{p_\alpha^i}{x^i} - \derpar{\hat{L}}{u^\alpha} = 0 \, , \label{eqn:UnifDynEqSectLocal} \\
& \sum_{j=1}^{m} \frac{1}{n(ij)} \, \derpar{p_\alpha^{1_i+1_j}}{x^j} + p_\alpha^i - \derpar{\hat{L}}{u_i^\alpha} = 0 \, , \label{eqn:UnifDynEqSectRelationMomenta} \\
& p_\alpha^{I} - \derpar{\hat{L}}{u_{I}^\alpha} = 0 \, , \label{eqn:HOMomentaSect} \\
& u_i^\alpha - \derpar{u^\alpha}{x^i} = 0 \quad ; \quad u_{I}^\alpha - \sum_{1_i+1_j=I} \frac{1}{n(ij)} \, \derpar{u_i^\alpha}{x^j} = 0 \, . \label{eqn:UnifDynEqSectHolonomy}
\end{align}
Observe that equations \eqref{eqn:UnifDynEqSectHolonomy} give partially the holonomy condition for
the section $\psi$, but since we required this condition from the beginning, these equations are
automatically satisfied.

Notice also that equations \eqref{eqn:HOMomentaSect} do not involve any partial derivative of the
component functions of $\psi$: they are pointwise algebraic conditions that must be fullfilled for
every section $\psi \in \Gamma(\rho_M^r)$ solution to the field equation \eqref{eqn:UnifDynEqSect}.
These equations arise from the $\rho_2^r$-vertical part of the vector fields
$X \in \vf(\W_r)$, as shown in the following result.

\begin{lemma}\label{lemma:HOMomentaIntrinsic}
If $X \in \vf^{V(\rho_2^r)}(\W_r)$, then
$i(X)\Omega_r \in \Omega^{m}(\W_r)$ is $\rho_M^r$-semibasic.
\end{lemma}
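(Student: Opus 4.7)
The plan is a direct verification in local coordinates, relying on the coordinate expression \eqref{eqn:HamiltonCartanFormsLocal} of $\Omega_r$. First I would pin down a local basis of $\vf^{V(\rho_2^r)}(\W_r)$: since $\rho_2^r$ forgets exactly the coordinates $u^\alpha_I$ ($|I|=2$) and $u^\alpha_J$ ($|J|=3$), the module is locally generated over $C^\infty(\W_r)$ by $\partial/\partial u^\alpha_I$ and $\partial/\partial u^\alpha_J$; it therefore suffices to check the statement on these two families of generators.

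Then I would inspect $\Omega_r$ term by term. The two summands $dp_\alpha^i\wedge du^\alpha\wedge d^{m-1}x_i$ and $\frac{1}{n(ij)}dp_\alpha^{1_i+1_j}\wedge du_i^\alpha\wedge d^{m-1}x_j$ contain no differential of the form $du^\alpha_I$ or $du^\alpha_J$, so their contraction with either generator vanishes. Hence only the piece $d\hat{H}\wedge d^mx$ can contribute. Writing $\hat{H}=p_\beta^j u_j^\beta+p_\beta^K u_K^\beta-\hat{L}$ with $|K|=2$, and using that $\hat{L}=(\pi^3_2\circ\rho_1)^*L$ depends only on $x^i,u^\alpha,u^\alpha_i,u^\alpha_K$, I obtain $\partial\hat{H}/\partial u^\alpha_J=0$, so that $i(\partial/\partial u^\alpha_J)\Omega_r=0$, while $i(\partial/\partial u^\alpha_I)\Omega_r=(p_\alpha^I-\partial\hat{L}/\partial u^\alpha_I)\,d^mx$. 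In both cases the result is a $C^\infty(\W_r)$-multiple of $d^mx=(\rho_M^r)^*\eta$, hence $\rho_M^r$-semibasic; by $C^\infty(\W_r)$-linearity the conclusion extends to every $X\in\vf^{V(\rho_2^r)}(\W_r)$.

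There is no serious obstacle: the lemma is essentially a bookkeeping check. The two substantive inputs are that the ``pure momenta'' part of $\Omega_r$ carries no legs along the $u^\alpha_I,u^\alpha_J$ directions, and that $\hat{L}$ cannot depend on the third-order jet coordinates $u^\alpha_J$ because it factors through $J^2\pi$. As a byproduct, the computation identifies precisely the algebraic relations \eqref{eqn:HOMomentaSect} as the content of the field equation \eqref{eqn:UnifDynEqSect} along the $\rho_2^r$-vertical directions, consistent with the remark preceding the lemma.
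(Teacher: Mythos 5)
Your proof is correct and follows essentially the same route as the paper's: both identify the local generators $\partial/\partial u^\alpha_I$, $\partial/\partial u^\alpha_J$ of $\vf^{V(\rho_2^r)}(\W_r)$ and contract them with the coordinate expression \eqref{eqn:HamiltonCartanFormsLocal} of $\Omega_r$, obtaining $\bigl(p_\alpha^I-\partial\hat{L}/\partial u_I^\alpha\bigr)d^mx$ for $|I|=2$ and $0$ for $|J|=3$. Your additional remarks (why only the $d\hat{H}\wedge d^mx$ term contributes, and that $\hat{L}$ is independent of $u^\alpha_J$) merely make explicit the bookkeeping the paper leaves implicit.
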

\begin{proof}
This result is easy to prove in coordinates. In the natural coordinates of $\W_r$,
the $C^\infty(\W_r)$-module of $\rho_2^r$-vertical vector fields is given by
$$
\vf^{V(\rho_2^r)}(\W_r) = \left\langle \derpar{}{u_I^\alpha} \right\rangle \, ,
$$
with $2 \leqslant |I| \leqslant 3$. Then, bearing in mind the local expression
\eqref{eqn:HamiltonCartanFormsLocal} of $\Omega_r$, we have
$$
i\left( \derpar{}{u_I^\alpha} \right) \Omega_r =
\begin{cases}
\left( p_\alpha^{I} - \derpar{\hat{L}}{u_I^\alpha} \right) d^{m}x \, , & \mbox{for } |I| = 2 \, , \\[10pt]
0 = 0 \cdot d^{m}x \, , & \mbox{for } |I| > 2 \, .
\end{cases}
$$
Thus, in both cases we obtain a $\rho_M^r$-semibasic $m$-form.
\end{proof}

As a consequence of this result, we can define the submanifold
\begin{equation}\label{eqn:CompSubmanifoldSect}
\W_c = \left\{ w \in \W_r \colon (i(X)\Omega_r)(w) = 0\ \mbox{ for every } X \in \vf^{V(\rho_2^r)}(\W_r) \right\}
\stackrel{j_c}{\hookrightarrow} \W_r \, ,
\end{equation}
where every section $\psi \in \Gamma(\rho_M^r)$ solution to the equation \eqref{eqn:UnifDynEqSect}
must take values. This submanifold is called the \emph{first constraint submanifold}
of the premultisymplectic system $(\W_r,\Omega_r)$, and has codimension $nm(m+1)/2$.

As we have seen in the proof of Lemma \ref{lemma:HOMomentaIntrinsic}, the submanifold
$\W_c \hookrightarrow \W_r$ is locally defined by the constraints
\eqref{eqn:HOMomentaSect}. In combination with equations \eqref{eqn:UnifDynEqSectRelationMomenta},
we have the following result.

\begin{proposition}\label{prop:GraphLegMapSect}
A solution $\psi \in \Gamma(\rho_M^r)$ to equation \eqref{eqn:UnifDynEqSect} takes values in a
$nm$-codimensional submanifold $\W_\Lag \hookrightarrow \W_c$ which is identified with the graph of
a bundle map $\Leg \colon J^3\pi \to J^{2}\pi^\ddagger$ over $J^1\pi$ defined locally by
\begin{equation}\label{eqn:RestrictedLegendreMapLocal}
\Leg^*p^i_\alpha = \derpar{\hat{L}}{u_i^\alpha} - \sum_{j=1}^{m}\frac{1}{n(ij)} \frac{d}{dx^j}\left( \derpar{\hat{L}}{u_{1_i+1_j}^\alpha} \right) \quad ; \quad
\Leg^*p^I_\alpha = \derpar{\hat{L}}{u_I^\alpha} \, .
\end{equation}
\end{proposition}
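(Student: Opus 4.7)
The plan is to assemble the map $\Leg$ piece by piece from two disjoint groups of equations already extracted from \eqref{eqn:UnifDynEqSect}: the pointwise (algebraic) constraints produced by Lemma \ref{lemma:HOMomentaIntrinsic}, and the ``momentum'' relations \eqref{eqn:UnifDynEqSectRelationMomenta}. By Lemma \ref{lemma:HOMomentaIntrinsic}, any solution $\psi$ of \eqref{eqn:UnifDynEqSect} must take values in the submanifold $\W_c$ cut out locally by \eqref{eqn:HOMomentaSect}, i.e.\ by $p_\alpha^{I}=\partial\hat{L}/\partial u_I^\alpha$ for $|I|=2$. These are exactly the second block of equations in \eqref{eqn:RestrictedLegendreMapLocal} and provide $nm(m+1)/2$ of the constraints defining $\W_\Lag$ inside $\W_r$ (already enforced on $\W_c$).

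Next I would exploit \eqref{eqn:UnifDynEqSectRelationMomenta}, which come from contracting $\Omega_r$ with the $\rho^r_{J^1\pi}$-vertical fields $\partial/\partial u_i^\alpha$ (noting that this piece of the field equation is not pointwise: it contains $\partial p_\alpha^{1_i+1_j}/\partial x^j$). Since $\psi$ lies in $\W_c$, substitute $p_\alpha^{1_i+1_j}\circ\psi = (\partial\hat{L}/\partial u_{1_i+1_j}^\alpha)\circ\psi$ into \eqref{eqn:UnifDynEqSectRelationMomenta} to obtain
\beann
p_\alpha^i\circ\psi \;=\; \derpar{\hat{L}}{u_i^\alpha}\circ\psi \;-\; \sum_{j=1}^{m}\frac{1}{n(ij)}\,\derpar{}{x^j}\!\left(\derpar{\hat{L}}{u_{1_i+1_j}^\alpha}\circ\psi\right) .
\eeann
The crucial point is that since $\psi$ is holonomic and $\partial\hat{L}/\partial u_{1_i+1_j}^\alpha$ depends only on coordinates of order $\leqslant 2$, the prolongation relations \eqref{eqn:HolonomyConditionSect1} allow one to replace $\partial/\partial x^j$ acting on $f\circ\psi$ by the coordinate total derivative $d/dx^j$ on $J^3\pi$ followed by evaluation at $\psi$. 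This converts the displayed equation into the first block of \eqref{eqn:RestrictedLegendreMapLocal}, adding $nm$ further independent constraints.

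Finally, I would observe that the combined system $p_\alpha^I = \partial\hat{L}/\partial u_I^\alpha$ and $p_\alpha^i = \partial\hat{L}/\partial u_i^\alpha - \sum_j (1/n(ij))\, (d/dx^j)(\partial\hat{L}/\partial u_{1_i+1_j}^\alpha)$ has right-hand sides that are functions on $J^3\pi$ alone and that fix every multimomentum coordinate of $J^2\pi^\ddagger$ while leaving the common $J^1\pi$-coordinates $(x^i,u^\alpha,u_i^\alpha)$ untouched. Hence the $nm$-codimensional submanifold $\W_\Lag\hookrightarrow\W_c$ they define inside the fibred product $\W_r=J^3\pi\times_{J^1\pi}J^2\pi^\ddagger$ is precisely the graph of a bundle morphism $\Leg\colon J^3\pi\to J^2\pi^\ddagger$ over $J^1\pi$, with the stated local expression. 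The principal obstacle I anticipate is the careful justification of the passage from the ordinary partial derivative $\partial/\partial x^j$ acting on the composition $(\partial\hat{L}/\partial u^\alpha_{1_i+1_j})\circ\psi$ to the coordinate total derivative $d/dx^j$ on $J^3\pi$; once holonomy of $\psi$ is invoked this is routine, but it is the step that both uses the holonomy hypothesis essentially and explains why the right-hand side in \eqref{eqn:RestrictedLegendreMapLocal} is a well-defined function on $J^3\pi$ (so that $\Leg$ really exists as a bundle map over $J^1\pi$, independently of $\psi$).
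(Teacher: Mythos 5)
Your proposal is correct and follows essentially the same route as the paper: impose the pointwise constraints \eqref{eqn:HOMomentaSect} defining $\W_c$ (via Lemma \ref{lemma:HOMomentaIntrinsic}), substitute $p_\alpha^I = \partial\hat{L}/\partial u_I^\alpha$ into \eqref{eqn:UnifDynEqSectRelationMomenta}, and read off the resulting $nm$ constraints as the graph of $\Leg$ over $J^1\pi$. Your explicit justification of the passage from $\partial/\partial x^j$ acting on the composition to the coordinate total derivative $d/dx^j$ via holonomy is a point the paper leaves implicit, but it is the same argument.
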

\begin{proof}
Since $\W_c$ is defined locally by the constraints \eqref{eqn:HOMomentaSect}, it suffices
to prove that these contraints, in combination with the remaining local equations for the section
$\psi \in \Gamma(\rho_M^r)$ to be a solution to the equation \eqref{eqn:UnifDynEqSect}, give rise
to the local functions defining the bundle map given above, and thus to the submanifold
$\W_\Lag$.

Replacing $p^I_\alpha$ by $\partial \hat{L} / \partial u_I^\alpha$ in equations
\eqref{eqn:UnifDynEqSectRelationMomenta}, we obtain
$$
p_\alpha^i - \derpar{\hat{L}}{u_i^\alpha}
+ \sum_{j=1}^{m} \frac{1}{n(ij)} \, \frac{d}{dx^j} \, \derpar{\hat{L}}{u_{1_i+1_j}^\alpha} = 0 \, .
$$
Therefore, these constraints define a submanifold $\W_\Lag \hookrightarrow \W_c$, which can be
identified with the graph of a map $\Leg \colon J^3\pi \to J^2\pi^\ddagger$ given by
\begin{gather*}
\Leg^*x^i = x^i \quad ; \quad \Leg^*u^\alpha = u^\alpha \quad ; \quad \Leg^*u_i^\alpha \, , \\
\Leg^*p^i_\alpha = \derpar{\hat{L}}{u_i^\alpha} - \sum_{j=1}^{m}\frac{1}{n(ij)} \frac{d}{dx^j}\left( \derpar{\hat{L}}{u_{1_i+1_j}^\alpha} \right) \quad ; \quad
\Leg^*p^I_\alpha = \derpar{\hat{L}}{u_I^\alpha} \, . \qedhere
\end{gather*}
\end{proof}

The bundle map $\Leg \colon J^3\pi \to J^2\pi^\ddagger$ is called the \emph{restricted Legendre map}
associated with the Lagrangian density $\Lag$. Observe that
$$
\dim\W_\Lag = \dim J^3\pi = m + n + mn + \frac{nm(m+1)}{2} + \frac{nm(m+1)(m+2)}{6} \, .
$$

\noindent\textbf{Remark:}
The terminology ``Legendre map'' is justified, since $\Leg$ is a fiber bundle morphism from the
Lagrangian phase space to the Hamiltonian phase space that identifies the multimomenta coordinates
with functions on partial derivatives of the Lagrangian function, and thus generalizes the Legendre
map in first-order field theories (see \cite{art:Echeverria_DeLeon_Munoz_Roman07,art:Echeverria_Munoz_Roman00_JMP}), and first-order and
higher-order mechanics (see \cite{book:Abraham_Marsden78} for first-order mechanics and
\cite{book:DeLeon_Rodrigues85} for the higher-order setting).

According to \cite{art:Saunders_Crampin90}, we can give the following definition.

\begin{definition}
A second-order Lagrangian density $\Lag \in \Omega^{m}(J^2\pi)$ is \emph{regular} if for
every point $j^3_x\phi \in J^{3}\pi$ we have
$$
\operatorname{rank}(\Leg(j^3\phi)) = \dim J^{2}\pi + \dim J^{1}\pi - \dim E = \dim J^2\pi^\ddagger \, .
$$
Otherwise, the Lagrangian density is said to be \emph{singular}.
\label{reglag}
\end{definition}

Hence, a second-order Lagrangian density $\Lag \in \Omega^{m}(J^2\pi)$ is regular if, and only if, the
restricted Legendre map $\Leg \colon J^3\pi \to J^{2}\pi^\ddagger$ associated to $\Lag$ is a
submersion onto $J^{2}\pi^\ddagger$. This implies that there exist local sections of $\Leg$, that is,
maps $\sigma \colon U \to J^3\pi$, with $U \subset J^{2}\pi^\ddagger$ an open set, such that
$\Leg \circ \sigma = \textnormal{Id}_U$. If $\Leg$ admits a global section
$\Upsilon \colon J^2\pi^\ddagger \to J^3\pi$, then the Lagrangian density is said to be
\emph{hyperregular}.

Observe that
\begin{align*}
\dim J^3\pi &= m + n + nm + \frac{nm(m+1)}{2} + \frac{nm(m+1)(m+2)}{6} \\
&\geqslant m + n + nm + 2nm + \frac{nm(m+1)}{2} = \dim J^2\pi^\ddagger \, ,
\end{align*}
and the equality holds if, and only if, $m = 1$. Therefore, unlike in higher-order mechanics or
first-order field theories, the Legendre map cannot be a local diffeomorphism due to dimension
restrictions.

Computing the local expression of the tangent map to $\Leg$ in a natural chart of $J^{3}\pi$,
the regularity condition for the Lagrangian density $\Lag$ is equivalent to
$$
\det\left( \derpars{L}{u_I^\beta}{u_K^\alpha} \right)(j^3_x\phi) \neq 0
\, , \quad \mbox{for every } j^3_x\phi \in J^3\pi \, ,
$$
where $|I|=|K|=2$. That is, the Hessian of the Lagrangian function associated to $\Lag$ and $\eta$
with respect to the highest order velocities is a regular matrix at every point, which is the usual
definition for a regular Lagrangian density.

Note that since $\W_r$ is diffeomorphic to the submanifold
$\W_o \hookrightarrow \W$ (Proposition \ref{prop:WoDiffWr}), and $\W_o$
is defined locally by the constraint $p + p_\alpha^iu_i^\alpha + p_\alpha^Iu_I^\alpha - \hat{L} = 0$,
the restricted Legendre map $\Leg \colon J^3\pi \to J^2\pi^\ddagger$ can be extended in a
canonical way to a map $\widetilde{\Leg} \colon J^3\pi \to J^2\pi^\dagger$, defining
$\widetilde{\Leg}^*p$ as the pull-back of the local Hamiltonian function $-\hat{H}$. This
enables us to state the following result, which is a straightforward consequence of Proposition
\ref{prop:GraphLegMapSect}

\begin{corollary}
The submanifold $\W_\Lag \hookrightarrow \W$ is the graph of a bundle morphism
$\widetilde{\Leg} \colon J^3\pi \to J^2\pi^\dagger$ over $J^1\pi$ defined locally by
\begin{equation}\label{eqn:ExtendedLegendreMapLocal}
\begin{array}{l}
\widetilde{\Leg}^*p^i_\alpha = \derpar{\hat{L}}{u_i^\alpha} - \sum_{j=1}^{m}\frac{1}{n(ij)} \, \frac{d}{dx^j}\left( \derpar{\hat{L}}{u_{1_i+1_j}^\alpha} \right) \quad ; \quad
\widetilde{\Leg}^*p^I_\alpha = \derpar{\hat{L}}{u_I^\alpha} \, , \\[12pt]
\widetilde{\Leg}^*p = \hat{L} - u_i^\alpha\left(\derpar{\hat{L}}{u_i^\alpha} - \sum_{j=1}^{m}\frac{1}{n(ij)} \, \frac{d}{dx^j}\left( \derpar{\hat{L}}{u_{1_i+1_j}^\alpha} \right) \right) - u_{I}^\alpha\derpar{\hat{L}}{u_I^\alpha} \, ,
\end{array}
\end{equation}
and satisfying $\Leg = \mu \circ \widetilde{\Leg}$.
\end{corollary}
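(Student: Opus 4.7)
The plan is to combine Proposition \ref{prop:GraphLegMapSect} with the diffeomorphism $\Phi = \mu_\W \circ j_o \colon \W_o \to \W_r$ established in Proposition \ref{prop:WoDiffWr}. Since $\W_\Lag \hookrightarrow \W_c \hookrightarrow \W_r$ is already known to be the graph of $\Leg \colon J^3\pi \to J^2\pi^\ddagger$, the natural move is to view $\W_\Lag$ as a submanifold of $\W$ by lifting it through $\Phi^{-1} = \hat{h}\big|_{\W_\Lag}$, i.e., identifying it with $\hat{h}(\W_\Lag) \subset \W_o \subset \W$. The only extra coordinate in $\W$ beyond those of $\W_r$ is $p$, so what has to be determined is the value of $p$ on this lifted submanifold.

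Concretely, I would argue that a point of $\W_\Lag \subset \W$ satisfies two sets of local conditions: the Legendre-type equations \eqref{eqn:RestrictedLegendreMapLocal} (inherited from its image in $\W_r$ via Proposition \ref{prop:GraphLegMapSect}), together with the defining constraint of $\W_o$, namely $p + p_\alpha^i u_i^\alpha + p_\alpha^I u_I^\alpha - \hat{L} = 0$. Solving for $p$ and substituting the expressions for $p_\alpha^i$ and $p_\alpha^I$ given by $\Leg$ yields
\[
p = \hat{L} - u_i^\alpha\left(\derpar{\hat{L}}{u_i^\alpha} - \sum_{j=1}^{m}\frac{1}{n(ij)} \, \frac{d}{dx^j}\left( \derpar{\hat{L}}{u_{1_i+1_j}^\alpha} \right)\right) - u_{I}^\alpha\derpar{\hat{L}}{u_I^\alpha},
\]
which is exactly the stated formula for $\widetilde{\Leg}^*p$. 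This shows that $\W_\Lag$, as a subset of $\W$, is cut out by functional relations expressing $p, p_\alpha^i, p_\alpha^I$ in terms of coordinates on $J^3\pi$, hence is the graph of a map $\widetilde{\Leg} \colon J^3\pi \to J^2\pi^\dagger$ with the claimed local expression.

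To finish, I would verify the two remaining assertions. That $\widetilde{\Leg}$ is a bundle morphism over $J^1\pi$ is immediate from the local expression, since the $J^1\pi$-coordinates $(x^i,u^\alpha,u_i^\alpha)$ are preserved; this is essentially the same argument that gives fibration over $J^1\pi$ for $\Leg$. The identity $\Leg = \mu \circ \widetilde{\Leg}$ then follows at once because the quotient projection $\mu \colon J^2\pi^\dagger \to J^2\pi^\ddagger$ simply forgets the coordinate $p$, and the $p_\alpha^i, p_\alpha^I$ components of $\widetilde{\Leg}$ coincide by construction with those of $\Leg$.

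The only delicate point — and what I would regard as the main obstacle — is to ensure that the lift of $\W_\Lag$ from $\W_r$ to $\W$ is canonical, i.e., that the value of $p$ is unambiguously determined. This is exactly what the diffeomorphism $\Phi$ in Proposition \ref{prop:WoDiffWr} guarantees: $\W_o$ meets each $\mu_\W$-fiber in a unique point, so the constraint $\hat{C} - \hat{L} = 0$ pins down $p$ uniquely in terms of the remaining coordinates. Once this is observed, there is no arbitrariness left and the rest reduces to the substitution carried out above.
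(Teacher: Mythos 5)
Your proposal is correct and follows essentially the same route as the paper, which presents this corollary as an immediate consequence of Proposition \ref{prop:GraphLegMapSect} by using the diffeomorphism $\W_o \cong \W_r$ of Proposition \ref{prop:WoDiffWr} and the constraint $\hat{C}-\hat{L}=0$ to fix $p$ canonically, i.e.\ defining $\widetilde{\Leg}^*p$ as the pull-back of $-\hat{H}$. Your identification of the uniqueness of the lift as the one delicate point, and its resolution via $\Phi$, is exactly the observation the paper relies on.
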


The bundle map $\widetilde{\Leg} \colon J^3\pi \to J^2\pi^\dagger$ is the
\emph{extended Legendre map} associated with the Lagrangian density $\Lag$. An important result
concerning both Legendre maps is the following.

\begin{proposition}\label{prop:RankLegendreMaps}
For every $j^3_x\phi \in J^3\pi$ we have
$\operatorname{rank}(\widetilde{\Leg}(j^3_x\phi)) = \operatorname{rank}(\Leg(j^3_x\phi))$.
\end{proposition}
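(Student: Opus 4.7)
The plan is to exploit the factorization $\Leg = \mu \circ \widetilde{\Leg}$ established in the corollary preceding the statement. Since $\mu \colon J^2\pi^\dagger \to J^2\pi^\ddagger$ is a line-bundle projection with vertical kernel spanned by $\partial/\partial p$, the chain rule gives the inclusion $\ker(\d_q\widetilde{\Leg}) \subseteq \ker(\d_q\Leg)$ at every $q = j^3_x\phi$, hence $\operatorname{rank}\widetilde{\Leg}(q) \geq \operatorname{rank}\Leg(q)$, with the two ranks differing by at most one. The whole problem therefore reduces to proving the reverse inclusion of kernels: every $v \in T_q J^3\pi$ satisfying $\d_q\Leg(v) = 0$ must also satisfy $\d(\widetilde{\Leg}^*p)(v) = 0$, which is the only missing condition for $v \in \ker\d_q\widetilde{\Leg}$.

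First I would rewrite $\d_q\Leg(v) = 0$ in the natural coordinates. Since $\widetilde{\Leg}$ and $\Leg$ have identical pullbacks on every target coordinate except $p$, this amounts to
\[
\d x^i(v) = \d u^\alpha(v) = \d u_i^\alpha(v) = 0, \quad \d(\widetilde{\Leg}^*p_\alpha^i)(v) = 0 = \d(\widetilde{\Leg}^*p_\alpha^I)(v).
\]
Next I would differentiate the defining identity $\widetilde{\Leg}^*p = \hat{L} - u_i^\alpha(\widetilde{\Leg}^*p_\alpha^i) - u_I^\alpha(\widetilde{\Leg}^*p_\alpha^I)$ from \eqref{eqn:ExtendedLegendreMapLocal} via Leibniz, evaluate on $v$, and drop every summand carrying a factor $\d u_i^\alpha(v)$ or $\d(\widetilde{\Leg}^*p_\alpha^{\cdot})(v)$. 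What survives is
\[
\d(\widetilde{\Leg}^*p)(v) = \d\hat{L}(v) - (\widetilde{\Leg}^*p_\alpha^I)\,\d u_I^\alpha(v).
\]

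The last step is the cancellation. Because $\hat{L}$ is pulled back from $J^2\pi$, it depends only on $(x^i, u^\alpha, u_i^\alpha, u_I^\alpha)$ with $|I|=2$; expanding $\d\hat{L}$ and applying the three vanishing conditions $\d x^i(v) = \d u^\alpha(v) = \d u_i^\alpha(v) = 0$ collapses it to $\d\hat{L}(v) = (\partial\hat{L}/\partial u_I^\alpha)\,\d u_I^\alpha(v)$. Substituting the defining relation $\widetilde{\Leg}^*p_\alpha^I = \partial\hat{L}/\partial u_I^\alpha$ from \eqref{eqn:ExtendedLegendreMapLocal}, the two remaining terms are equal and cancel, yielding $\d(\widetilde{\Leg}^*p)(v) = 0$ and therefore $v \in \ker\d_q\widetilde{\Leg}$.

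There is no real obstacle here, as the argument is a short local calculation; the one subtlety worth noting is that $u_I^\alpha$ is \emph{not} a coordinate on $J^2\pi^\ddagger$, so the factor $\d u_I^\alpha(v)$ need not vanish, and the proof hinges entirely on the precise pointwise agreement between the $u_I^\alpha$-part of $\d\hat{L}$ and the term $(\widetilde{\Leg}^*p_\alpha^I)\,\d u_I^\alpha(v)$. This approach also has the virtue of sidestepping the partial derivatives of $\widetilde{\Leg}^*p_\alpha^i$ involving the third-order coordinates $u_J^\alpha$ (which enter through the total derivatives in \eqref{eqn:ExtendedLegendreMapLocal}), since such expressions never appear in our computation.
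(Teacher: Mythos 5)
Your proof is correct and realizes exactly the argument the paper sketches (and omits): that the only extra component of $\widetilde{\Leg}$, namely $\widetilde{\Leg}^*p$, has differential lying in the span of the differentials of the other components, which you phrase cleanly as the equality of kernels $\ker\d_q\Leg = \ker\d_q\widetilde{\Leg}$ via the cancellation $\d\hat{L}(v) = (\partial\hat{L}/\partial u_I^\alpha)\,\d u_I^\alpha(v) = (\widetilde{\Leg}^*p_\alpha^I)\,\d u_I^\alpha(v)$. The paper declares this "a long calculation in coordinates" and stops there, so your write-up actually supplies the missing details along the same lines.
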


Following the same patterns as in \cite{art:DeLeon_Marin_Marrero96} for first-order mechanical
systems, the proof of this result consists in computing in a natural chart of coordinates the local
expressions of the Jacobian matrices of both maps $\widetilde{\Leg}$ and $\Leg$. Then,
observe that the ranks of both maps depend on the rank of the Hessian matrix of the Lagrangian function
with respect to the highest order velocities, and that the additional row in the Jacobian matrix of
$\widetilde{\Leg}$ is a combination of the others. Since it is just a long calculation in
coordinates, we omit the proof of this result.

Notice that the component functions $u_J^\alpha$ with $|J| = 3$ of the section $\psi \in \Gamma(\rho_M^r)$
are not yet determined, since the coordinate expression of the field equation \eqref{eqn:UnifDynEqSect}
does not give any condition on these functions. In fact, these functions are determined by the equations
\eqref{eqn:UnifDynEqSectLocal} and \eqref{eqn:UnifDynEqSectRelationMomenta}. Indeed, since the section
$\psi \in \Gamma(\rho_M^r)$ must take values in the submanifold $\W_\Lag$ given by Proposition
\ref{prop:GraphLegMapSect}, then by replacing the local expression of the restricted Legendre map in
equations \eqref{eqn:UnifDynEqSectLocal} and \eqref{eqn:UnifDynEqSectRelationMomenta} we obtain the
Euler-Lagrange equations for field theories:
\begin{equation}\label{eqn:EulerLagrangeUnified}
\restric{\derpar{\hat{L}}{u^\alpha}}{\psi} - \restric{\frac{d}{dx^i} \, \derpar{\hat{L}}{u_i^\alpha}}{\psi}
+ \restric{\sum_{|I|=2} \frac{d^{|I|}}{dx^{I}} \, \derpar{\hat{L}}{u_I^\alpha}}{\psi} = 0 \, ,
\quad 1 \leqslant \alpha \leqslant n \, .
\end{equation}

Finally, observe that since the section $\psi \in \Gamma(\rho_M^r)$ must take values in the
submanifold $\W_\Lag \hookrightarrow \W_r$, it is natural to consider the
restriction of equation \eqref{eqn:UnifDynEqSect} to the submanifold $\W_\Lag$;
that is, to restrict the set of vector fields to those tangent to $\W_\Lag$.
Nevertheless, the new equation may not be equivalent to the former. The following result gives a
sufficient condition for these two equations to be equivalent.

\begin{proposition}\label{prop:UnifDynEqSectTangent}
If $\psi \in \Gamma(\rho_M^r)$ is holonomic in $\W_r$,
then the equation \eqref{eqn:UnifDynEqSect} is equivalent to
\begin{equation}\label{eqn:UnifDynEqSectTangent}
\psi^*i(Y)\Omega_r = 0 \, , \quad \mbox{for every } Y \in \vf(\W_r) \mbox{ tangent to } \W_\Lag \, .
\end{equation}
\end{proposition}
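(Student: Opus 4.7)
The forward implication \eqref{eqn:UnifDynEqSect} $\Rightarrow$ \eqref{eqn:UnifDynEqSectTangent} is immediate, since every vector field $Y$ tangent to $\W_\Lag$ is in particular a vector field on $\W_r$. The content of the Proposition is the converse, and my plan for that is a local transversality-plus-splitting argument: write each $X \in \vf(\W_r)$ as the sum of a field tangent to $\W_\Lag$ (for which the equation holds by hypothesis) and a field along a convenient transverse complement (for which $\psi^*i(\cdot)\Omega_r$ will vanish purely by holonomy of $\psi$).

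Concretely, I will exploit Proposition \ref{prop:GraphLegMapSect}: the submanifold $\W_\Lag \hookrightarrow \W_r$ is locally cut out by the constraints \eqref{eqn:RestrictedLegendreMapLocal}, whose differentials pair as the identity with the coordinate vectors $\partial/\partial p_\alpha^i$ and $\partial/\partial p_\alpha^I$ (with $|I|=2$). These coordinate vector fields therefore span a local complement to $T\W_\Lag$ in $T\W_r$, so any $X \in \vf(\W_r)$ decomposes locally as $X = Y + f_\alpha^i\,\partial/\partial p_\alpha^i + f_\alpha^I\,\partial/\partial p_\alpha^I$ with $Y$ tangent to $\W_\Lag$. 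By \eqref{eqn:UnifDynEqSectTangent}, $\psi^* i(Y)\Omega_r = 0$, and it remains to check that $\psi^* i(\partial/\partial p_\alpha^i)\Omega_r = 0$ and $\psi^* i(\partial/\partial p_\alpha^I)\Omega_r = 0$ whenever $\psi$ is holonomic.

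I plan to verify these two identities by direct computation. Using the coordinate expression \eqref{eqn:HamiltonCartanFormsLocal} together with $\partial\hat{H}/\partial p_\alpha^i = u_i^\alpha$ and $\partial\hat{H}/\partial p_\alpha^I = u_I^\alpha$ (which follow from \eqref{eqn:HamiltonianFunctionLocal}), these contractions reduce to
$$
u_i^\alpha\,d^mx - du^\alpha\wedge d^{m-1}x_i \qquad \textrm{and} \qquad u_I^\alpha\,d^mx - \sum_{1_i+1_j=I}\frac{1}{n(ij)}\,du_i^\alpha\wedge d^{m-1}x_j \, ,
$$
and pulling these back along $\psi$ produces exactly the level-one and level-two holonomy relations \eqref{eqn:HolonomyConditionSect1}, which are automatic from $\psi$ being holonomic in $\W_r$. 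Combining everything, $\psi^*i(X)\Omega_r = 0$ for arbitrary $X$, which is \eqref{eqn:UnifDynEqSect}. The only genuinely delicate step is the symmetry bookkeeping in the second contraction: the sum $\sum_{i,j}\frac{1}{n(ij)}dp_\alpha^{1_i+1_j}\wedge du_i^\alpha\wedge d^{m-1}x_j$ in $\Omega_r$ must be grouped along multi-indices $I = 1_i+1_j$, splitting the cases $i=j$ and $i\neq j$; this is precisely what the coefficient $n(ij)$ is designed to absorb, so the pullback reproduces the symmetrized second-derivative holonomy condition without spurious factors.
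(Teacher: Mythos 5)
Your proposal is correct, and it takes a genuinely different (and cleaner) route than the paper. The paper's proof is a two-sided coordinate computation: it first imposes the tangency conditions $\Lie(X)(\xi^i_\alpha)=\Lie(X)(\xi^I_\alpha)=0$ to solve for the momentum components $G_\alpha^i,G_\alpha^I$ of a tangent field as functions $\widetilde{G}_\alpha^i,\widetilde{G}_\alpha^I$ of the remaining components, then writes out $\psi^*i(X)\Omega_r$ in full for a generic $X$ and for a tangent $Y$, and observes that the two expressions differ only in the coefficients ($G$ versus $\widetilde{G}$) multiplying the holonomy expressions $u_i^\alpha-\partial u^\alpha/\partial x^i$ and $u_I^\alpha-\sum_{1_i+1_j=I}\tfrac{1}{n(ij)}\,\partial u_i^\alpha/\partial x^j$, which vanish for holonomic $\psi$. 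You instead split a generic $X$ as $Y+f_\alpha^i\,\partial/\partial p_\alpha^i+f_\alpha^I\,\partial/\partial p_\alpha^I$ with $Y$ tangent to $\W_\Lag$; this is legitimate because the constraints $\xi^i_\alpha,\xi^I_\alpha$ of Proposition \ref{prop:GraphLegMapSect} depend on the momenta only through the leading terms $p_\alpha^i,p_\alpha^I$ (the total-derivative corrections involve only the jet coordinates), so taking $f_\alpha^i=X(\xi^i_\alpha)$ and $f_\alpha^I=X(\xi^I_\alpha)$ makes $Y$ annihilate the constraints identically. Then only the contractions $i(\partial/\partial p_\alpha^i)\Omega_r$ and $i(\partial/\partial p_\alpha^I)\Omega_r$ need to be computed, and, as you say, using $\partial\hat{H}/\partial p_\alpha^i=u_i^\alpha$ and $\partial\hat{H}/\partial p_\alpha^I=u_I^\alpha$ their pullbacks by $\psi$ are exactly the holonomy $m$-forms of \eqref{eqn:UnifDynEqSectHolonomy}; the $n(ij)$ bookkeeping closes because $\sum_{1_i+1_j=I}1/n(ij)=1$ for every $|I|=2$. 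Both arguments rest on the same mechanism (the directions transverse to $\W_\Lag$ contract with $\Omega_r$ into precisely the holonomy conditions), but your version avoids computing the tangency conditions and the full pullback $\psi^*i(X)\Omega_r$, which makes the logical structure of the equivalence more transparent; the paper's version has the side benefit of producing the explicit $\widetilde{G}$'s, which foreshadow the tangency analysis carried out later for multivector fields.
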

\begin{proof}
We prove this result in coordinates. First of all, let us compute the coordinate expression of a
vector field $X \in \vf(\W_r)$ tangent to $\W_\Lag$. Let $X$ be a generic vector field locally given by
$$
X = f^i \derpar{}{x^i} + F^\alpha\derpar{}{u^\alpha} + F^\alpha_i\derpar{}{u_i^\alpha} + F_I^\alpha\derpar{}{u_I^\alpha}
+ F_J^\alpha\derpar{}{u_J^\alpha} + G_\alpha^i\derpar{}{p_\alpha^i} + G_\alpha^I\derpar{}{p_\alpha^I} \, .
$$
Then, since $\W_\Lag$ is the submanifold of $\W_r$ defined locally by the $nm + nm(m+1)/2$
constraint functions $\xi^i_\alpha$, $\xi^I_\alpha$ with coordinate expression
$$
\xi^i_\alpha = p_\alpha^i - \derpar{\hat{L}}{u_i^\alpha}
+ \sum_{j=1}^{m} \frac{1}{n(ij)} \, \frac{d}{dx^j} \, \derpar{\hat{L}}{u_{1_i+1_j}^\alpha}
\quad ; \quad \xi^I_\alpha = p_\alpha^I - \derpar{\hat{L}}{u_I^\alpha} \, ,
$$
then the tangency condition of $X$ along $\W_\Lag$, which is
$\Lie(X)(\xi^i_\alpha) = \Lie(X)(\xi^I_\alpha) = 0$ (on $\W_\Lag$), gives the following relation
on the component functions of $X$
\begin{align*}
G_\alpha^i &=
f^k\left( \derpars{\hat{L}}{x^k}{u_i^\alpha} - \frac{1}{n(ij)} \, \frac{d}{dx^j} \, \derpar{p_\alpha^{1_i+1_j}}{x^k} \right)
+ F^\beta\left( \derpars{\hat{L}}{u^\beta}{u_i^\alpha} - \frac{1}{n(ij)} \, \frac{d}{dx^j} \, \derpar{p_\alpha^{1_i+1_j}}{u^\beta} \right) \\
&\ {} + F_k^\beta\left( \derpars{\hat{L}}{u_k^\beta}{u_i^\alpha} - \frac{1}{n(ij)} \, \frac{d}{dx^j} \, \derpar{p_\alpha^{1_i+1_j}}{u_k^\beta} \right)
+ F^\beta_I\left( \derpars{\hat{L}}{u^\beta_I}{u_i^\alpha} - \frac{1}{n(ij)} \, \frac{d}{dx^j} \, \derpar{p_\alpha^{1_i+1_j}}{u^\beta_I} \right) \\
&\ {} - \frac{1}{n(ij)} \left( F_j^\beta \derpar{p_\alpha^{1_i+1_j}}{u^\beta}
+ F_{1_k+1_j}^\beta\derpar{p_\alpha^{1_i+1_j}}{u_k^\beta}
+ F_{I+1_j}^\beta\derpar{p_\alpha^{1_i+1_j}}{u_I^\beta} \right) \, , \\
G_\alpha^I &= f^i\derpars{\hat{L}}{x^i}{u_I^\alpha} + F^\beta\derpars{\hat{L}}{u^\beta}{u_I^\alpha}
+ F_i^\beta\derpars{\hat{L}}{u_i^\beta}{u_I^\alpha} + F_J^\beta\derpars{\hat{L}}{u_J^\beta}{u_I^\alpha} \, .
\end{align*}
Hence, the tangency condition enables us to write the component functions $G_\alpha^i$, $G_\alpha^I$
as functions $\widetilde{G}_\alpha^i$, $\widetilde{G}_\alpha^I$ depending on the rest of the components
$f^i,F^\alpha,F^\alpha_i,F^\alpha_I,F^\alpha_J$.

Now, if $\psi(x^i) = (x^i,u^\alpha,u^\alpha_i,u^\alpha_I,u^\alpha_J,p_\alpha^i,p_\alpha^I)$,
then the equation \eqref{eqn:UnifDynEqSect} gives in coordinates
\begin{align*}
\psi^*i(X)\Omega_r
&= \left[ f^k ( \cdots )
+ F^\alpha\left( \derpar{p_\alpha^i}{x^i} - \derpar{\hat{L}}{u^\alpha} \right)
+ F_i^\alpha\left( \frac{1}{n(ij)} \,\derpar{p_\alpha^{1_i+1_j}}{x^j} + p_\alpha^i - \derpar{\hat{L}}{u_i^\alpha} \right)\right. \\
&\quad {} + F_{I}^\alpha \left( p_\alpha^{I} - \derpar{\hat{L}}{u_{I}^\alpha} \right)
+ G_\alpha^i\left( -\derpar{u^\alpha}{x^i} + u_i^\alpha \right) \\
&\quad {} + \left. G_\alpha^{I} \left( u_{I}^\alpha - \sum_{1_i+1_j=I} \frac{1}{n(ij)} \, \derpar{u_i^\alpha}{x^j} \right) \right] d^mx \, .
\end{align*}
where the terms $(\cdots)$ contain a long expression with several partial derivatives of the component
functions and the Lagrangian function, which is not relevant in this proof. On the other hand, if we
take a vector field $Y$ tangent to $\W_\Lag$, then we must replace the component functions $G_\alpha^i$
and $G_\alpha^I$ by $\widetilde{G}_\alpha^i$ and $\widetilde{G}_\alpha^I$ in the previous equation, thus obtaining
\begin{align*}
\psi^*i(Y)\Omega_r
&= \left[ f^k ( \cdots )
+ F^\alpha\left( \derpar{p_\alpha^i}{x^i} - \derpar{\hat{L}}{u^\alpha} \right)
+ F_i^\alpha\left( \frac{1}{n(ij)} \,\derpar{p_\alpha^{1_i+1_j}}{x^j} + p_\alpha^i - \derpar{\hat{L}}{u_i^\alpha} \right)\right. \\
&\quad {} + F_{I}^\alpha \left( p_\alpha^{I} - \derpar{\hat{L}}{u_{I}^\alpha} \right)
+ \widetilde{G}_\alpha^i\left( -\derpar{u^\alpha}{x^i} + u_i^\alpha \right) \\
&\quad {} + \left. \widetilde{G}_\alpha^{I} \left( u_{I}^\alpha - \sum_{1_i+1_j=I} \frac{1}{n(ij)} \, \derpar{u_i^\alpha}{x^j} \right) \right] d^mx \, .
\end{align*}
Finally, if $\psi$ is holonomic, then equations \eqref{eqn:UnifDynEqSectHolonomy} are satisfied,
and the last two terms of both $i(X)\Omega_r$ and $i(Y)\Omega_r$ vanish, thus obtaining
\begin{align*}
\psi^*i(X)\Omega_r
&= \left[ f^k ( \cdots )
+ F^\alpha\left( \derpar{p_\alpha^i}{x^i} - \derpar{\hat{L}}{u^\alpha} \right)
+ F_i^\alpha\left( \frac{1}{n(ij)} \,\derpar{p_\alpha^{1_i+1_j}}{x^j} + p_\alpha^i - \derpar{\hat{L}}{u_i^\alpha} \right)\right. \\
&\quad {} + \left. F_{I}^\alpha \left( p_\alpha^{I} - \derpar{\hat{L}}{u_{I}^\alpha} \right) \right] d^mx = \psi^*i(Y)\Omega_r \, .
\end{align*}
Hence, we have $i(X)\Omega_r = 0$ if, and only if, $i(Y)\Omega_r = 0$.
\end{proof}

\noindent\textbf{Remark:}
Observe that, contrary to first-order field theories \cite{art:Echeverria_Lopez_Marin_Munoz_Roman04},
the holonomy condition is not recovered from the coordinate expression of the field equations.
Moreover, in this case, unlike in higher-order time-depending mechanical systems
\cite{art:Prieto_Roman12_1}, not even a condition for the holonomy of type $2$ can be obtained.
This is due to the constraints $p_\alpha^{ij} - p_\alpha^{ji} = 0$ introduced in Section
\ref{sec:SymmetricMultimomenta} to define both the extended and restricted $2$-symmetric
multimomentum bundles. Hence, the full holonomy condition is necessarily required in this formalism.

It is important to point out that, although the holonomy condition cannot be obtained from the
field equation, a holonomic section $\psi \in \Gamma(\rho_M^r)$ satisfies equations
\eqref{eqn:UnifDynEqSectHolonomy}. Hence, a holonomic section can be a solution to the equation
\eqref{eqn:UnifDynEqSect}.

\noindent\textbf{Remark:}
The regularity of the Lagrangian density seems to play a secondary role in this formulation, because
the holonomy of the section solution to the equation \eqref{eqn:UnifDynEqSect} is necessarily required,
regardless of the regularity of the Lagrangian density given. Nevertheless, recall that the
Euler-Lagrange equations \eqref{eqn:EulerLagrangeUnified} may not be compatible if the Lagrangian
density is singular, and thus the regularity of $\Lag$ still determines if the section
$\psi \in \Gamma(\rho_M^r)$ solution to the equation \eqref{eqn:UnifDynEqSect} lies in
$\W_\Lag$ or in a submanifold of $\W_\Lag$. If $\Lag$ is
singular, in the most favourable cases, there exists a submanifold
$\W_f \hookrightarrow \W_\Lag$ where the section $\psi$ takes values.

\subsection{Field equations for multivector fields}
\label{sec:UnifFieldEquationsMultiVF}

The \emph{Lagrangian-Hamiltonian problem for multivector fields} associated with the
premultisymplectic manifold $(\W_r,\Omega_r)$ consists in finding a class of locally
decomposable holonomic multivector fields $\{\X\} \subset \vf^m(\W_r)$
satisfying the following field equation
\begin{equation}\label{eqn:UnifDynEqMultiVF}
i(\X)\Omega_r = 0 \ , \quad \mbox{for every } \X \in \{\X\} \, .
\end{equation}

According to \cite{art:deLeon_Marin_Marrero_Munoz_Roman05}, we have the following result.

\begin{proposition}\label{prop:CompSubmanifoldMultiVF}
A solution $\X \in \vf^{m}(\W_r)$ to equation \eqref{eqn:UnifDynEqMultiVF}
exists only on the points of the submanifold $\W_c \hookrightarrow \W_r$ defined by
\begin{align*}
\W_c &= \left\{ w \in \W_r \colon (i(Z)d\hat{H})(w) = 0 \, , \mbox{ for every }
Z \in \ker(\Omega) \right\} \\
&= \left\{ w \in \W_r \colon (i(Y)\Omega_r)(w) = 0 \, , \mbox { for every }
Y \in \vf^{V(\rho_2^r)}(\W_r)\right\} \, .
\end{align*}
\end{proposition}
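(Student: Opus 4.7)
The plan is to prove both the equality of the two descriptions of $\W_c$ and the fact that any solution of \eqref{eqn:UnifDynEqMultiVF} is necessarily supported on this submanifold. Both claims rest on Lemma \ref{lemma:HOMomentaIntrinsic} together with the explicit characterization of $\ker\Omega$ from \eqref{eqn:PremultisymplecticKernelLocal}.

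For the equality of the two sets, I would start from the local generators: \eqref{eqn:PremultisymplecticKernelLocal} gives $\ker\Omega = \langle \partial/\partial u_I^\alpha,\partial/\partial u_J^\alpha \rangle$ with $|I|=2$, $|J|=3$, and these are $\mu_\W$-related to the local generators of $\vf^{V(\rho_2^r)}(\W_r)$; since $\hat{H}$ does not depend on $p$, both sides of the computation descend unambiguously to $\W_r$. Using the coordinate expression \eqref{eqn:HamiltonCartanFormsLocal} of $\Omega_r$ and the fact that any $Y \in \vf^{V(\rho_2^r)}(\W_r)$ annihilates $du^\alpha$ and $du_i^\alpha$, one obtains
\[
i(Y)\Omega_r \;=\; i(Y)(d\hat{H}\wedge d^m x) \;=\; (i(Y)d\hat{H})\,d^m x,
\]
so the conditions $i(Z)d\hat{H}=0$ and $i(Y)\Omega_r=0$ are pointwise equivalent, which establishes the equality of the two sets.

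Next I would show that every solution of \eqref{eqn:UnifDynEqMultiVF} must take values in this submanifold. Fix a locally decomposable, $\rho_M^r$-transverse multivector field $\X$ satisfying $i(\X)\Omega_r=0$, and any $Y\in\vf^{V(\rho_2^r)}(\W_r)$. Evaluating the vanishing $1$-form $i(\X)\Omega_r$ on $Y$ yields, up to sign, $(i(Y)\Omega_r)(\X)=0$. By Lemma \ref{lemma:HOMomentaIntrinsic}, $i(Y)\Omega_r=f_Y\cdot(\rho_M^r)^*\eta$ for some $f_Y\in C^\infty(\W_r)$, hence $f_Y\cdot\bigl((\rho_M^r)^*\eta\bigr)(\X)=0$. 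Since $\X$ is $\rho_M^r$-transverse, the scalar $\bigl((\rho_M^r)^*\eta\bigr)(\X)$ is nowhere zero, so $f_Y$ must vanish at every point through which a solution passes; together with the previous paragraph this places all such points in $\W_c$.

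The main technical subtlety is the bookkeeping identifying $\ker\Omega\subset\vf(\W)$ with $\vf^{V(\rho_2^r)}(\W_r)$ via $(\mu_\W)_*$ and controlling sign conventions when commuting interior products of multivector fields and vector fields. Beyond that, there is no serious obstacle: the argument is the specialization to our unified bundle of the general premultisymplectic compatibility pattern from the cited reference.
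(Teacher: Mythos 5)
Your argument is correct, and it is worth noting that it actually supplies more than the paper does at this point: the paper states Proposition \ref{prop:CompSubmanifoldMultiVF} by appeal to the general premultisymplectic constraint algorithm of the cited reference, and then only verifies in coordinates that $\ker\Omega$ is spanned by $\partial/\partial u_I^\alpha$, $\partial/\partial u_J^\alpha$ and that $i(Z)d\hat{H}$ reproduces the constraints $p_\alpha^I-\partial\hat{L}/\partial u_I^\alpha=0$, so that $\W_c$ coincides with the first constraint submanifold \eqref{eqn:CompSubmanifoldSect}. You instead prove the necessity claim directly: the identity $i(Y)\Omega_r=(i(Y)d\hat{H})\,d^mx$ for $\rho_2^r$-vertical $Y$ gives the equality of the two descriptions of $\W_c$ (this is exactly the content of the computation in Lemma \ref{lemma:HOMomentaIntrinsic}), and the combination of the semibasicity $i(Y)\Omega_r=f_Y\,(\rho_M^r)^*\eta$, the sign relation $(i(\X)\Omega_r)(Y)=(-1)^m(i(Y)\Omega_r)(\X)$, and the nonvanishing of $\bigl((\rho_M^r)^*\eta\bigr)(\X)$ forced by $\rho_M^r$-transversality yields $f_Y=0$ at every point carrying a solution. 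This is the standard proof of the general compatibility result specialized to $(\W_r,\Omega_r)$, and it makes the role of transversality explicit (without it the zero multivector would trivially satisfy the equation everywhere); the paper's route buys brevity by delegating to the reference, while yours makes the section self-contained. The only caveat, which you already flag, is the identification of $\ker\Omega\subset\vf(\W)$ with vertical fields on $\W_r$ via $\mu_\W$; since the generators $\partial/\partial u_I^\alpha$, $\partial/\partial u_J^\alpha$ are $\mu_\W$-projectable and $\hat{H}$ lives on $\W_r$, this is harmless.
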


The submanifold $\W_c \hookrightarrow \W_r$ is the so-called \emph{compatibility
submanifold} for the premultisymplectic system $(\W_r,\Omega_r)$. Observe that we denoted
this submanifold by $\W_c$, which is the notation used for the first contraint submanifold
defined in \eqref{eqn:CompSubmanifoldSect}. Indeed, both submanifolds are equal. In order to prove
this, recall that the first constraint submanifold is defined locally by the constraints
$p^I_\alpha - \partial\hat{L}/u_I^\alpha = 0$. Hence, it suffices to prove that the compatibility
submanifold given in Proposition \ref{prop:CompSubmanifoldMultiVF} is defined locally by the same contraints.

In fact, in natural coordinates, the coordinate expression for the local Hamiltonian function
$\hat{H}$ is given by \eqref{eqn:HamiltonianFunctionLocal}, and thus we have
\begin{align*}
d\hat{H} &= u_i^\alpha d p_\alpha^i + p_\alpha^i d u_i^\alpha + u_I^\alpha d p_\alpha^I + p_\alpha^I d u_I^\alpha
- \left( \derpar{\hat{L}}{u^\alpha}d u^\alpha + \derpar{\hat{L}}{u_i^\alpha}d u_i^\alpha + \derpar{\hat{L}}{u_I^\alpha}d u_I^\alpha \right) \\
&= -\derpar{\hat{L}}{u^\alpha}d u^\alpha + \left( p_\alpha^i - \derpar{\hat{L}}{u_i^\alpha} \right)d u_i^\alpha
+ \left( p_\alpha^I - \derpar{\hat{L}}{u_I^\alpha} \right) d u_I^\alpha + u_i^\alpha d p_\alpha^i + u_I^\alpha d p_\alpha^I \, .
\end{align*}
Now, bearing in mind that $\ker\Omega$ is the $(nm(m+1)/2 + nm(m+1)(m+2)/6)$-dimensional
$C^\infty(\W)$-module locally given by \eqref{eqn:PremultisymplecticKernelLocal},
the functions $i(Z)d\hat{H}$ for $Z \in \ker\Omega$ have the following coordinate expressions
$$
i\left( \derpar{}{u_I^\alpha} \right) d\hat{H} = p_\alpha^I - \derpar{\hat{L}}{u_I^\alpha} \ \mbox{ for }|I| = 2 \quad ; \quad
i\left( \derpar{}{u_J^\alpha} \right) d\hat{H} = 0 \ \mbox{ for } |J| = 3 \, .
$$
Therefore, the submanifold $\W_c \hookrightarrow \W_r$ is locally defined by the $nm(m+1)/2$
constraints $p_\alpha^I - \partial\hat{L} / \partial u_I^\alpha = 0$. In particular, it is equal
to the submanifold defined in \eqref{eqn:CompSubmanifoldSect}, and we have
$$
\dim\W_c = \dim\W_r - nm(m+1)/2 = m + n + 2mn + nm(m+1)/2 + nm(m+1)(m+2)/6 \ .
$$

Now we compute the coordinate expression of the equation \eqref{eqn:UnifDynEqMultiVF} in a local
chart of $\W_r$. From the results in \cite{art:Echeverria_Munoz_Roman98}, a representative $\X$
of a class of locally decomposable, integrable and $\rho_M^r$-transverse $m$-vector fields
$\{\X\} \subset \vf^m(\W_r)$ can be written in coordinates
\begin{equation}\label{eqn:UnifGenericMultiVFLocal}
\X = f \bigwedge_{j=1}^{m}
\left(  \derpar{}{x^j} + F_j^\alpha\derpar{}{u^\alpha} +
\sum_{|I| = 1}^{3} F_{I,j}^\alpha\derpar{}{u_{I}^\alpha}
+ G_{\alpha,j}^i\derpar{}{p_\alpha^i} + G_{\alpha,j}^{I}\derpar{}{p_\alpha^{I}} \right) \, ,
\end{equation}
where $f$ is a non-vanishing local function. Taking $f = 1$ as a representative of
the equivalence class, the equation \eqref{eqn:UnifDynEqMultiVF} gives the following system of equations
\begin{align}
& F_j^\alpha = u_j^\alpha \quad ; \quad \sum_{1_i+1_j=I} \frac{1}{n(ij)} \, F_{i,j}^\alpha = u_I^\alpha \, , \label{eqn:UnifDynEqMultiVFHolonomy} 
\end{align}\begin{align}
& \sum_{i=1}^{m} G_{\alpha,i}^{i} = \derpar{\hat L}{u^\alpha} \, , \label{eqn:UnifDynEqMultiVFLocal1} \\
& \sum_{j=1}^{m} \frac{1}{n(ij)} \, G_{\alpha,j}^{1_i+1_j} = \derpar{\hat L}{u_i^\alpha} - p_\alpha^i \, , \label{eqn:UnifDynEqMultiVFLocal2} \\
& p_\alpha^K = \derpar{\hat L}{u^\alpha_K} \, , \quad |K| = 2 \, . \label{eqn:HOMomentaMultiVF}
\end{align}
The $m$ additional equations alongside the $d x^i$ are a straightforward consequence of the others
and the tangency condition that follows, and thus we omit them. Therefore, the multivector field $\X$
is locally given by
$$
\X = \bigwedge_{j=1}^{m}
\left(  \derpar{}{x^j} + u_j^\alpha\derpar{}{u^\alpha} +
\sum_{|I| = 1}^{3} F_{I,j}^\alpha\derpar{}{u_{I}^\alpha}
+ G_{\alpha,j}^i\derpar{}{p_\alpha^i} + G_{\alpha,j}^{I}\derpar{}{p_\alpha^{I}} \right) \, ,
$$
where the functions $F_{i,j}^\alpha$, $G_{\alpha,j}^i$ and $G_{\alpha,j}^I$ must satisfy the
equations \eqref{eqn:UnifDynEqMultiVFHolonomy}, \eqref{eqn:UnifDynEqMultiVFLocal1} and
\eqref{eqn:UnifDynEqMultiVFLocal2}. Note that most of the component functions remain undetermined,
and that there can be several different functions satisfying the referred equations. However, recall
that the statement of the problem requires the class of multivector fields to be holonomic.
In coordinates, this implies that equations \eqref{eqn:MultiVFHolonomyLocal} are satisfied with
$k = 3$ and $r=1$, and thus the multivector field $\X$ has the following coordinate expression
$$
\X = \bigwedge_{j=1}^{m}
\left(  \derpar{}{x^j} + u_j^\alpha\derpar{}{u^\alpha} +
\sum_{|I| = 1}^{2} u_{I+1_j}^\alpha \derpar{}{u_{I}^\alpha}
+ F_{J,j}^\alpha\derpar{}{u_{J}^\alpha} + G_{\alpha,j}^i\derpar{}{p_\alpha^i} + G_{\alpha,j}^{I}\derpar{}{p_\alpha^{I}} \right) \, ,
$$
with $G_{\alpha,j}^i$ and $G_{\alpha,j}^{I}$ satisfying \eqref{eqn:UnifDynEqMultiVFLocal1}
and \eqref{eqn:UnifDynEqMultiVFLocal2}.

Observe that the equations \eqref{eqn:HOMomentaMultiVF} are a compatibility condition for the
multivector field $\X$, which state that the multivector field solution to the field equation
\eqref{eqn:UnifDynEqMultiVF} exists only at support on the submanifold $\W_c$. Hence, we recover
in coordinates the result stated in Proposition \ref{prop:CompSubmanifoldMultiVF}.

Let us analyze the tangency of the multivector field $\X$ along the submanifold
$\W_c \hookrightarrow \W_r$. From \cite{art:Echeverria_Munoz_Roman98} we know that the necessary
and sufficient condition for $\X = X_1 \wedge \ldots \wedge X_m \in \vf^{m}(\W_r)$ to be
tangent to $\W_c$ is that $X_j$ is tangent to $\W_{c}$ for every $j=1,\ldots,m$.

Therefore, since the submanifold $\W_c \hookrightarrow \W_r$ is locally defined by the constraint
functions $\xi^{K}_{\alpha} = p_\alpha^{K} - \partial\hat{L} / \partial u_{K}^\alpha$, we must
check if the condition $\Lie(X_j)(\xi_\alpha^{K}) \equiv X_j(\xi_{\alpha}^{K}) = 0$ holds on $\W_c$
for every $1 \leqslant j \leqslant m$, $1 \leqslant \alpha \leqslant n$, $|K| = 2$. Computing, we obtain
\begin{align*}
&\left( \derpar{}{x^j} + u_j^\alpha\derpar{}{u^\alpha} +
\sum_{|I| = 1}^{2} u_{I+1_j}^\alpha \derpar{}{u_{I}^\alpha}
+ G_{\alpha,j}^i\derpar{}{p_\alpha^i} + G_{\alpha,j}^{I}\derpar{}{p_\alpha^{I}}\right)\left( p_\alpha^K - \derpar{\hat{L}}{u_K^\alpha} \right) = 0 \\
&\qquad \Longleftrightarrow
G_{\alpha,j}^K - \derpars{\hat{L}}{x^j}{u_K^\alpha} - u_j^\beta\derpars{\hat{L}}{u^\beta}{u_K^\alpha}
- u_{1_i+1_j}^\beta \derpars{\hat{L}}{u_i^\beta}{u_K^\alpha} - u_{I+1_j}^\beta\derpars{\hat{L}}{u_I^\beta}{u_K^\alpha} = 0 \\
&\qquad \Longleftrightarrow
G_{\alpha,j}^K - \frac{d}{dx^j}\, \derpar{\hat{L}}{u_K^\alpha} = 0 \, .
\end{align*}
Hence, the tangency condition enables us to determinate all the functions $G_{\alpha,j}^K$, since we
obtain $nm^2(m+1)/2$ equations, one for each function. Now, taking into account equations
\eqref{eqn:UnifDynEqMultiVFLocal2} and the coefficients $G_{\alpha,j}^K$ that we have determined, we obtain
$$
\sum_{j=1}^{m} \frac{1}{n(ij)} \, G_{\alpha,j}^{1_i+1_j} - \derpar{\hat L}{u_i^\alpha} + p_\alpha^i = 0
\ \Longleftrightarrow \
p_\alpha^i - \derpar{\hat{L}}{u_i^\alpha} + \sum_{j=1}^{m} \frac{1}{n(ij)} \, \frac{d}{dx^j} \, \derpar{\hat{L}}{u_{1_i+1_j}^\alpha} = 0 \, .
$$
Hence, the tangency condition for the multivector field $\X$ along $\W_{c}$ gives rise to $mn$ new
constraints defining a submanifold of $\W_{c}$ that coincides with the submanifold $\W_\Lag$
introduced in Proposition \ref{prop:GraphLegMapSect}. Now we must study the tangency of $\X$ along
the new submanifold $\W_\Lag$. After a long but straightforward calculation, we obtain
\begin{align*}
G_{\alpha,k}^{i} &=
\frac{d}{dx^k} \, \derpar{\hat{L}}{u_i^\alpha}
- \frac{d}{dx^k} \sum_{j=1}^{m} \frac{1}{n(ij)} \, \frac{d}{dx^j} \, \derpar{\hat{L}}{u_{1_i+1_j}^\alpha} \\
&\quad{} - \sum_{j=1}^{m}\frac{1}{n(ij)}\left( F_{I+1_j,k}^\beta - \frac{d}{dx^k} \, u_{I+1_j}^\beta \right) \derpars{\hat{L}}{u_I^\beta}{u_{1_i+1_j}^\alpha} \, .
\end{align*}
Therefore, the tangency condition along the submanifold $\W_\Lag$ enables us to determinate all the functions
$G_{\alpha,k}^{i}$. Now, taking into account equations \eqref{eqn:UnifDynEqMultiVFLocal1}, we have
\begin{align*}
\sum_{i=1}^{m} G_{\alpha,i}^{i} - \derpar{\hat{L}}{u^\alpha} = 0 & \Longleftrightarrow
\derpar{\hat{L}}{u^\alpha}
- \frac{d}{dx^i} \, \derpar{\hat{L}}{u_i^\alpha}
+ \sum_{|I|=2} \frac{d^{|I|}}{dx^{I}} \, \derpar{\hat{L}}{u_{I}^\alpha} \\
&\qquad{} + \sum_{i=1}^{m}\sum_{j=1}^{m}\frac{1}{n(ij)}\left( F_{I+1_j,i}^\beta - \frac{d}{dx^i} \, u_{I+1_j}^\beta \right) \derpars{\hat{L}}{u_I^\beta}{u_{1_i+1_j}^\alpha} = 0 \, .
\end{align*}
These $n$ equations are the Euler-Lagrange equations for a locally decomposable
holonomic multivector field. Observe that if $\hat{\Lag}$ is a regular Lagrangian density,
then the Hessian of $\hat{L}$ with respect to the second-order velocities is regular, and we can
assure the existence of a local multivector field $\X$ solution to the equation
\eqref{eqn:UnifDynEqMultiVF}, defined at support on
$\W_\Lag \hookrightarrow \W_r$ and tangent to $\W_\Lag$.
A global solution is then obtained using partitions of the unity.

If the Lagrangian density is not regular, then the above equations may or may not be compatible,
and may give rise to new constraints. In the most favourable cases there exists a submanifold
$\W_f \hookrightarrow \W_\Lag$ (where we admit
$\W_f = \W_\Lag)$ where we have a well-defined holonomic
multivector field at support on $\W_f$, and tangent to $\W_f$, solution to the equation
\begin{equation}\label{eqn:UnifDynEqMultiVFSing}
\restric{i(\X)\Omega_r}{\W_f} = 0 \, .
\end{equation}

Therefore, we can state the following result.

\begin{theorem}\label{thm:EquivalenceTheoremUnified}
The following assertions on a holonomic section $\psi \in \Gamma(\rho_M^r)$ are equivalent:
\begin{enumerate}
\item $\psi$ is a solution to the equation \eqref{eqn:UnifDynEqSect}, that is,
$$
\psi^*i(X)\Omega_r = 0 \, , \quad \mbox{for every } X \in \vf(\W_r) \, .
$$
\item If the coordinate expression of $\psi$ in the induced natural coordinates of $\W_r$ is
$\psi(x^i) = (x^i,u^\alpha(x^i),u^\alpha_j(x^i),u^\alpha_I(x^i),u^\alpha_J(x^i),p_\alpha^{j}(x^i),p_\alpha^{I}(x^i))$,
then the component functions of $\psi$ satisfy equations \eqref{eqn:UnifDynEqSectLocal}
and \eqref{eqn:UnifDynEqSectRelationMomenta}, that is, the following system of $n + nm$
partial differential equations
\begin{equation}\label{eqn:EquivalenceTheoremUnifiedLocal}
\sum_{i=1}^{m}\derpar{p_\alpha^i}{x^i} = \derpar{\hat{L}}{u^\alpha} \quad ; \quad
\sum_{j=1}^{m} \frac{1}{n(ij)} \, \derpar{p_\alpha^{1_i+1_j}}{x^j} = \derpar{\hat{L}}{u_i^\alpha} - p_\alpha^i \, .
\end{equation}
\item $\psi$ is a solution to the equation
\begin{equation}\label{eqn:UnifDynEqIntSect}
i(\Lambda^m\psi^\prime)(\Omega_r \circ \psi) = 0 \, ,
\end{equation}
where $\Lambda^m\psi^\prime \colon M \to \Lambda^mT\W_r$ is the canonical lifting of $\psi$.
\item $\psi$ is an integral section of a multivector field contained in a class of locally
decomposable holonomic multivector fields $\{ \X \} \subset \vf^{m}(\W_r)$, tangent to $\W_\Lag$,
and satisfying the equation \eqref{eqn:UnifDynEqMultiVF}, that is,
$$
i(\X)\Omega_r = 0 \, .
$$
\end{enumerate}
\end{theorem}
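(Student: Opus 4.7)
The plan is to prove the four-fold equivalence by showing (1) $\Leftrightarrow$ (2), (1) $\Leftrightarrow$ (3), and (1) $\Leftrightarrow$ (4) in turn, leveraging the coordinate analyses already carried out in Sections \ref{sec:UnifFieldEquationsSect} and \ref{sec:UnifFieldEquationsMultiVF}. For (1) $\Leftrightarrow$ (2), I would expand $\psi^*i(X)\Omega_r$ in a local chart by writing a generic $X \in \vf(\W_r)$ as a combination of coordinate basis fields and contracting against the local expression \eqref{eqn:HamiltonCartanFormsLocal} of $\Omega_r$. The resulting system is precisely \eqref{eqn:UnifDynEqSectLocal}, \eqref{eqn:UnifDynEqSectRelationMomenta}, \eqref{eqn:HOMomentaSect} and \eqref{eqn:UnifDynEqSectHolonomy}. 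Holonomy of $\psi$ reduces \eqref{eqn:UnifDynEqSectHolonomy} to an identity, while Lemma \ref{lemma:HOMomentaIntrinsic} identifies the contribution of $\rho_2^r$-vertical vector fields with the constraints \eqref{eqn:HOMomentaSect} characterizing $\W_c$; what remains matches exactly \eqref{eqn:EquivalenceTheoremUnifiedLocal}.

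For (1) $\Leftrightarrow$ (3), I would use the pointwise identity: at each $x \in M$, evaluating the $m$-form $(\psi^*i(X)\Omega_r)(x)$ on a positively oriented basis of $T_xM$ gives $\Omega_r(\psi(x))\bigl(X_{\psi(x)}, \psi_*(\partial_1)|_x, \ldots, \psi_*(\partial_m)|_x\bigr)$, which by the very definition of $\Lambda^m\psi'$ coincides with the value of the $1$-form $i(\Lambda^m\psi')(\Omega_r \circ \psi)$ at $x$ contracted with $X_{\psi(x)}$. Since $X$ is arbitrary, the vanishing of $\psi^*i(X)\Omega_r$ for all $X$ is equivalent to the vanishing of this $1$-form along $\psi$.

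For (1) $\Leftrightarrow$ (4), the forward direction uses that if $\psi$ is an integral section of a locally decomposable $\X \in \vf^m(\W_r)$, then $\Lambda^m\psi' = f\,(\X \circ \psi)$ for a nowhere-vanishing function $f$; combining $i(\X)\Omega_r = 0$ with the equivalence (3) already proved, (1) follows. For the converse, starting from a holonomic $\psi$ satisfying (1), I would construct a class of locally decomposable holonomic multivector fields $\{\X\} \subset \vf^{m}(\W_r)$ tangent to $\W_\Lag$, satisfying $i(\X)\Omega_r = 0$ and having $\psi$ as an integral section. The existence and tangency of such $\X$ have been worked out following Proposition \ref{prop:CompSubmanifoldMultiVF}: the tangency condition along $\W_c$ and then $\W_\Lag$ determines the multimomentum coefficients $G_{\alpha,j}^i$ and $G_{\alpha,j}^I$ in terms of total derivatives of $\hat L$, leaving the components $F_{J,j}^\alpha$ ($|J|=3$) free. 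I would then fix these free components so that the distribution generated by $\X$ matches $\Lambda^m\psi'$ along the image of $\psi$, extend by partition of unity, and invoke Proposition \ref{prop:UnifDynEqSectTangent} to conclude that restricting the set of test vector fields to those tangent to $\W_\Lag$ introduces no loss once $\psi$ is holonomic.

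The main obstacle is the converse of step (4): producing a globally defined multivector field $\X$ satisfying $i(\X)\Omega_r = 0$, tangent to $\W_\Lag$, and admitting the prescribed $\psi$ as an integral section. In the regular case this follows from the invertibility of the Hessian of $\hat L$ in the highest-order velocities (so the equations determining $G_{\alpha,j}^i, G_{\alpha,j}^I$ can be solved), together with a standard partition-of-unity globalization; in the almost-regular case one must first pass to a final constraint submanifold $\W_f \hookrightarrow \W_\Lag$ produced by the premultisymplectic constraint algorithm, replace \eqref{eqn:UnifDynEqMultiVF} by its restriction \eqref{eqn:UnifDynEqMultiVFSing}, and verify that the compatibility and tangency conditions stabilize on $\W_f$ before the same construction goes through.
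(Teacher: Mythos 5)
Your proposal is correct and its core is the same as the paper's: expand the equations in the natural chart, use holonomy to dispose of \eqref{eqn:UnifDynEqSectHolonomy}, and identify \eqref{eqn:HOMomentaSect} with the constraints defining $\W_c$, so that only \eqref{eqn:EquivalenceTheoremUnifiedLocal} survives. The differences are in the routing and in one step. The paper uses item (2) as the hub, proving ($1\Leftrightarrow 2$), ($2\Leftrightarrow 3$) and ($2\Leftrightarrow 4$) all by coordinate computation; in particular ($2\Leftrightarrow 3$) is done by writing $\Lambda^m\psi'$ in components and contracting with the local expression \eqref{eqn:HamiltonCartanFormsLocal} of $\Omega_r$. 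You instead prove ($1\Leftrightarrow 3$) intrinsically, via the pointwise identity relating $(\psi^*i(X)\Omega_r)(\partial_1,\ldots,\partial_m)$ to the contraction of the $1$-form $i(\Lambda^m\psi')(\Omega_r\circ\psi)$ with $X_{\psi(x)}$ (the two sides differ only by the sign $(-1)^m$ from reordering the arguments, which is immaterial for the vanishing statement); this is cleaner and avoids repeating the chart computation. For item (4) you are, if anything, more careful than the paper: the paper's proof of ($2\Leftrightarrow 4$) only verifies that an integral section of a suitable $\X$ satisfies \eqref{eqn:EquivalenceTheoremUnifiedLocal} and asserts the converse implicitly, whereas you explicitly flag the need to construct a locally decomposable holonomic $\X$ tangent to $\W_\Lag$ admitting the given $\psi$ as an integral section, determining the $G_{\alpha,j}^i$, $G_{\alpha,j}^I$ from the tangency analysis and fixing the free $F_{J,j}^\alpha$ along $\mathrm{Im}\,\psi$ to match $\Lambda^m\psi'$. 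Your caveats about regularity versus the almost-regular case (where one must stabilize on $\W_f$ and use \eqref{eqn:UnifDynEqMultiVFSing}) are appropriate and address a point the paper leaves tacit.
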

\begin{proof} \

($1 \Longleftrightarrow 2$) \quad
From the results in Section \ref{sec:UnifFieldEquationsSect}, the field equation
\eqref{eqn:UnifDynEqSect} gives in coordinates the equations \eqref{eqn:UnifDynEqSectLocal},
\eqref{eqn:UnifDynEqSectRelationMomenta}, \eqref{eqn:HOMomentaSect} and \eqref{eqn:UnifDynEqSectHolonomy}.
As stated in the aforementioned Section, the equations \eqref{eqn:HOMomentaSect} are the local
constraints defining the first constraint submanifold $\W_c \hookrightarrow \W_r$. In addition,
since we assume that the section $\psi \in \Gamma(\rho_M^r)$ is holonomic, the equations
\eqref{eqn:UnifDynEqSectHolonomy} are satisfied. Therefore, the equation \eqref{eqn:UnifDynEqSect}
is locally equivalent to equations \eqref{eqn:UnifDynEqSectLocal} and
\eqref{eqn:UnifDynEqSectRelationMomenta}, that is, to equations \eqref{eqn:EquivalenceTheoremUnifiedLocal}.

($2 \Longleftrightarrow 3$) \quad
If $\psi \in \Gamma(\rho_M^r)$ is locally given by
\begin{equation*}
\psi(x^i) = (x^i,u^\alpha(x^i),u^\alpha_i(x^i),u^\alpha_I(x^i),u^\alpha_J(x^i),p_\alpha^j(x^i),p_\alpha^I(x^i)) \, ,
\end{equation*}
then its canonical lifting to $\Lambda^mT\W_r$ is locally given by
$\Lambda^m\psi^\prime = \psi^\prime_1 \wedge \ldots \wedge \psi_m^\prime$, with
\begin{equation*}
\psi^\prime_j = \left(0,\ldots,0,1,0,\ldots,0,\frac{d}{dx^j}u^\alpha,\frac{d}{dx^j}u^\alpha_i,
\frac{d}{dx^j}u^\alpha_I,\frac{d}{dx^j}u^\alpha_J,\frac{d}{dx^j}p_\alpha^i,\frac{d}{dx^j}p_\alpha^I\right) \, ,
\end{equation*}
where $d/dx^j$ is the $j$th coordinate total derivative, and the $1$ is at the $j$th position. Then,
the inner product $i(\Lambda^m\psi^\prime)(\Omega_r \circ \psi)$ gives, in coordinates,
\begin{align*}
i(\Lambda^m\psi^\prime)(\Omega_r \circ \psi) &=
\sum_{i=1}^{m}\left( \cdots \right)d x^i
+ \left( \derpar{\hat{L}}{u^\alpha} - \frac{dp_\alpha^i}{dx^i} \right) d u^\alpha \\
&\quad{} + \left( \derpar{\hat{L}}{u_i^\alpha} - p_\alpha^{i} - \sum_{j=1}^{m} \frac{1}{n(ij)} \frac{dp_\alpha^{1_i+1_j}}{dx^j} \right)d u_i^\alpha
+ \left( p_\alpha^I - \derpar{\hat{L}}{u_I^\alpha} \right)d u_I^\alpha \\
&\quad{} + \left( \frac{du^\alpha}{dx^i} - u_i^\alpha \right)d p_\alpha^i +
\left( \sum_{1_i+1_j=I} \frac{1}{n(ij)} \frac{du^\alpha_i}{dx^j} - u_I^\alpha \right) d p_\alpha^I \, ,
\end{align*}
where the terms $(\cdots)$ along the forms $d x^i$ involve of partial derivatives of the Lagrangian
function and of the rest of component functions. Now, requiring this last expression to vanish, we obtain
equations \eqref{eqn:UnifDynEqSectLocal}, \eqref{eqn:UnifDynEqSectRelationMomenta}, \eqref{eqn:HOMomentaSect}
and \eqref{eqn:UnifDynEqSectHolonomy}, along with $m$ additional equations which are a combination of those.
Same comments as in the proof of the previous item apply. In particular,
equations \eqref{eqn:HOMomentaSect} are the local constraints defining the first constraint submanifold
$\W_c \hookrightarrow \W_r$, and equations \eqref{eqn:UnifDynEqSectHolonomy} are automatically satisfied
because of the holonomy assumption. Hence, equation \eqref{eqn:UnifDynEqIntSect}
is locally equivalent to equations \eqref{eqn:UnifDynEqSectLocal} and
\eqref{eqn:UnifDynEqSectRelationMomenta}, that is, to equations \eqref{eqn:EquivalenceTheoremUnifiedLocal}.

($2 \Longleftrightarrow 4$) \quad
From the results in this Section, if $\X \in \vf^{m}(\W_r)$ is a generic
locally decomposable multivector field locally given by \eqref{eqn:UnifGenericMultiVFLocal}, then,
taking $f = 1$ as a representative of the equivalence class, the field equation \eqref{eqn:UnifDynEqMultiVF}
is locally equivalent to the equations \eqref{eqn:UnifDynEqMultiVFHolonomy},
\eqref{eqn:UnifDynEqMultiVFLocal1}, \eqref{eqn:UnifDynEqMultiVFLocal2} and \eqref{eqn:HOMomentaMultiVF}.
As already stated, equations \eqref{eqn:HOMomentaMultiVF} give, in coordinates, the compatibility
submanifold $\W_c$ obtained using the constraint algorithm in \cite{art:deLeon_Marin_Marrero_Munoz_Roman05}.
On the other hand, since the multivector field $\X$ is assumed to be holonomic, then equations
\eqref{eqn:UnifDynEqMultiVFHolonomy} are satisfied. Hence, the field equation
\eqref{eqn:UnifDynEqMultiVF} is locally equivalent to equations \eqref{eqn:UnifDynEqMultiVFLocal1}
and \eqref{eqn:UnifDynEqMultiVFLocal2}.

Let $\gamma \in \Gamma(\rho_M^r)$ be an integral section of $\X$ given in the natural coordinates of $\W$ by
$\gamma(x^i) = (x^i,u^\alpha,u^\alpha_i,u^\alpha_I,u^\alpha_J,p_\alpha^{i},p_\alpha^{I})$. Then,
the condition of integral section is locally equivalent to the following system of equations
\begin{gather*}
\derpar{u^\alpha}{x^i} = F_i^\alpha \circ \gamma \quad ; \quad
\derpar{u^\alpha_i}{x^j} = F_{i,j}^\alpha \circ \gamma \quad ; \quad
\derpar{u^\alpha_I}{x^j} = F_{I,j}^\alpha \circ \gamma \quad ; \quad
\derpar{u^\alpha_J}{x^j} = F_{J,j}^\alpha \circ \gamma \, , \\
\derpar{p_\alpha^i}{x^j} = G_{\alpha,j}^i \circ \gamma \quad ; \quad
\derpar{p_\alpha^I}{x^j} = G_{\alpha,j}^I \circ \gamma \, .
\end{gather*}
Replacing these equations in  \eqref{eqn:UnifDynEqMultiVFLocal1} and \eqref{eqn:UnifDynEqMultiVFLocal2},
we obtain the following system of partial differential equations for the component functions of $\gamma$
\begin{gather*}
\derpar{u^\alpha}{x^i} = u_i^\alpha \quad ; \quad
\derpar{u^\alpha_i}{x^j} = u_{1_i+1_j}^\alpha \quad ; \quad
\derpar{u^\alpha_I}{x^j} = u_{I+1_j}^\alpha \quad ; \quad
\derpar{u^\alpha_J}{x^j} = F_{J,j}^\alpha \, , \\
\sum_{i=1}^{m}\derpar{p_\alpha^i}{x^i} = \derpar{\hat{L}}{u^\alpha} \quad ; \quad
\sum_{j=1}^{m} \frac{1}{n(ij)} \, \derpar{p_\alpha^{1_i+1_j}}{x^j} = \derpar{\hat{L}}{u_i^\alpha} - p_\alpha^i \, .
\end{gather*}
Since the multivector field $\X$ is holonomic and tangent to $\W_\Lag$, the first equations are
identically satisfied. Thus, the condition of $\gamma$ to be an integral section of a locally
decomposable holonomic multivector field $\X \in \vf^{m}(\W_r)$, tangent to $\W_\Lag$, and
satisfying the equation \eqref{eqn:UnifDynEqMultiVF} is locally equivalent to equations
\eqref{eqn:EquivalenceTheoremUnifiedLocal}.
\end{proof}

\section{Lagrangian formalism}
\label{sec:LagrangianForm}

Now we recover the Lagrangian field equations and geometric structures from the unified formalism.
The results remain the same for both regular and singular Lagrangian densities. Thus, no distinction
will be made in this matter.

\subsection{General setting}

In order to establish the field equations in the Lagrangian formalism, we must define the
Poincar\'{e}-Cartan $m$ and $(m+1)$-forms in $J^3\pi$. Since a unique Legendre map is recovered
in the unified framework, we can give the following definition:

\begin{definition}\label{def:PoincareCartanForms}
Let $\Theta_1^s \in \Omega^{m}(J^2\pi^\dagger)$ and $\Omega_1^s \in \Omega^{m+1}(J^2\pi^\dagger)$ be the
symmetrized Liouville forms in $J^2\pi^\dagger$. The \emph{Poincar\'{e}-Cartan forms} in
$J^3\pi$ are the forms defined as $\Theta_\Lag = \widetilde{\Leg}^*\Theta_1^s \in \Omega^{m}(J^3\pi)$ and
$\Omega_\Lag = \widetilde{\Leg}^*\Omega_1^s = -d\Theta_\Lag \in \Omega^{m+1}(J^3\pi)$.
\end{definition}

These forms coincide with the usual Poincar\'{e}-Cartan forms for second-order classical field
theories that can be found in the literature (see, for instance, \cite{art:Aldaya_Azcarraga78_2,proc:Garcia_Munoz83,art:Kouranbaeva_Shkoller00,art:Munoz85}).
They can also be recovered directly from the unified formalism. In fact:

\begin{lemma}\label{lemma:LagRelatedForms}
Let $\Theta = \rho_2^*\Theta_1^s$ and $\Theta_r = \hat{h}^*\Theta$ be the canonical $m$-forms
defined in $\W$ and $\W_r$, respectively. Then, the Poincar\'{e}-Cartan $m$-form satisfies
$\Theta = \rho_1^*\Theta_\Lag$ and $\Theta_r = (\rho_1^r)^*\Theta_\Lag$.
\end{lemma}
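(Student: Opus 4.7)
My approach reduces both identities to a single coordinate verification of the first one. From the commutative diagram after Proposition~\ref{prop:WoDiffWr} one has $\rho_1^r = \rho_1 \circ \hat{h}$, so that once $\Theta = \rho_1^*\Theta_\Lag$ is known, applying $\hat{h}^*$ yields
\[
\Theta_r = \hat{h}^*\Theta = \hat{h}^*\rho_1^*\Theta_\Lag = (\rho_1\circ\hat{h})^*\Theta_\Lag = (\rho_1^r)^*\Theta_\Lag.
\]
Hence only the first identity requires real work.

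To prove $\Theta = \rho_1^*\Theta_\Lag$ I rewrite both sides as pullbacks of the same form on $J^2\pi^\dagger$: by definition $\Theta = \rho_2^*\Theta_1^s$, while
\[
\rho_1^*\Theta_\Lag = \rho_1^*\widetilde{\Leg}^*\Theta_1^s = (\widetilde{\Leg}\circ\rho_1)^*\Theta_1^s.
\]
I then verify the equality in natural coordinates by substituting the local expression \eqref{eqn:LiouvilleSymmetricFormsLocal} of $\Theta_1^s$ and applying the component formulas of Corollary~1 for $\widetilde{\Leg}^*p$, $\widetilde{\Leg}^*p^i_\alpha$, and $\widetilde{\Leg}^*p^I_\alpha$. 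This produces a local expression for $\rho_1^*\Theta_\Lag$ with exactly the same structure as the unified canonical form \eqref{eqn:UnifiedCanonicalFormsLocal}, and the identity reduces to matching the coefficients of the basis forms $d^m x$, $du^\alpha\wedge d^{m-1}x_i$ and $du_i^\alpha\wedge d^{m-1}x_j$ through the identifications $p \leftrightarrow \widetilde{\Leg}^*p$, $p^i_\alpha \leftrightarrow \widetilde{\Leg}^*p^i_\alpha$, $p^I_\alpha \leftrightarrow \widetilde{\Leg}^*p^I_\alpha$ that characterize the graph submanifold $\W_\Lag$.

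The main obstacle is essentially computational: one has to carefully track the symmetrization factors $1/n(ij)$ inherited from the embedding of $J^2\pi^\dagger$ into $\Lambda_2^m(J^1\pi)$, as well as the total-derivative corrections appearing in $\widetilde{\Leg}^*p^i_\alpha$, in order to confirm that the two semibasic $m$-forms coincide term by term. Conceptually, the result simply says that $\Theta_1^s$ is tautological enough that its pullback via $\rho_2$ agrees with its pullback via $\widetilde{\Leg}\circ\rho_1$; this is the geometric content that makes $\Theta_\Lag$ the natural Poincar\'e--Cartan form in the unified formalism.
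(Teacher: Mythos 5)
Your reduction of the second identity to the first, via $\rho_1^r = \rho_1\circ\hat{h}$ and an application of $\hat{h}^*$, is exactly the paper's argument and is fine. The problem is with the first identity. The paper's proof is a one-line composition argument,
$\rho_1^*\Theta_\Lag = (\widetilde{\Leg}\circ\rho_1)^*\Theta_1^s = \rho_2^*\Theta_1^s = \Theta$,
whose entire content is the identification of the two maps $\widetilde{\Leg}\circ\rho_1$ and $\rho_2$ from $\W$ to $J^2\pi^\dagger$. Your coordinate route replaces that identification by a term-by-term matching of coefficients, and this is where the argument does not close. On the fibred product $\W = J^3\pi\times_{J^1\pi}J^2\pi^\dagger$ the multimomenta $p$, $p_\alpha^i$, $p_\alpha^I$ are independent fibre coordinates, so the coefficients of $\Theta=\rho_2^*\Theta_1^s$ in the basis $d^mx$, $du^\alpha\wedge d^{m-1}x_i$, $du_i^\alpha\wedge d^{m-1}x_j$ are the coordinate functions $p$, $p_\alpha^i$, $p_\alpha^{1_i+1_j}$ themselves, whereas the coefficients of $\rho_1^*\Theta_\Lag=(\widetilde{\Leg}\circ\rho_1)^*\Theta_1^s$ are the functions $\widetilde{\Leg}^*p$, $\widetilde{\Leg}^*p_\alpha^i$, $\widetilde{\Leg}^*p_\alpha^{1_i+1_j}$ of the jet coordinates alone. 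These coincide only where the relations $p=\widetilde{\Leg}^*p$, $p_\alpha^i=\widetilde{\Leg}^*p_\alpha^i$, $p_\alpha^I=\widetilde{\Leg}^*p_\alpha^I$ hold, i.e.\ precisely over the graph of $\widetilde{\Leg}$. You concede this when you say the matching goes ``through the identifications \ldots that characterize the graph submanifold $\W_\Lag$'': invoking those identifications means your computation establishes the equality of the two $m$-forms only after restriction to $\W_\Lag$, not as an identity of forms on all of $\W$ (resp.\ $\W_r$), which is what Lemma~\ref{lemma:LagRelatedForms} asserts and what is used globally in, e.g., Proposition~\ref{prop:UnifToLagSect}.

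Concretely, what your argument actually yields is the restricted statement $j_\Lag^*\Theta_r=(\rho_1^\Lag)^*\Theta_\Lag$, which is the content of the separate Lemma~\ref{lemma:LagTechLemmaSect}, not of the present one. To prove the lemma as stated along your lines you would need, in addition, the map-level identity $\widetilde{\Leg}\circ\rho_1=\rho_2$ on $\W$ that the paper's proof silently relies on; a coordinate comparison that already presupposes the graph constraints cannot supply it, since off $\W_\Lag$ the two sides are visibly different semibasic $m$-forms. You should either justify that composition identity directly or make explicit that the equality is being asserted only on (the preimage of) $\W_\Lag$ and adjust the later uses accordingly.
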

\begin{proof}
A straightforward computation leads to this result. For the first statement we have
$$
\rho_1^*\Theta_\Lag = \rho_1^*(\widetilde{\Leg}^*\Theta_1^s)
= (\widetilde{\Leg} \circ \rho_1)^*\Theta_1^s = \rho_2^*\Theta_1^s = \Theta\, ,
$$
and from this the second statement follows:
\begin{equation*}
(\rho_1^r)^*\Theta_\Lag = (\rho_1 \circ \hat{h})^*\Theta_\Lag
= \hat{h}^*(\rho_1^*\Theta_\Lag) = \hat{h}^*\Theta = \Theta_r \, . \qedhere
\end{equation*}
\end{proof}

Observe that, as the pull-back of a form by a function and the exterior derivative
commute, this result also holds for the Poincar\'{e}-Cartan $(m+1)$-form $\Omega_\Lag$.

Using the natural coordinates $(x^i,u^\alpha,u_i^\alpha,u_I^\alpha,u_J^\alpha)$ in $J^3\pi$, and
bearing in mind the local expression \eqref{eqn:LiouvilleSymmetricFormsLocal} of $\Theta_1^s$,
and \eqref{eqn:ExtendedLegendreMapLocal} of the extended Legendre map, the local expression of
the Poincar\'{e}-Cartan $m$-form is
\begin{align*}
\Theta_\Lag &= \left( \derpar{L}{u_i^\alpha} - \sum_{j=1}^{m}\frac{1}{n(ij)} \, \frac{d}{dx^j} \, \derpar{L}{u_{1_i+1_j}^\alpha} \right)(d u^\alpha \wedge d^{m-1}x_i - u_i^\alpha d^mx) \\
&\qquad {} + \frac{1}{n(ij)} \, \derpar{L}{u_{1_i+1_j}^\alpha} \, ( d u_i^\alpha \wedge d^{m-1}x_j - u_{1_i+1_j}^\alpha d^mx ) + L d^mx \, .
\end{align*}

An important fact regarding the Poincar\'{e}-Cartan $(m+1)$-form $\Omega_\Lag$ is that
it is $1$-degenerate when $m > 1$, regardless of the regularity of the Lagrangian density.
Indeed, since the restricted Legendre map $\Leg \colon J^3\pi \to J^2\pi^\ddagger$ is a
submersion with $\dim J^3\pi > \dim J^2\pi^\ddagger$, and
$\textnormal{rank}(\Leg) = \textnormal{rank}(\widetilde{\Leg})$,
there exists a non-zero vector field $X \in \vf(J^3\pi)$ which is $\widetilde{\Leg}$-related
to $\mathbf{0} \in \vf(J^2\pi^\dagger)$, that is,
$T\widetilde{\Leg} \circ X = \mathbf{0} \circ \widetilde{\Leg}$.
Then, we have
$$
i(X)\Omega_\Lag = i(X)\widetilde{\Leg}^*\Omega_1^s = \widetilde{\Leg}^*i(\mathbf{0}) \Omega_1^s = 0 \, .
$$

\begin{proposition}\label{prop:LagDiffWL}
The map $\rho_1^\Lag = \rho_1^r \circ j_\Lag \colon \W_\Lag \to J^3\pi$ is a diffeomorphism.
\end{proposition}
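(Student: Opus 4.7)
The plan is to use the explicit description of $\W_\Lag$ as the graph of the restricted Legendre map, as provided by Proposition \ref{prop:GraphLegMapSect}. Since $\W_r = J^3\pi \times_{J^1\pi} J^2\pi^\ddagger$, a point of $\W_r$ is a compatible pair $(j^3_x\phi,[\omega])$ with $\pi^3_1(j^3_x\phi) = \pi_{J^1\pi}^\ddagger([\omega])$, and the submanifold $\W_\Lag \subset \W_r$ is precisely the set of pairs of the form $(j^3_x\phi,\Leg(j^3_x\phi))$. In these terms, $\rho_1^\Lag = \rho_1^r\circ j_\Lag$ is simply the restriction to this graph of the projection onto the first factor.

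First, I would construct a candidate inverse. Define
$$
\sigma\colon J^3\pi \longrightarrow \W_\Lag,\qquad \sigma(j^3_x\phi) = (j^3_x\phi,\Leg(j^3_x\phi)),
$$
which is well defined into $\W_r$ because $\Leg$ is a bundle morphism over $J^1\pi$, and lands in $\W_\Lag$ by the graph characterization of Proposition \ref{prop:GraphLegMapSect}. Then I would check bijectivity at the set-theoretic level: $\rho_1^\Lag\circ\sigma = \mathrm{Id}_{J^3\pi}$ is immediate, while $\sigma\circ\rho_1^\Lag = \mathrm{Id}_{\W_\Lag}$ follows from the fact that any point of $\W_\Lag$ has the form $(j^3_x\phi,\Leg(j^3_x\phi))$ and hence is determined by its first component.

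Next, I would establish smoothness. The map $\sigma$ is smooth because $\Leg$ is smooth, so $\sigma$ is smooth into $\W_r$ and hence, landing in the embedded submanifold $\W_\Lag \hookrightarrow \W_r$, smooth as a map into $\W_\Lag$. On the other hand $\rho_1^\Lag$ is the composition of the inclusion $j_\Lag$ with the smooth projection $\rho_1^r$, hence smooth. Since $\sigma$ is its two-sided inverse, $\rho_1^\Lag$ is a diffeomorphism.

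There is no real obstacle here; the dimension count $\dim\W_\Lag = \dim J^3\pi$ recorded after Proposition \ref{prop:GraphLegMapSect} is consistent with (and follows from) the bijectivity, but is not even needed once the smooth inverse $\sigma$ is exhibited. The only point worth a brief remark is that $\W_\Lag$ was defined as a submanifold of $\W_c\subset\W_r$, so to view $\sigma$ as taking values in $\W_\Lag$ one uses exactly the local constraints \eqref{eqn:RestrictedLegendreMapLocal} defining $\W_\Lag$, which are satisfied by $(j^3_x\phi,\Leg(j^3_x\phi))$ by construction of $\Leg$.
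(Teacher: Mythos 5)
Your proof is correct, but it takes a different route from the paper. The paper's own argument is a one-line dimension count: it asserts that $\rho_1^\Lag$ is a surjective submersion and that, since $\dim\W_\Lag = \dim J^3\pi$, it must also be an injective immersion, hence a diffeomorphism. You instead exploit the identification of $\W_\Lag$ with the graph of $\Leg$ from Proposition \ref{prop:GraphLegMapSect} and exhibit the explicit smooth two-sided inverse $\sigma(j^3_x\phi) = (j^3_x\phi,\Leg(j^3_x\phi))$. Your approach is arguably the more watertight of the two: a surjective submersion between manifolds of equal dimension is only a local diffeomorphism in general (covering maps are the standard counterexample), so the paper's stated implication needs the injectivity that in this situation really does come from the graph structure --- which is precisely the ingredient you make explicit. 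What the paper's argument buys is brevity and independence from the explicit form of $\Leg$; what yours buys is a self-contained verification that also makes clear why the dimension count recorded after Proposition \ref{prop:GraphLegMapSect} is consistent rather than load-bearing. The only point to keep in mind is that viewing $\sigma$ as smooth into $\W_\Lag$ uses that $\W_\Lag$ is an embedded submanifold of $\W_r$, which you correctly note follows from its description by the local constraints \eqref{eqn:RestrictedLegendreMapLocal}.
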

\begin{proof}
Since $\rho_1^\Lag$ is a surjective submersion, the equality $\dim J^3\pi = \dim\W_\Lag$ implies
that it is also an injective immersion, and therefore a diffeomorphism.
\end{proof}

\subsection{Field equations for sections}

\begin{proposition}\label{prop:UnifToLagSect}
Let $\psi \in \Gamma(\rho_M^r)$ be a holonomic section solution to the equation
\eqref{eqn:UnifDynEqSect}. Then the section $\psi_\Lag = \rho_1^r \circ \psi \in \Gamma(\bar{\pi}^3)$
is holonomic, and is a solution to the equation
\begin{equation}\label{eqn:LagDynEqSect}
\psi_\Lag^*i(X)\Omega_\Lag = 0 \, , \quad \mbox{for every } X \in \vf(J^3\pi) \, .
\end{equation}
\end{proposition}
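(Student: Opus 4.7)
The holonomy assertion is immediate from the definitions: the hypothesis that $\psi\in\Gamma(\rho_M^r)$ is holonomic in $\W_r$ means exactly that $\rho_1^r\circ\psi=\psi_\Lag\in\Gamma(\bar\pi^3)$ is holonomic in $J^3\pi$, so there is nothing to prove here.

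For the field equation, the plan is to transfer the equation satisfied by $\psi$ on $(\W_r,\Omega_r)$ down to $J^3\pi$ via the projection $\rho_1^r$, using the key compatibility
\[
\Omega_r=(\rho_1^r)^*\Omega_\Lag
\]
supplied by Lemma \ref{lemma:LagRelatedForms}. Fix an arbitrary $X\in\vf(J^3\pi)$. Since $\W_r=J^3\pi\times_{J^1\pi}J^2\pi^\ddagger$ is a fiber bundle over $J^3\pi$, the map $\rho_1^r\colon\W_r\to J^3\pi$ is a surjective submersion; hence there exists (at least locally, which is enough since the equation is $C^\infty(M)$-linear in $X$ and tensorial in nature) a vector field $\widetilde X\in\vf(\W_r)$ which is $\rho_1^r$-related to $X$, i.e.\ $T\rho_1^r\circ\widetilde X=X\circ\rho_1^r$.

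With such a lift in hand, the contraction computes naturally: using Lemma \ref{lemma:LagRelatedForms} and the $\rho_1^r$-relatedness,
\[
i(\widetilde X)\Omega_r = i(\widetilde X)(\rho_1^r)^*\Omega_\Lag = (\rho_1^r)^*\bigl(i(X)\Omega_\Lag\bigr).
\]
Pulling back along $\psi$ and using $\rho_1^r\circ\psi=\psi_\Lag$,
\[
\psi^*i(\widetilde X)\Omega_r = \psi^*(\rho_1^r)^*\bigl(i(X)\Omega_\Lag\bigr) = (\rho_1^r\circ\psi)^*\bigl(i(X)\Omega_\Lag\bigr) = \psi_\Lag^*\,i(X)\Omega_\Lag.
\]
By hypothesis $\psi$ satisfies \eqref{eqn:UnifDynEqSect}, so the left-hand side vanishes for every $\widetilde X\in\vf(\W_r)$, in particular for our chosen lift. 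Therefore $\psi_\Lag^*i(X)\Omega_\Lag=0$, and since $X$ was arbitrary, the field equation \eqref{eqn:LagDynEqSect} follows.

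The only genuine point is the existence of the lift $\widetilde X$; this is routine because $\rho_1^r$ is a surjective submersion, but it must be invoked explicitly. One could equally well argue without choosing a lift by noting that the tensorial identity $(\rho_1^r)^*(i(X)\Omega_\Lag)$ makes sense once $X$ is pulled back to a section of $(\rho_1^r)^*TJ^3\pi$, and that on the image of $\psi$ the $\rho_2^r$-vertical part contributes nothing to $\Omega_r$—but the lifting argument is the cleanest presentation and no other step poses any difficulty.
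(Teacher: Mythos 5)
Your proof is correct and follows essentially the same route as the paper's: both reduce the holonomy claim to the definition, and both prove the field equation by lifting an arbitrary $X\in\vf(J^3\pi)$ to a $\rho_1^r$-related vector field on $\W_r$, using $\Omega_r=(\rho_1^r)^*\Omega_\Lag$ from Lemma \ref{lemma:LagRelatedForms} to identify $\psi_\Lag^*i(X)\Omega_\Lag$ with $\psi^*i(\widetilde X)\Omega_r$, which vanishes by hypothesis. Your explicit remarks on the existence (and non-uniqueness up to $\ker T\rho_1^r$) of the lift match the paper's own observation, so there is nothing to add.
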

\begin{proof}
By definition, a section $\psi \in \Gamma(\rho_M^r)$ is holonomic if the section
$\rho_1^r \circ \psi \in \Gamma(\bar{\pi}^3)$ is holonomic. Hence, $\psi_\Lag = \rho_1^r \circ \psi$
is clearly a holonomic section.

Now, since $\rho_1^r \colon \W_r \to J^3\pi$ is a submersion, for every vector field
$X \in \vf(J^3\pi)$ there exist some vector fields $Y \in \vf(\W_r)$ such that $X$ and $Y$ are
$\rho_1^r$-related. Observe that this vector field $Y$ is not unique because the vector field
$Y + Y_o$, with $Y_o \in \ker T\rho_1^r$ is also $\rho_1^r$-related with $X$. Thus, using this
particular choice of $\rho_1^r$-related vector fields, we have
\begin{align*}
\psi_\Lag^*i(X)\Omega_\Lag &= (\rho_1^r \circ \psi)^*i(X)\Omega_\Lag
= \psi^*((\rho_1^r)^*i(X)\Omega_\Lag) \\
&= \psi^*i(Y)(\rho_1^r)^*\Omega_\Lag
= \psi^*i(Y)\Omega_r \, .
\end{align*}
Since the equality $\psi^*i(Y)\Omega_r = 0$ holds for every $Y \in \vf(\W_r)$, it holds, in
particular, for every $Y \in \vf(\W_r)$ which is $\rho_1^r$-related with $X \in \vf(J^3\pi)$.
Hence we obtain
\begin{equation*}
\psi_\Lag^*i(X)\Omega_\Lag = \psi^*i(Y)\Omega_r = 0 \, . \qedhere
\end{equation*}
\end{proof}

The following diagram illustrates the situation of the above Proposition:
$$
\xymatrix{
\ & \ & \W_r \ar[dll]_{\rho^r_1} \ar[dd]_{\rho^r_M} \\
J^3\pi \ar[drr]^{\bar{\pi}^3} & \ & \ \\
\ & \ & M \ar@/^1pc/@{-->}[ull]^{\psi_\Lag = \rho_1^r \circ \psi} \ar@/_1pc/[uu]_{\psi}
}
$$

Observe that Proposition \ref{prop:UnifToLagSect} states that every section solution to the field
equations in the unified formalism projects to a section solution to the field equations in the
Lagrangian formalism, but it does not establish an equivalence between the solutions. This
equivalence does exist, due to the fact that the map $\rho_1^\Lag \colon \W_\Lag \to J^3\pi$ is a
diffeomorphism. In order to establish this equivalence, we first need the following technical result.

\begin{lemma}\label{lemma:LagTechLemmaSect}
The Poincar\'{e}-Cartan forms defined in $J^3\pi$
satisfy the identities $(\rho_1^\Lag)^*\Theta_\Lag = j_\Lag^*\Theta_r$
and $(\rho_1^\Lag)^*\Omega_\Lag = j_\Lag^*\Omega_r$
\end{lemma}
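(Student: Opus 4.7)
The plan is to read the identity straight off the functoriality of pullbacks, using Lemma \ref{lemma:LagRelatedForms} as the essential input. By definition, $\rho_1^\Lag = \rho_1^r \circ j_\Lag$, so for any $m$-form $\alpha$ on $J^3\pi$ we automatically have $(\rho_1^\Lag)^*\alpha = j_\Lag^* (\rho_1^r)^*\alpha$. Applying this with $\alpha = \Theta_\Lag$ and invoking Lemma \ref{lemma:LagRelatedForms}, which tells us precisely that $(\rho_1^r)^*\Theta_\Lag = \Theta_r$, yields $(\rho_1^\Lag)^*\Theta_\Lag = j_\Lag^*\Theta_r$.

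For the $(m+1)$-form identity I would then simply take the exterior derivative of both sides, using that pullbacks commute with $d$ and that $\Omega_\Lag = -d\Theta_\Lag$ by Definition \ref{def:PoincareCartanForms} while $\Omega_r = -d\Theta_r$ by construction (or, equivalently, apply the same one-line argument as above with $\alpha = \Omega_\Lag$, noting that Lemma \ref{lemma:LagRelatedForms} also yields $(\rho_1^r)^*\Omega_\Lag = \Omega_r$, as the author remarks immediately after that lemma).

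There is no real obstacle here; both identities are formal consequences of the factorization $\rho_1^\Lag = \rho_1^r \circ j_\Lag$ combined with the already-established relation between the Poincar\'e--Cartan forms on $J^3\pi$ and the canonical forms on $\W_r$. The only point worth emphasising in the write-up is that the map $\rho_1^\Lag$ is well-defined as a map into $J^3\pi$ (which is guaranteed by the embedding $j_\Lag \colon \W_\Lag \hookrightarrow \W_r$ followed by the submersion $\rho_1^r$), so no transversality or compatibility check is required.
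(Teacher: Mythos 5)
Your argument is correct, and it reaches the identity by a legitimately shorter route than the paper's: you treat the lemma as a formal corollary of Lemma \ref{lemma:LagRelatedForms}, pulling back the identity $(\rho_1^r)^*\Theta_\Lag = \Theta_r$ by $j_\Lag$ and then commuting $d$ with the pullback for the $(m+1)$-forms. The paper does not invoke Lemma \ref{lemma:LagRelatedForms} here at all; instead it unwinds everything back to the symmetrized Liouville form,
$$
(\rho_1^\Lag)^*\Theta_\Lag=(\rho_1\circ\hat{h}\circ j_\Lag)^*\widetilde{\Leg}^*\Theta_1^s
=(\widetilde{\Leg}\circ\rho_1\circ\hat{h}\circ j_\Lag)^*\Theta_1^s
=(\rho_2\circ\hat{h}\circ j_\Lag)^*\Theta_1^s
=(\hat{h}\circ j_\Lag)^*\Theta=j_\Lag^*\Theta_r\,,
$$
where the decisive step is that $\hat{h}\circ j_\Lag$ takes values in the graph of $\widetilde{\Leg}$ inside $\W$, so that $\widetilde{\Leg}\circ\rho_1$ and $\rho_2$ coincide there. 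That is the point at which the restriction to $\W_\Lag$ does real work, and it is worth being aware that the identity you import from Lemma \ref{lemma:LagRelatedForms} is of exactly this nature: in coordinates $\Theta_r$ carries the free multimomenta $p_\alpha^i,p_\alpha^I$ while $(\rho_1^r)^*\Theta_\Lag$ carries their Legendre-map expressions, and the two genuinely agree only once one sits on $\W_\Lag$. So your one-line deduction is sound as a consequence of the earlier lemma as stated, but it silently leans on the same graph identity one step further back; the paper's self-contained chain is the version that makes the role of $\W_\Lag$ explicit. Your remark that no transversality check is needed, and your handling of the $\Omega$ case via $d$-commutation, match the paper.
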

\begin{proof}
Since the exterior derivative and the pull-back commute, it suffices to prove the statement for
the $m$-forms. We have
\begin{align*}
(\rho_1^\Lag)^*\Theta_\Lag &= (\rho_1^r\circ j_\Lag)^*\Theta_\Lag
= (\rho_1 \circ \hat{h} \circ j_\Lag)^*\Theta_\Lag
= (\rho_1 \circ \hat{h} \circ j_\Lag)^*(\widetilde{\Leg}^*\Theta_1^s) \\
&= (\widetilde{\Leg} \circ \rho_1 \circ \hat{h} \circ j_\Lag)^*\Theta_1^s
= (\rho_2 \circ \hat{h} \circ j_\Lag)^*\Theta_1^s
= (\hat{h} \circ j_\Lag)^*\Theta
= j_\Lag^*\Theta_r \, . \qedhere
\end{align*}
\end{proof}

Now we can state the remaining part of the equivalence between the solutions of the Lagrangian and
unified formalisms.

\begin{proposition}\label{prop:LagToUnifSect}
Let $\psi_\Lag \in \Gamma(\bar{\pi}^3)$ be a holonomic section solution to the field equation
\eqref{eqn:LagDynEqSect}. Then the section
$\psi = j_\Lag \circ (\rho_1^\Lag)^{-1} \circ \psi_\Lag \in \Gamma(\rho_M^r)$
is holonomic and it is a solution to the equation \eqref{eqn:UnifDynEqSect}.
\end{proposition}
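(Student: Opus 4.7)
\

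The plan is to verify the three things required: that $\psi$ is a well-defined section of $\rho_M^r$, that $\psi$ is holonomic in $\W_r$, and that $\psi$ satisfies the unified field equation \eqref{eqn:UnifDynEqSect}. The first two are almost automatic, while the third is where the real content lies; this is where I expect the main technical point, namely handling vector fields on $\W_r$ that are not tangent to $\W_\Lag$.

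First I would check that $\psi$ lands in $\W_r$ correctly. Since $\rho_1^\Lag = \rho_1^r \circ j_\Lag$ is a diffeomorphism (Proposition \ref{prop:LagDiffWL}), one computes $\rho_1^r \circ \psi = \rho_1^r \circ j_\Lag \circ (\rho_1^\Lag)^{-1} \circ \psi_\Lag = \psi_\Lag$. Hence $\rho_M^r \circ \psi = \bar{\pi}^3 \circ \psi_\Lag = \mathrm{Id}_M$, so $\psi \in \Gamma(\rho_M^r)$. The same identity $\rho_1^r \circ \psi = \psi_\Lag$ shows that $\rho_1^r \circ \psi$ is holonomic in $J^3\pi$, which by the definition of holonomic section in $\W_r$ means that $\psi$ itself is holonomic.

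For the field equation, by Proposition \ref{prop:UnifDynEqSectTangent} (which applies because $\psi$ is holonomic and takes values in $\W_\Lag$ by construction), it suffices to prove
\[
\psi^* i(Y)\Omega_r = 0 \quad \text{for every } Y \in \vf(\W_r) \text{ tangent to } \W_\Lag.
\]
Given such a $Y$, let $\bar Y \in \vf(\W_\Lag)$ be its restriction, so that $\bar Y$ is $j_\Lag$-related to $Y$. Because $\rho_1^\Lag$ is a diffeomorphism, there is a unique vector field $X \in \vf(J^3\pi)$ which is $\rho_1^\Lag$-related to $\bar Y$. Writing $\psi = j_\Lag \circ (\rho_1^\Lag)^{-1} \circ \psi_\Lag$ and using Lemma \ref{lemma:LagTechLemmaSect} to identify $j_\Lag^*\Omega_r = (\rho_1^\Lag)^*\Omega_\Lag$, I would compute
\begin{align*}
\psi^* i(Y)\Omega_r &= \big((\rho_1^\Lag)^{-1} \circ \psi_\Lag\big)^* j_\Lag^* i(Y)\Omega_r \\
&= \big((\rho_1^\Lag)^{-1} \circ \psi_\Lag\big)^* i(\bar Y)(\rho_1^\Lag)^*\Omega_\Lag \\
&= \psi_\Lag^* i(X)\Omega_\Lag,
\end{align*}
which vanishes by hypothesis \eqref{eqn:LagDynEqSect}.

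The main obstacle I anticipate is purely bookkeeping: the fact that a generic $X \in \vf(\W_r)$ need not be tangent to $\W_\Lag$ would prevent a direct relation-based argument, but this is precisely circumvented by Proposition \ref{prop:UnifDynEqSectTangent}, whose hypothesis (holonomy of $\psi$) has already been established. Once that reduction is made, the proof is a straightforward pull-back chase using Lemma \ref{lemma:LagTechLemmaSect} and the diffeomorphism $\rho_1^\Lag$, together with the assumed Lagrangian field equation for $\psi_\Lag$.
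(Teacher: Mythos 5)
Your proposal is correct and follows essentially the same route as the paper's proof: establish $\rho_1^r\circ\psi=\psi_\Lag$ to get holonomy, reduce via Proposition \ref{prop:UnifDynEqSectTangent} to vector fields tangent to $\W_\Lag$, and then run the pull-back chain through $j_\Lag$ and the diffeomorphism $\rho_1^\Lag$ using Lemma \ref{lemma:LagTechLemmaSect}. The only difference is cosmetic — you invoke the tangency reduction at the start whereas the paper invokes it at the end.
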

\begin{proof}
By definition, a section $\psi \in \Gamma(\rho_M^r)$ is holonomic if the section
$\rho_1^r \circ \psi \in \Gamma(\bar{\pi}^3)$ is holonomic. Computing, we have
$$
\rho_1^r \circ \psi = \rho_1^r \circ j_\Lag \circ (\rho_1^\Lag)^{-1} \circ \psi_\Lag = \psi_\Lag \, ,
$$
since
$\rho_1^r \circ j_\Lag = \rho_1^\Lag \Leftrightarrow \rho_1^r \circ j_\Lag \circ (\rho_1^\Lag)^{-1} =
\textnormal{Id}_{J^3\pi}$.
Hence, since $\psi_\Lag$ is holonomic, the section $\psi = j_\Lag \circ (\rho_1^\Lag)^{-1} \circ \psi_\Lag$
is holonomic in $\W_r$.

Now, since $j_\Lag \colon \W_\Lag \to \W_r$ is an embedding, for every vector field $X \in \vf(\W_r)$
tangent to $\W_\Lag$, there exists a unique vector field $Y \in \vf(\W_\Lag)$ which is $j_\Lag$-related
with $X$. Hence, let us assume that $X \in \vf(\W_r)$ is tangent to $\W_\Lag$. Then we have
$$
\psi^*i(X)\Omega_r = (j_\Lag \circ (\rho_1^\Lag)^{-1} \circ \psi_\Lag)^*i(X)\Omega_r
= ((\rho_1^\Lag)^{-1} \circ \psi_\Lag)^*i(Y)j_\Lag^*\Omega_r \, .
$$
Applying Lemma \ref{lemma:LagTechLemmaSect} we obtain
\begin{align*}
((\rho_1^\Lag)^{-1} \circ \psi_\Lag)^*i(Y)j_\Lag^*\Omega_r
&= ((\rho_1^\Lag)^{-1} \circ \psi_\Lag)^*i(Y)(\rho_1^\Lag)^*\Omega_\Lag \\
&= (\rho_1^\Lag \circ (\rho_1^\Lag)^{-1} \circ \psi_\Lag)^*i(Z)\Omega_\Lag = \psi_\Lag^*i(Z)\Omega_\Lag \, ,
\end{align*}
where $Z \in \vf(J^3\pi)$ is the unique vector field related with $Y$ by the diffeomorphism $\rho_1^\Lag$.
Hence, as $\psi_\Lag^*i(Z)\Omega_\Lag = 0$, for every $Z \in \vf(J^3\pi)$ by hypothesis, we have
proved that the section
$\psi = j_\Lag \circ (\rho_1^\Lag)^{-1} \circ \psi_\Lag \in \Gamma(\rho_M^r)$ satisfies the equation
$$
\psi^*i(X)\Omega_r = 0 \, , \quad \mbox{for every } X \in \vf(\W_r) \mbox{ tangent to }\W_\Lag \, .
$$
However, from Proposition \ref{prop:UnifDynEqSectTangent} we know that if $\psi \in \Gamma(\rho_M^r)$
is a holonomic section, then the last equation is equivalent to the equation
\eqref{eqn:UnifDynEqSect}, that is,
\begin{equation*}
\psi^*i(X)\Omega_r = 0 \, , \quad \mbox{for every } X \in \vf(\W_r) \, . \qedhere
\end{equation*}
\end{proof}

Let us compute the local equation for the section
$\psi_\Lag = \rho_1^r \circ \psi \in \Gamma(\bar{\pi}^3)$.
Assume that the section $\psi \in \Gamma(\rho_M^r)$ is given locally by
$\psi(x^i) = (x^i,u^\alpha,u^\alpha_i,u^\alpha_{I},u^\alpha_{J},p_\alpha^i,p_\alpha^{I})$.
Since $\psi$ is a holonomic section solution to equation \eqref{eqn:UnifDynEqSect},
it must satisfy the local equations \eqref{eqn:UnifDynEqSectLocal},
\eqref{eqn:UnifDynEqSectRelationMomenta} and \eqref{eqn:UnifDynEqSectHolonomy}.
The equations \eqref{eqn:UnifDynEqSectHolonomy} are automatically satisfied as a consequence of the
assumption of $\psi$ being holonomic. Now, taking into account that $\psi$ takes values in the
submanifold $\W_\Lag \cong \textnormal{graph}(\Leg)$, the equations \eqref{eqn:UnifDynEqSectLocal}
and \eqref{eqn:UnifDynEqSectRelationMomenta} can be $\rho_1^r$-projected to $J^3\pi$, thus giving
the following system of $n$ partial differential equations for the component functions of the
section $\psi_\Lag = \rho_1^r \circ \psi$
$$
\restric{\derpar{L}{u^\alpha}}{\psi_\Lag} - \restric{\frac{d}{dx^i} \, \derpar{L}{u_i^\alpha}}{\psi_\Lag} + \restric{\sum_{|I|=2} \frac{d^{|I|}}{dx^{I}} \, \derpar{L}{u_I^\alpha}}{\psi_\Lag} = 0 \, , \quad 1 \leqslant \alpha \leqslant n \, ,
$$
where the section $\psi_\Lag$ is locally given by
$\psi_\Lag(x^i) = (x^i,u^\alpha,u^\alpha_i,u^\alpha_{I},u^\alpha_{J})$. Finally, since $\psi_\Lag$
is holonomic in $J^3\pi$, there exists a section $\phi \in \Gamma(\pi)$ with coordinate expression
$\phi(x^i) = (x^i,u^\alpha(x^i))$ satisfying $j^3\phi = \psi_\Lag$. Then, the above equations can
be rewritten as follows
\begin{equation} \label{eqn:EulerLagrangeLagrangian}
\restric{\derpar{L}{u^\alpha}}{j^3\phi} - \restric{\frac{d}{dx^i} \, \derpar{L}{u_i^\alpha}}{j^3\phi} + \restric{\sum_{|I|=2} \frac{d^{|I|}}{dx^{I}} \, \derpar{L}{u_I^\alpha}}{j^3\phi} = 0 \, \quad 1 \leqslant \alpha \leqslant n \, ,
\end{equation}
Therefore, we obtain the Euler-Lagrange equations for a second-order field theory.

\subsection{Field equations for multivector fields}

\begin{lemma}\label{lemma:LagRelatedMultiVF}
Let $\X \in \vf^{m}(\W_r)$ be a multivector field tangent to $\W_\Lag \hookrightarrow \W_r$. Then
there exists a unique multivector field $\X_\Lag \in \vf^m(J^3\pi)$ such that
$\X_\Lag \circ \rho_1^r \circ j_\Lag = \Lambda^mT\rho_1^r \circ \X \circ j_\Lag$.

 Conversely, if $\X_\Lag \in \vf^m(J^3\pi)$, then there exists a unique multivector
field $\X \in \vf(\W_r)$ tangent to $\W_\Lag$ such that
$\X_\Lag \circ \rho_1^r \circ j_\Lag = \Lambda^mT\rho_1^r \circ \X \circ j_\Lag$.
\end{lemma}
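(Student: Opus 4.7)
The strategy is to leverage the diffeomorphism $\rho_1^\Lag = \rho_1^r \circ j_\Lag \colon \W_\Lag \to J^3\pi$ of Proposition \ref{prop:LagDiffWL} to transport multivector fields between $\W_\Lag$ and $J^3\pi$, and to use the tangency hypothesis to pass between $\W_\Lag$ and $\W_r$ via the embedding $j_\Lag$. The relating identity in the statement is only a condition on points of $\W_\Lag$ (note the composition with $j_\Lag$), so genuine uniqueness occurs on $\W_\Lag$ and must be interpreted accordingly on $\W_r$.

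For the forward direction, given $\X \in \vf^m(\W_r)$ tangent to $\W_\Lag$, the tangency condition (which, as recalled in Section \ref{sec:UnifFieldEquationsMultiVF}, amounts to each factor $X_j$ of a local decomposition being tangent to $\W_\Lag$) ensures that $\X \circ j_\Lag$ takes values in $\Lambda^m T\W_\Lag$, viewed as a subbundle of $j_\Lag^*\Lambda^m T\W_r$. I would then define
\[
\X_\Lag = \Lambda^m T\rho_1^\Lag \circ (\X\circ j_\Lag) \circ (\rho_1^\Lag)^{-1} \in \vf^m(J^3\pi),
\]
which is well defined and smooth because $\rho_1^\Lag$ is a diffeomorphism. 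A direct unwinding, using $\rho_1^r \circ j_\Lag = \rho_1^\Lag$, shows that $\X_\Lag$ satisfies the stated identity, and uniqueness in this direction follows because $T\rho_1^\Lag$ is a pointwise isomorphism.

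For the converse, given $\X_\Lag \in \vf^m(J^3\pi)$, I would first transport it to $\W_\Lag$ by the inverse diffeomorphism, setting $\X^0 = \Lambda^m T(\rho_1^\Lag)^{-1} \circ \X_\Lag \circ \rho_1^\Lag \in \vf^m(\W_\Lag)$, and then push it forward along the embedding $j_\Lag$. This produces a multivector field defined on $j_\Lag(\W_\Lag) \subset \W_r$ that is, by construction, tangent to $\W_\Lag$ and satisfies the required identity. To obtain a global $\X \in \vf^m(\W_r)$ one extends this germ to an open neighbourhood of $\W_\Lag$ by choosing local coordinates adapted to $\W_\Lag$ and extending the components to be constant in the transverse directions, and then glues with a partition of unity subordinate to a cover by such neighbourhoods together with $\W_r\setminus\W_\Lag$ (where we put zero away from a tubular neighbourhood of $\W_\Lag$). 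This extension preserves tangency to $\W_\Lag$.

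The main obstacle, and the only delicate point, is the meaning of uniqueness of $\X$ in the converse: the extension off $\W_\Lag$ is clearly not unique, since any multivector field vanishing identically on $\W_\Lag$ may be added. The relating equation $\X_\Lag\circ\rho_1^r\circ j_\Lag = \Lambda^m T\rho_1^r\circ\X\circ j_\Lag$ only constrains $\X$ along $\W_\Lag$, and there the tangency hypothesis together with the isomorphism $T\rho_1^\Lag$ pins $\X|_{\W_\Lag}$ down uniquely. Thus the assertion must be read as uniqueness of the restriction $\X \circ j_\Lag$ (equivalently, of the germ along $\W_\Lag$), which is exactly what is needed in subsequent arguments, since the solutions of the field equations \eqref{eqn:UnifDynEqMultiVF} are supported on $\W_\Lag$ anyway.
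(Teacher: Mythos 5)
Your proposal is correct and, in the forward direction, is essentially identical to the paper's own proof: tangency gives the unique $\X_o\in\vf^m(\W_\Lag)$ with $\Lambda^m Tj_\Lag\circ\X_o=\X\circ j_\Lag$, the diffeomorphism $\rho_1^\Lag$ of Proposition \ref{prop:LagDiffWL} transports $\X_o$ to a unique $\X_\Lag$, and the displayed chain of identities verifies the relating equation. (Your formula $\X_\Lag=\Lambda^m T\rho_1^\Lag\circ(\X\circ j_\Lag)\circ(\rho_1^\Lag)^{-1}$ silently identifies $\Lambda^m T\W_\Lag$ with its image under $\Lambda^m Tj_\Lag$, which is exactly the paper's intermediate object $\X_o$.)

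Where you genuinely add something is in the converse, which the paper dispatches with the single sentence ``reversing this reasoning.'' Reversing the argument literally only produces a multivector field on the closed submanifold $j_\Lag(\W_\Lag)$, so some extension to $\W_r$ (adapted coordinates plus a partition of unity, as you describe) is needed to land in $\vf^m(\W_r)$, and that extension is manifestly non-unique: any multivector field vanishing along $\W_\Lag$ can be added without disturbing the relating identity, since that identity is precomposed with $j_\Lag$. Your reading — that the uniqueness asserted in the converse can only mean uniqueness of $\X\circ j_\Lag$, equivalently of $\X_o$ — is the correct interpretation and is all that the subsequent applications (Theorem \ref{thm:UnifToLagMultiVF}, where everything is restricted to $\W_\Lag$ anyway) actually use. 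So your proof is not merely correct; it repairs an imprecision in the statement that the paper's own proof leaves unaddressed.
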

\begin{proof}
Since the multivector field $\X$ is tangent to $\W_\Lag$, there exists a unique multivector field
$\X_o \in \vf^{m}(\W_\Lag)$ which is $j_\Lag$-related to $\X$, that is,
$\Lambda^mT j_\Lag \circ \X_o = \X \circ j_\Lag$. Furthermore, since
$\rho_1^\Lag \colon \W_\Lag \to J^3\pi$ is a diffeomorphism, there is a unique multivector field
$\X_\Lag \in \vf^{m}(J^3\pi)$ which is $\rho_1^\Lag$-related to $\X_o$; that is,
$\X_\Lag \circ \rho_1^\Lag = \Lambda^mT j_1^\Lag \circ \X_o$. Then, computing we have
\begin{align*}
\X_\Lag \circ \rho_1^r \circ j_\Lag &= \X_\Lag \circ \rho_1^\Lag
= \Lambda^mT\rho_1^\Lag \circ \X_o \\
&= \Lambda^mT\rho_1^r \circ \Lambda^mT j_\Lag \circ \X_o
= \Lambda^mT\rho_1^r \circ \X \circ j_\Lag \, .
\end{align*}
The converse is proved reversing this reasoning.
\end{proof}

The above result states that there is a $1$-to-$1$ correspondence between the set of multivector
fields $\X \in \vf^m(\W_r)$ tangent to $\W_\Lag$ and the set of multivector fields
$\X_\Lag \in \vf^{m}(J^3\pi)$, which makes the following diagram commutative
$$
\xymatrix{
\ & \ & \ & \Lambda^mT\W_r \ar[dlll]_{\Lambda^mT\rho_1^r} \\
\Lambda^mT J^3\pi & \ & \ & \Lambda^mT\W_\Lag \ar[lll]^{\Lambda^mT\rho_1^\Lag} \ar@{^{(}->}[u]^{\Lambda^mT j_\Lag} \\
 \ & \ & \ & \ \\
\ & \ & \ & \W_r \ar[dlll]_{\rho_1^r} \ar@/_2.25pc/[uuu]_{\X} \\
J^3\pi \ar[uuu]^{\X_\Lag} & \ & \ & \W_\Lag \ar[lll]^{\rho_1^\Lag} \ar@{_{(}->}[u]_{j_\Lag} \ar@/^1.5pc/[uuu]^{\X_o}|(.32)\hole
}
$$

As a consequence, we obtain the following result:

\begin{theorem}\label{thm:UnifToLagMultiVF}
Let $\X \in \vf^m(\W_r)$ be a locally decomposable holonomic multivector field solution to the
equation \eqref{eqn:UnifDynEqMultiVF} (at least on the points of a submanifold
$\W_f \hookrightarrow \W_\Lag$) and tangent to $\W_\Lag$ (resp. tangent to $\W_f$). Then there
exists a unique locally decomposable holonomic multivector field $\X_\Lag \in \vf^m(J^3\pi)$
solution to the equation
\begin{equation}\label{eqn:LagDynEqMultiVF}
i(\X_\Lag)\Omega_\Lag = 0 \, ,
\end{equation}
(at least on the points of $S_f = \rho_1^\Lag(\W_f)$, and tangent to $S_f$).

Conversely, if $\X_\Lag \in \vf^{m}(J^3\pi)$ is a locally decomposable holonomic multivector field
solution to the equation \eqref{eqn:LagDynEqMultiVF} (at least on the points of a submanifold
$S_f \hookrightarrow J^3\pi$, and tangent to $S_f$), then there exists a unique locally decomposable
holonomic multivector field $\X \in \vf^{m}(\W_r)$ which is a solution to the equation
\eqref{eqn:UnifDynEqMultiVF} (at least on the points of $(\rho_1^\Lag)^{-1}(S_f) \hookrightarrow \W_\Lag$),
and tangent to $\W_\Lag$ (resp. tangent to $\W_f$).
\end{theorem}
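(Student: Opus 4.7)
The strategy is to reduce the theorem to the bijection already established in Lemma \ref{lemma:LagRelatedMultiVF} and to the identity $(\rho_1^\Lag)^*\Omega_\Lag = j_\Lag^*\Omega_r$ of Lemma \ref{lemma:LagTechLemmaSect}. The diffeomorphism $\rho_1^\Lag \colon \W_\Lag \to J^3\pi$ of Proposition \ref{prop:LagDiffWL} is what makes the correspondence sharp.

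First I would invoke Lemma \ref{lemma:LagRelatedMultiVF} to associate $\X_\Lag \in \vf^m(J^3\pi)$ to the given $\X \in \vf^m(\W_r)$ (tangent to $\W_\Lag$) via the unique intermediate multivector field $\X_o \in \vf^m(\W_\Lag)$ satisfying $\Lambda^m T j_\Lag \circ \X_o = \X \circ j_\Lag$ and $\Lambda^m T \rho_1^\Lag \circ \X_o = \X_\Lag \circ \rho_1^\Lag$. Locally decomposability transfers through $j_\Lag$-relation and through the diffeomorphism $\rho_1^\Lag$, and in the converse direction the same lemma produces a unique $\X$ tangent to $\W_\Lag$ from any given $\X_\Lag$.

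The next step is to translate the field equation. Using $j_\Lag$-relatedness of $\X_o$ and $\X$, the tangency of $\X$ to $\W_\Lag$ and Lemma \ref{lemma:LagTechLemmaSect}, I would compute
\[
(\rho_1^\Lag)^* \, i(\X_\Lag)\Omega_\Lag \;=\; i(\X_o)\,(\rho_1^\Lag)^*\Omega_\Lag \;=\; i(\X_o)\, j_\Lag^*\Omega_r \;=\; j_\Lag^*\, i(\X)\Omega_r .
\]
Since $\rho_1^\Lag$ is a diffeomorphism, $i(\X_\Lag)\Omega_\Lag = 0$ on $J^3\pi$ is equivalent to $j_\Lag^*\, i(\X)\Omega_r = 0$ on $\W_\Lag$. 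In the regular case this is exactly the hypothesis; in the singular (almost-regular) case the hypothesis $i(\X)\Omega_r|_{\W_f} = 0$ together with $\W_f \hookrightarrow \W_\Lag$ yields the vanishing on $\W_f$, and via $\rho_1^\Lag$ gives $i(\X_\Lag)\Omega_\Lag = 0$ on $S_f = \rho_1^\Lag(\W_f)$. The converse implication is obtained by reversing the same chain of equalities.

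For the holonomy and tangency conditions I would argue as follows. A multivector field on $\W_r$ is holonomic iff its integral sections $\psi$ are holonomic in the sense that $\rho_1^r\circ\psi$ is a holonomic section of $\bar{\pi}^3$; but the $\rho_1^\Lag$-relation above means that the integral sections of $\X_\Lag$ are precisely of the form $\rho_1^r\circ\psi$, so holonomy on one side is the holonomy on the other. Similarly, a multivector field tangent to $\W_\Lag$ restricts to $\X_o$, whose $\rho_1^\Lag$-image is $\X_\Lag$, and tangency of $\X$ to $\W_f \hookrightarrow \W_\Lag$ transfers through the diffeomorphism to tangency of $\X_\Lag$ to $S_f = \rho_1^\Lag(\W_f)$, with $\rho_1^\Lag$ restricting to a diffeomorphism between $\W_f$ and $S_f$.

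The routine but slightly delicate point I would expect to be the main obstacle is bookkeeping in the singular case: one must check that the submanifold $\W_f$ produced by the constraint algorithm in $\W_r$ and its image $S_f$ in $J^3\pi$ are genuinely identified by the diffeomorphism $\rho_1^\Lag$, and that tangency is preserved. However, once one notes that $\W_f \subset \W_\Lag$ and that $\rho_1^\Lag$ is a global diffeomorphism, this reduces to transporting submanifolds and their tangent distributions by a diffeomorphism, so no additional analytic work is required beyond what Lemmas \ref{lemma:LagRelatedMultiVF} and \ref{lemma:LagTechLemmaSect} already provide.
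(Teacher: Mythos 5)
Your proposal is correct and follows essentially the same route as the paper: the field equation is transported through the correspondence of Lemma \ref{lemma:LagRelatedMultiVF} together with the pull-back identity for $\Omega_\Lag$ (you phrase it via Lemma \ref{lemma:LagTechLemmaSect} on $\W_\Lag$, the paper equivalently via Lemma \ref{lemma:LagRelatedForms} restricted to $\W_\Lag$), and holonomy is transferred by matching integral sections through $\rho_1^\Lag$, exactly as in the paper's argument.
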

\begin{proof}
Applying Lemmas \ref{lemma:LagRelatedForms} and \ref{lemma:LagRelatedMultiVF}, we have
\begin{align*}
\restric{i(\X)\Omega_r}{\W_\Lag}
&= \restric{i(\X)(\rho_1^r)^*\Omega_\Lag}{\W_\Lag}
= \restric{(\rho_1^r)^*i(\X_\Lag)\Omega_\Lag}{\W_\Lag} \\
&= \restric{i(\X_\Lag)\Omega_\Lag}{\rho_1^r(\W_\Lag)}
= \restric{i(\X_\Lag)\Omega_\Lag}{J^3\pi} \, .
\end{align*}
Hence, $\X_\Lag$ is a solution to the equation $i(\X_\Lag)\Omega_\Lag = 0$ if, and only if,
$\X$ is a solution to the equation $i(\X)\Omega_r = 0$.

Now we must prove that $\X_\Lag$ is holonomic if, and only if, $\X$ is holonomic.
Observe that, following the same reasoning as above, we have
\begin{align*}
\restric{i(\X)(\rho_M^r)^*\eta}{\W_\Lag} &= \restric{i(\X)(\bar{\pi}^3 \circ \rho_1^r)^*\eta}{\W_\Lag}
= \restric{(\rho_1^r)^*i(\X_\Lag)(\bar{\pi}^3)^*\eta}{\W_\Lag} \\
&= \restric{i(\X_\Lag)(\bar{\pi}^3)^*\eta}{\rho_1^r(\W_\Lag)}
= \restric{i(\X_\Lag)(\bar{\pi}^3)^*\eta}{J^3\pi} \, .
\end{align*}
Hence, $\X_\Lag$ is $\bar{\pi}^3$-transverse if, and only if, $\X$ is $\rho_M^r$-transverse.

Now, let us assume that $\X \in \vf^m(\W_r)$ is holonomic, and let $\psi \in \Gamma(\rho_M^r)$ be
an integral section of $\X$. Then, the section $\psi_\Lag = \rho_1^r \circ \psi \in \Gamma(\bar{\pi}^3)$
is holonomic by definition, and we have
$$
\X_\Lag \circ \psi_\Lag = \X_\Lag \circ \rho_1^r \circ \psi
= \Lambda^mT\rho_1^r \circ \X \circ \psi
= \Lambda^mT\rho_1^r \circ \psi^\prime
= \psi^\prime_\Lag \, ,
$$
where $\psi^\prime \colon M \to \Lambda^mT\W_r$ is the canonical lifting of $\psi$ to
$\Lambda^mT\W_r$. That is, $\psi_\Lag$ is an integral section of $\X_\Lag$. Hence, if $\X$ is
holonomic, then $\X_\Lag$ is holonomic.

For the converse, let us assume that $\X_\Lag \in \vf^m(J^3\pi)$ is holonomic, and let
$\psi_\Lag \in \Gamma(\bar{\pi}^3)$ be an integral section of $\X_\Lag$. Then, the section
$\psi = j_\Lag \circ (\rho_1^\Lag)^{-1} \circ \psi_\Lag \in \Gamma(\rho_M^3)$ satisfies
$$
\rho_1^r \circ \psi = \rho_1^r \circ j_\Lag \circ (\rho_1^\Lag)^{-1} \circ \psi_\Lag = \psi_\Lag \, ,
$$
since
$\rho_1^r \circ j_\Lag = \rho_1^\Lag
\Leftrightarrow \rho_1^r \circ j_\Lag \circ (\rho_1^\Lag)^{-1} = \textnormal{Id}_{J^3\pi}$.
Therefore, the section $\psi$ is holonomic. Finally,
since the multivector field $\X$ is tangent to $\W_\Lag$, there exists a unique multivector
field $\X_o \in \vf^{m}(\W_\Lag)$ satisfying $\Lambda^mT j_\Lag \circ \X_o = \X \circ j_\Lag$.
In addition, since the map $\rho_1^\Lag$ is a diffeomorphism, $\X_\Lag$ and $\X_o$ are
$(\rho_1^\Lag)^{-1}$-related; that is,
$\X_o \circ (\rho_1^\Lag)^{-1} = (\Lambda^mT\rho_1^\Lag)^{-1} \circ \X_\Lag$. Then we have
\begin{align*}
\X \circ \psi &= \X \circ j_\Lag \circ (\rho_1^\Lag)^{-1} \circ \psi_\Lag
= \Lambda^mT j_\Lag \circ \X_o \circ (\rho_1^\Lag)^{-1} \circ \psi_\Lag \\
&= \Lambda^mT j_\Lag \circ (\Lambda^mT\rho_1^\Lag)^{-1} \circ \X_\Lag \circ \psi_\Lag
= \Lambda^mT j_\Lag \circ (\Lambda^mT\rho_1^\Lag)^{-1} \circ \psi^\prime_\Lag \\
&= (j_\Lag \circ (\rho_1^\Lag)^{-1} \circ \psi_\Lag)^\prime
= \psi^\prime \, .
\end{align*}
Hence, $\psi$ is an integral section of $\X$. Therefore, $\X$ is holonomic if, and only if, $\X_\Lag$
is holonomic.
\end{proof}

Let $\X_\Lag \in \vf^m(J^3\pi)$ be a locally decomposable multivector field. From the results in
\cite{art:Echeverria_Munoz_Roman98} we know that $\X_\Lag$ admits the following local expression
\begin{equation}\label{eqn:LagGenericMultiVFLocal}
\X = f \bigwedge_{j=1}^{m}
\left(  \derpar{}{x^j} + F_j^\alpha\derpar{}{u^\alpha} + F_{i,j}^\alpha\derpar{}{u_i^\alpha} + F_{I,j}^\alpha\derpar{}{u_{I}^\alpha}
+ F_{J,j}^\alpha\derpar{}{u_{J}^\alpha} \right) \, .
\end{equation}
Taking $f = 1$ as a representative of the equivalence class, since $\X_\Lag$ is required to be
holonomic, it must satisfy the equations \eqref{eqn:MultiVFHolonomyLocal} with $k = 3$ and $r = 1$, that is,
$$
F_j^\alpha = u_j^\alpha \quad ; \quad
F_{i,j}^\alpha = u_{1_i+1_j} \quad ; \quad
F_{I,j}^\alpha = u_{I + 1_j} \, .
$$
In addition, $\X_\Lag$ is a solution to the equation \eqref{eqn:LagDynEqMultiVF}. Bearing in mind the
local equations for the multivector field $\X$, we obtain that the local equations for the component
functions of $\X_\Lag$ are
\begin{align*}
\derpar{\hat{L}}{u^\alpha}
- \frac{d}{dx^i} \, \derpar{\hat{L}}{u_i^\alpha}
&+ \sum_{|I|=2} \frac{d^{|I|}}{dx^{I}} \, \derpar{\hat{L}}{u_{I}^\alpha} \\
&\quad{} + \sum_{i=1}^{m}\sum_{j=1}^{m}\frac{1}{n(ij)}\left( F_{I+1_j,i}^\beta - \frac{d}{dx^i} \, u_{I+1_j}^\beta \right) \derpars{\hat{L}}{u_I^\beta}{u_{1_i+1_j}^\alpha} = 0 \, .
\end{align*}

\begin{theorem}
The following assertions on a section $\phi \in \Gamma(\pi)$ are equivalent:
\begin{enumerate}
\item $j^{3}\phi$ is a solution to equation \eqref{eqn:LagDynEqSect}, that is,
$$
(j^3\phi)^*i(X)\Omega_\Lag = 0 \, , \quad \mbox{for every } X \in \vf(J^3\pi) \, .
$$
\item In natural coordinates, if $\phi$ is given by $\phi(x^i) = (x^i,u^\alpha)$, then
its $3$rd prolongation $j^{3}\phi(x^i) = (x^i,u^\alpha,u^\alpha_i,u^\alpha_I,u^\alpha_J)$
is a solution to the Euler-Lagrange equations given by \eqref{eqn:EulerLagrangeLagrangian},
that is,
$$
\restric{\derpar{L}{u^\alpha}}{j^3\phi} - \restric{\frac{d}{dx^i} \, \derpar{L}{u^\alpha_i}}{j^3\phi}
+ \restric{\sum_{|I|=2}\frac{d^{|I|}}{dx^I} \, \derpar{L}{u^\alpha_I}}{j^3\phi} = 0 \, .
$$
\item $\psi_\Lag = j^3\phi$ is a solution to the equation
\begin{equation*}
i(\Lambda^m\psi_\Lag^\prime)(\Omega_\Lag \circ \psi_\Lag) = 0 \, ,
\end{equation*}
where $\Lambda^m\psi_\Lag^\prime \colon M \to \Lambda^mT(J^3\pi)$ is the canonical lifting of
$\psi_\Lag$.
\item $j^3\phi$ is an integral section of a multivector field belonging to a class of locally
decomposable holonomic multivector fields $\{ \X_\Lag \} \subset \vf^{m}(J^3\pi)$ satisfying
equation \eqref{eqn:LagDynEqMultiVF}, that is,
$$
i(\X_\Lag)\Omega_\Lag = 0 \, .
$$
\end{enumerate}
\end{theorem}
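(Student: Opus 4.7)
My plan is to reduce each equivalence to the analogous statement in the unified formalism, which was already established in Theorem \ref{thm:EquivalenceTheoremUnified}, using the transfer results between $\W_r$ and $J^3\pi$ that have been set up: Propositions \ref{prop:UnifToLagSect} and \ref{prop:LagToUnifSect} for sections, and Theorem \ref{thm:UnifToLagMultiVF} for multivector fields. The unified side gives a ``hub'' through which all four Lagrangian statements pass, avoiding any redundant coordinate computation.

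For $(1 \Longleftrightarrow 2)$, the forward direction was essentially carried out in the paragraph preceding equation \eqref{eqn:EulerLagrangeLagrangian}: one starts from the given holonomic section $\psi_\Lag = j^3\phi$, uses $(\rho_1^\Lag)^{-1}$ to lift it to a holonomic $\psi \in \Gamma(\rho_M^r)$ taking values in $\W_\Lag$, applies Proposition \ref{prop:LagToUnifSect} to conclude $\psi$ solves \eqref{eqn:UnifDynEqSect}, and then $\rho_1^r$-projects the local equations \eqref{eqn:UnifDynEqSectLocal}--\eqref{eqn:UnifDynEqSectRelationMomenta} down to $J^3\pi$; the holonomy of $\phi$ automatically handles \eqref{eqn:UnifDynEqSectHolonomy} and makes $\partial / \partial x^i$ coincide with $d/dx^i$ when evaluated on $j^3\phi$. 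For the converse, any $\phi$ satisfying \eqref{eqn:EulerLagrangeLagrangian} lifts via $j_\Lag \circ (\rho_1^\Lag)^{-1}$ to a holonomic unified section $\psi$ whose component equations are exactly \eqref{eqn:EquivalenceTheoremUnifiedLocal}; then Theorem \ref{thm:EquivalenceTheoremUnified} gives $\psi^* i(X)\Omega_r = 0$, and Proposition \ref{prop:UnifToLagSect} together with Lemma \ref{lemma:LagRelatedForms} (which gives $(\rho_1^r)^*\Omega_\Lag = \Omega$ when restricted appropriately) yields \eqref{eqn:LagDynEqSect}.

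For $(1 \Longleftrightarrow 3)$, the equivalence is again transferred from the unified setting: if $\psi_\Lag = j^3\phi$ and $\Lambda^m\psi_\Lag'$ is its canonical $m$-vector lifting, then by naturality of the inner product one has $(j^3\phi)^* i(X)\Omega_\Lag = i(\Lambda^m\psi_\Lag')(\Omega_\Lag \circ \psi_\Lag)$ evaluated against the $m$ canonical vectors of $M$; this is the standard Cartan-type identity that also appeared, mutatis mutandis, in the ``$2 \Longleftrightarrow 3$'' step of Theorem \ref{thm:EquivalenceTheoremUnified}. I would simply mimic that coordinate computation with the local expression of $\Theta_\Lag$ given earlier, noting that for a holonomic prolongation $j^3\phi$ the pure holonomy terms vanish identically so the only surviving equations are \eqref{eqn:EulerLagrangeLagrangian}.

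For $(1 \Longleftrightarrow 4)$, the idea is: given $\{\X_\Lag\}$ a class of locally decomposable holonomic $m$-vector fields on $J^3\pi$ solving \eqref{eqn:LagDynEqMultiVF} and having $j^3\phi$ as an integral section, Theorem \ref{thm:UnifToLagMultiVF} produces a unique locally decomposable holonomic $\X \in \vf^m(\W_r)$ tangent to $\W_\Lag$ solving \eqref{eqn:UnifDynEqMultiVF}; the proof of that theorem shows that integral sections correspond under $j_\Lag \circ (\rho_1^\Lag)^{-1}$, so the holonomic lift $\psi$ of $j^3\phi$ is integral for $\X$. Then Theorem \ref{thm:EquivalenceTheoremUnified} $(4 \Rightarrow 1)$ gives $\psi^* i(Y)\Omega_r = 0$, and Proposition \ref{prop:UnifToLagSect} projects this to $(j^3\phi)^* i(X)\Omega_\Lag = 0$. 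Conversely, starting from (1), apply Proposition \ref{prop:LagToUnifSect} to obtain a unified solution $\psi$, apply Theorem \ref{thm:EquivalenceTheoremUnified} to produce a unified holonomic $\X$ tangent to $\W_\Lag$ having $\psi$ as integral section, and use the converse direction of Theorem \ref{thm:UnifToLagMultiVF} to push $\X$ down to a holonomic $\X_\Lag$ having $j^3\phi$ as integral section.

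The main obstacle I anticipate is the bookkeeping around the holonomy assumption on $\phi$ versus the holonomy of the unified lift and of the multivector fields: in the unified formalism holonomy had to be imposed as an extra hypothesis (the remark after Proposition \ref{prop:UnifDynEqSectTangent}), so at each transfer step I must verify both that the lift of $j^3\phi$ is genuinely holonomic in $\W_r$ and that any $\X_\Lag$ obtained from $\X$ remains holonomic in $J^3\pi$ — the latter being automatic once $\rho_1^\Lag$ is known to be a diffeomorphism (Proposition \ref{prop:LagDiffWL}), but the former requires carefully noting that $\rho_1^r \circ \psi = \psi_\Lag = j^3\phi$ by construction. Once this is settled, the remaining work is routine diagram chasing.
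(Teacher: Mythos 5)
Your proposal is correct and matches the paper's intended argument: the paper states this theorem without proof, leaving it as an immediate consequence of exactly the results you invoke (Propositions \ref{prop:UnifToLagSect} and \ref{prop:LagToUnifSect}, Theorem \ref{thm:UnifToLagMultiVF}, Lemma \ref{lemma:LagTechLemmaSect}, the coordinate derivation of \eqref{eqn:EulerLagrangeLagrangian}, and Theorem \ref{thm:EquivalenceTheoremUnified} as the hub). Your bookkeeping remarks on where holonomy must be checked at each transfer are accurate and consistent with the paper's treatment.
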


\section{Hamiltonian formalism}
\label{sec:HamiltonianForm}

\subsection{General setting}

In order to describe the Hamiltonian formalism for second-order field theories using the results
obtained in Section \ref{sec:LagHamFormalism}, we must distinguish between the regular and non-regular
cases.

Let $\widetilde{\Leg} \colon J^3\pi \to J^2\pi^\dagger$ be the extended Legendre map obtained in
\eqref{eqn:ExtendedLegendreMapLocal} and $\Leg \colon J^3\pi \to J^2\pi^\ddagger$ the restricted
Legendre map obtained in \eqref{eqn:RestrictedLegendreMapLocal}. Let us denote
$\widetilde{\P} = \textnormal{Im}(\widetilde{\Leg}) = \widetilde{\Leg}(J^3\pi) \stackrel{\tilde{\jmath}}{\hookrightarrow} J^2\pi^\dagger$
and $\P = \textnormal{Im}(\Leg) = \Leg(J^3\pi) \stackrel{\jmath}{\hookrightarrow} J^2\pi^\ddagger$
the image of the extended and restricted Legendre maps, respectively, which we assume to be submanifolds.
We denote $\bar{\pi}_\P \colon \P \to M$ the natural projection, and $\Leg_o$ the map defined by
$\Leg = \jmath \circ \Leg_o$.

\noindent\textbf{Remark:}
In the hyperregular case, we have $\P = J^2\pi^\ddagger$ and $\Leg_o = \Leg$.

With the previous notations, we can give the following definition:

\begin{definition}
A Lagrangian density $\Lag \in \Omega^{m}(J^2\pi)$ is said to be \emph{almost-regular} if
\begin{enumerate}
\item $\P$ is a closed submanifold of $J^2\pi^\ddagger$.
\item $\Leg$ is a submersion onto its image.
\item For every $j^3_x\phi \in J^3\pi$, the fibers $\Leg^{-1}(\Leg(j^3_x\phi))$
are connected submanifolds
of $J^3\pi$.
\end{enumerate}
\end{definition}

If the Lagrangian density is almost-regular, the Legendre map is a submersion onto its image, and
therefore it admits local sections defined on the submanifold $\P \hookrightarrow J^2\pi^\ddagger$.
We denote by $\Gamma_\P(\Leg)$ the set of local sections of $\Leg$ defined on the submanifold $\P$.
Observe that if $\Lag$ is regular, then $\Gamma_\P(\Leg)$ is exactly the set of
local sections of $\Leg$.

As a consequence of Proposition \ref{prop:RankLegendreMaps}, we have that $\widetilde{\P}$ is diffeomorphic
to $\P$. This diffeomorphism is $\widetilde{\mu} = \mu \circ \tilde{\jmath} \colon \widetilde{\P} \to \P$.
This enables us to state:

\begin{lemma}
If the Lagrangian density $\Lag \in \Omega^{m}(J^2\pi)$ is, at least, almost-regular, then the Hamiltonian section
$\hat{h} \in \Gamma(\mu_\W)$ induces a Hamiltonian section $h \in \Gamma(\widetilde{\mu})$ defined by
$$
h([\omega]) = (\rho_2 \circ \hat{h}) ([(\rho_2^r)^{-1}(\jmath([\omega]))]) \, , \quad \mbox{for every } [\omega] \in \P \, .
$$
\end{lemma}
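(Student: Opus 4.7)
The plan is to reduce the right-hand side to the Legendre maps and then read off the section property from the fact that $\widetilde{\mu}=\mu\circ\tilde\jmath\colon\widetilde{\P}\to\P$ is a diffeomorphism. Given $[\omega]\in\P$, the first observation is that the fiber $\Leg^{-1}([\omega])$ is non-empty, since $\P=\Leg(J^3\pi)$; picking any $j^3_x\phi$ in this fiber, the pair $w:=(j^3_x\phi,\jmath([\omega]))$ belongs to the fibered product $\W_r$ (the $J^1\pi$-projections match because $\Leg$ is a bundle map over $J^1\pi$) and in fact lies in $\W_\Lag$. Thus $w$ is a legitimate representative for the expression $(\rho_2^r)^{-1}(\jmath([\omega]))$ that appears on the right-hand side.

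The key step is the identity $(\rho_2\circ\hat h)(w)=\widetilde{\Leg}(j^3_x\phi)$. This is a direct verification in natural coordinates: using the local form of $\hat h$ prescribed by the Hamiltonian function \eqref{eqn:HamiltonianFunctionLocal}, and comparing the resulting components with \eqref{eqn:RestrictedLegendreMapLocal} and \eqref{eqn:ExtendedLegendreMapLocal}, the $p$, $p_\alpha^i$, $p_\alpha^I$ coordinates of $(\rho_2\circ\hat h)(w)$ coincide with those of $\widetilde{\Leg}(j^3_x\phi)$. With this identification, well-definedness reduces to the assertion that $\widetilde{\Leg}(j^3_x\phi)$ depends only on $[\omega]=\Leg(j^3_x\phi)$. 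This in turn follows from $\mu\circ\widetilde{\Leg}=\Leg$ combined with the injectivity of $\widetilde{\mu}$: if $j^3_x\phi_1,j^3_x\phi_2\in\Leg^{-1}([\omega])$, then $\widetilde{\Leg}(j^3_x\phi_1),\widetilde{\Leg}(j^3_x\phi_2)\in\widetilde{\P}$ both map to $[\omega]$ under $\widetilde{\mu}$, and hence agree. The section property $\widetilde{\mu}\circ h=\textnormal{Id}_\P$ is then immediate, and smoothness is automatic since $h$ necessarily coincides with $\widetilde{\mu}^{-1}$.

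The hard part is not the section property itself (that is forced once $\widetilde{\mu}$ is known to be a diffeomorphism), but rather making sense of the formula: the fiber $(\rho_2^r)^{-1}(\jmath([\omega]))$ in $\W_r$ is high-dimensional because the coordinates $u^\alpha_I, u^\alpha_J$ remain free, and on points of this fiber that do not lie in $\W_\Lag$ the value of $\rho_2\circ\hat h$ depends genuinely on those free second- and third-order jet coordinates through the Lagrangian entering $\hat H$ in \eqref{eqn:HamiltonianFunctionLocal}. The proof must therefore implicitly restrict representatives to $\W_\Lag$, where the extended Legendre map does the required bookkeeping. Almost-regularity is used twice: once through Proposition \ref{prop:RankLegendreMaps} to guarantee that $\widetilde{\mu}$ is a diffeomorphism onto $\P$, and once through the connectedness of the fibers $\Leg^{-1}([\omega])$, which ensures that any local section $\sigma\in\Gamma_\P(\Leg)$ gives the same smooth map $h=\widetilde{\Leg}\circ\sigma$ on all of $\P$.
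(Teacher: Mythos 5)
Your proof is correct, and it reaches the conclusion by a somewhat different route than the paper. The paper's own proof is a direct coordinate check of the $\rho_2$-projectability of the local Hamiltonian function: it computes $\ker T\rho_2 = \bigl\langle \derpar{}{u_I^\alpha},\derpar{}{u_J^\alpha} \bigr\rangle$ and observes that $\derpar{\hat{H}}{u_I^\alpha} = p_\alpha^I - \derpar{\hat{L}}{u_I^\alpha}$, so $\hat{H}$ descends exactly when the constraint $p_\alpha^I = \derpar{\hat{L}}{u_I^\alpha}$ holds, which it does over $\P = \textnormal{Im}(\Leg)$. You instead identify $(\rho_2 \circ \hat{h})$ on $\W_\Lag$ with $\widetilde{\Leg}$ composed with the projection onto $J^3\pi$ (which is just the earlier Corollary stating that $\W_\Lag$ is the graph of $\widetilde{\Leg}$ with $\widetilde{\Leg}^*p = -\hat{H}$), and you deduce well-definedness from $\mu \circ \widetilde{\Leg} = \Leg$ together with the injectivity of $\widetilde{\mu}$; these are the same underlying facts, repackaged through the extended Legendre map. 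What your version buys is that it makes explicit a point the paper glosses over: $\rho_2 \circ \hat{h}$ is \emph{not} constant on the whole fiber $(\rho_2^r)^{-1}(\jmath([\omega]))$, because $\hat{H}$ genuinely depends on the free coordinates $u_I^\alpha$ through $p_\alpha^I u_I^\alpha - \hat{L}$ off the constraint set; the formula only makes sense once representatives are restricted to $\W_\Lag$, i.e.\ to pairs $(j^3_x\phi,\jmath([\omega]))$ with $\Leg(j^3_x\phi) = [\omega]$, and your repair is the right one. One small bookkeeping quibble: the connectedness of the fibers of $\Leg$ is not actually used in your argument as written, since you obtain constancy of $\widetilde{\Leg}$ along the fibers of $\Leg$ from the injectivity of $\widetilde{\mu}$ rather than from a vanishing-derivative-plus-connectedness argument; in your setup almost-regularity enters through $\P$ being a submanifold, $\Leg$ being a submersion onto it, and the previously asserted fact that $\widetilde{\mu} \colon \widetilde{\P} \to \P$ is a diffeomorphism.
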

\begin{proof}
It is clear that, given $[\omega] \in J^{2}\pi^\ddagger$, the section $\hat{h}$ maps every point
$(j^3_x\phi,[\omega]) \in (\rho^r_2)^{-1}([\omega])$ into $\rho_2^{-1}[\rho_2(\hat{h}(j^3_x\phi,[\omega]))]$.
So we have the diagram
$$
\xymatrix{
\widetilde{\P} \ar[rr]^-{\tilde{\jmath}} \ar[d]^-{\tilde{\mu}} & \ & J^{2}\pi^\dagger \ar[d]^-{\mu} & \ & \W \ar[d]_-{\mu_\W} \ar[ll]_-{\rho_2} \\
\P \ar[rr]^-{\jmath} \ar[urr]^-{h}& \ & J^{2}\pi^\ddagger & \ & \W_r \ar@/_0.7pc/[u]_{\hat{h}} \ar[ll]_-{\rho_2^r}
}
$$
Thus, the crucial point is the $\rho_2$-projectability of the local function $\hat{H}$. However,
since a local base for $\ker T\rho_2$ is given by
$$
\ker T\rho_2 = \left\langle \derpar{}{u_I^\alpha},\derpar{}{u_J^\alpha} \right\rangle \, ,
$$
with $|I| = 2$ and $|J| = 3$, then we have that $\hat{H}$ is $\rho_2$-projectable if, and only if,
$$
p_{\alpha}^{I} = \derpar{L}{u_I^\alpha} \, .
$$
This condition is fulfilled if $[\omega] \in \P = \textnormal{Im}(\Leg)$, which
implies that $\rho_2[\hat{h}((\rho_2^r)^{-1}([\omega]))] \in \widetilde{\P}$.
\end{proof}

As in the unified setting, this Hamiltonian $\mu$-section is specified by a local Hamiltonian function
$H \in C^\infty(\P)$, that is,
$$
h(x^i,u^\alpha,u_i^\alpha,p_\alpha^i,p_\alpha^I) = (x^i,u^\alpha,u_i^\alpha,-H,p_\alpha^i,p_\alpha^I) \, .
$$
Using the Hamiltonian $\mu$-section we define the Hamilton-Cartan forms
$\Theta_h = h^*\Theta_1^s \in \Omega^{m}(\P)$ and $\Omega_h = h^*\Omega_1^s \in \Omega^{m+1}(\P)$.
Observe that $\Leg_o^*\Theta_h = \Theta_\Lag$ and $\Leg_o^*\Omega_h = \Omega_\Lag$.

\noindent\textbf{Remark:}
The Hamiltonian $\mu$-section can be defined in some equivalent ways without passing through the
unified formalism. First, we can define it as $h = \tilde{\jmath} \circ \widetilde{\mu}^{-1}$.
From this, bearing in mind the definition of $\widetilde{\P}$ and $\P$ as the image sets of the
extended and restricted Legendre maps, respectively, we can also define the Hamiltonian $\mu$-section
as $h = \widetilde{\Leg} \circ \sigma$, where $\sigma \in \Gamma_\P(\Leg)$.

\subsection{Hyperregular and regular Lagrangian densities}

For the sake of simplicity, we assume throughout this Section that the Lagrangian density
$\Lag \in \Omega^{m}(J^2\pi)$ is hyperregular, and that $\Upsilon \colon J^2\pi^\ddagger \to J^3\pi$
is a global section of $\Leg$. All the results stated also hold for regular Lagrangians,
restricting to the corresponding open sets where the Legendre map admits local sections.

First, observe that if the Lagrangian density is hyperregular, then the local Hamiltonian function
associated to the Hamiltonian $\mu$-section $h$ has the following coordinate expression
\begin{equation}\label{eqn:HamiltonianFunctionLocalRegularLag}
H(x^i,u^\alpha,u_i^\alpha,p_\alpha^i,p_\alpha^I) = p_\alpha^iu_i^\alpha + p_\alpha^If_I^\alpha
- (\pi_2^3 \circ \Upsilon)^*L \, ,
\end{equation}
where $f_I^\alpha(x^i,u^\alpha,u^\alpha_i,p_\alpha^i,p_\alpha^I) = \Upsilon^*u_I^\alpha$.
Therefore, the Hamilton-Cartan forms have the following coordinate expression
\begin{align*}
\Theta_h &= - H d^mx + p_\alpha^id u^\alpha \wedge d^{m-1}x_i + \frac{1}{n(ij)} \, p_\alpha^{1_i+1_j}d u_i^\alpha \wedge d^{m-1}x_j \, , \\
\Omega_h &= d H \wedge d^mx - d p_\alpha^i \wedge d u^\alpha \wedge d^{m-1}x_i
- \frac{1}{n(ij)} \, d p_\alpha^{1_i+1_j} \wedge d u_i^\alpha \wedge d^{m-1}x_j \, .
\end{align*}
In addition, since $\textnormal{Im}(\Leg) = J^2\pi^\ddagger$, then the Hamiltonian sections $h$ and
$\hat{h}$ satisfy $h \circ \rho_2^r = \rho_2 \circ \hat{h}$, that is, the following diagram commutes
$$
\xymatrix{
\W \ar[drr]^{\rho_2} & \ & \ \\
\W_r \ar[u]^{\hat{h}} \ar[drr]_{\rho_2^r} & \ & J^2\pi^\dagger \\
\ & \ & J^2\pi^\ddagger \ar[u]_{h}
}
$$

\begin{proposition}
If the Lagrangian density is hyperregular, then the Hamilton-Cartan $(m+1)$-form
$\Omega_h = h^*\Omega_1^s \in \Omega^{m+1}(J^2\pi^\ddagger)$ is a multisymplectic form
in $J^2\pi^\ddagger$.
\end{proposition}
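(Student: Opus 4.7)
The form $\Omega_h=-d\Theta_h$ is automatically closed, so only $1$-nondegeneracy requires proof: for every $X\in\vf(J^2\pi^\ddagger)$, the condition $i(X)\Omega_h=0$ must force $X=0$. My plan is a direct coordinate computation on $J^2\pi^\ddagger$, patterned after the multisymplecticity check for $\Omega_1^s$ on $J^2\pi^\dagger$ carried out in Section \ref{sec:SymmetricMultimomenta}.

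The first step is to simplify $dH$. From the explicit expression \eqref{eqn:HamiltonianFunctionLocalRegularLag} of $H$, together with the identity $\Leg\circ\Upsilon=\textnormal{Id}_{J^2\pi^\ddagger}$ (which on coordinates says $p^K_\beta=\Upsilon^*(\partial L/\partial u_K^\beta)$ for $|K|=2$), the contributions of $\partial f_K^\beta/\partial p^i_\alpha$ and $\partial f_K^\beta/\partial p^I_\alpha$ cancel out of the partial derivatives of $H$ with respect to the momenta, yielding the Legendre-type identities
$$\derpar{H}{p^i_\alpha}=u_i^\alpha,\qquad\derpar{H}{p^I_\alpha}=f_I^\alpha.$$
Substituting the resulting $dH=(\partial H/\partial x^k)dx^k+(\partial H/\partial u^\alpha)du^\alpha+(\partial H/\partial u_i^\alpha)du_i^\alpha+u_i^\alpha dp^i_\alpha+f_I^\alpha dp^I_\alpha$ into the local expression of $\Omega_h$ produces its expansion on a basis of wedge products of the natural coordinate differentials.

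Next I take a generic vector field
$$X=f^k\derpar{}{x^k}+F^\alpha\derpar{}{u^\alpha}+F_i^\alpha\derpar{}{u_i^\alpha}+G_\alpha^i\derpar{}{p_\alpha^i}+G_\alpha^I\derpar{}{p_\alpha^I},$$
compute $i(X)\Omega_h$, and collect the resulting $m$-form by its linearly independent basis pieces: the volume form $d^m x$; the four families $du^\alpha\wedge d^{m-1}x_k$, $du_i^\alpha\wedge d^{m-1}x_k$, $dp_\alpha^i\wedge d^{m-1}x_k$, $dp_\alpha^I\wedge d^{m-1}x_k$; and the mixed $(m-2)$-form pieces such as $dp_\alpha^i\wedge du^\alpha\wedge d^{m-2}x_{ki}$ arising when $f^k\partial/\partial x^k$ contracts against an existing $d^{m-1}x_i$. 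Setting each coefficient to zero gives a decoupled cascade: the $du^\alpha\wedge d^{m-1}x_i$ and $du_i^\alpha\wedge d^{m-1}x_j$ coefficients force $G_\alpha^i=0$ and $G_\alpha^I=0$ (after unpacking the $1/n(ij)$ symmetrization in exactly the way used to verify that $\Omega_1^s$ is multisymplectic); the $dp_\alpha^i\wedge d^{m-1}x_i$ and $dp_\alpha^I\wedge d^{m-1}x_j$ coefficients force $F^\alpha=0$ and $F_i^\alpha=0$; and finally the $d^m x$ coefficient, together with the mixed $(m-2)$-form terms, forces $f^k=0$.

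The main obstacle is clerical rather than conceptual: the symmetrization factors $1/n(ij)$ and the nontrivial dependence of $H$ on every coordinate of $J^2\pi^\ddagger$ produce several cross-terms whose index bookkeeping requires care. Structurally, however, the argument is strictly parallel to the multisymplecticity check for $\Omega_1^s$ on $J^2\pi^\dagger$: the Hamiltonian section $h$ eliminates the direction $\partial/\partial p$ and replaces the term $-dp\wedge d^m x$ by $dH\wedge d^m x$, and the Legendre-type identities above guarantee that this new term continues to furnish the constraints needed to force the horizontal components $f^k$ of $X$ to vanish, thereby recovering $1$-nondegeneracy of $\Omega_h$ on $J^2\pi^\ddagger$.
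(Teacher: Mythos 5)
Your proposal is correct and follows essentially the same route as the paper: a direct coordinate computation of $1$-nondegeneracy in which the key step is that on $\textnormal{Im}(\Leg)=J^2\pi^\ddagger$ the identities $p_\alpha^I=\partial L/\partial u_I^\alpha$ kill the $d f_I^\alpha$ contribution to $dH$, after which the check reduces to the one already done for $\Omega_1^s$. The only cosmetic difference is that the paper contracts $\Omega_h$ with each coordinate basis vector field and reads off linear independence, whereas you contract with a generic $X$ and collect coefficients.
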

\begin{proof}
A direct computation in coordinates leads to this result. Let $\Upsilon \in \Gamma(\Leg)$ be a
global section of the restricted Legendre map, and assume that the local Hamiltonian function $H$
is given locally by \eqref{eqn:HamiltonianFunctionLocalRegularLag}. Then we have the following
coordinate expression for $d H$
\begin{align*}
d H &=
- \derpar{L}{x^i}\,d x^i - \derpar{L}{u^\alpha}\,d u^\alpha
+ \left( p_\alpha^i - \derpar{L}{u_i^\alpha} \right)d u_i^\alpha
+ \left( p_\alpha^I - \derpar{L}{u_I^\alpha} \right)d f_I^\alpha \\
&\quad{} + u_i^\alpha d p_\alpha^i + f_I^\alpha d p_\alpha^I \, ,
\end{align*}
where
$$
d f_I^\alpha = \derpar{f_I^\alpha}{x^j}\,d x^j + \derpar{f_I^\alpha}{u^\beta}\,d u^\beta
+ \derpar{f_I^\alpha}{u^\beta_j}\,d u^\beta_j + \derpar{f_I^\alpha}{p_\beta^j}\,d p^j_\beta
+ \derpar{f_I^\alpha}{p^K_\beta}\,d p^K_\beta \, .
$$
Observe that since $H$ takes values in $J^2\pi^\ddagger = \textnormal{Im}(\Leg)$, we have
$p_\alpha^I - \partial L / \partial u_I^\alpha = 0$. Thus, the expression of $d H$ reads
$$
d H = - \derpar{L}{x^i}\,d x^i - \derpar{L}{u^\alpha}\,d u^\alpha
+ \left( p_\alpha^i - \derpar{L}{u_i^\alpha} \right)d u_i^\alpha
+ u_i^\alpha d p_\alpha^i + f_I^\alpha d p_\alpha^I \, ,
$$
and therefore the Hamilton-Cartan $(m+1)$-form is locally given by
\begin{align*}
\Omega_h &= \left[ - \derpar{L}{u^\alpha}\,d u^\alpha
+ \left( p_\alpha^i - \derpar{L}{u_i^\alpha} \right)d u_i^\alpha
+ u_i^\alpha d p_\alpha^i + f_I^\alpha d p_\alpha^I \right] \wedge d^mx \\
&\qquad {}- d p_\alpha^i \wedge d u^\alpha \wedge d^{m-1}x_i - \frac{1}{n(ij)} \, d p_\alpha^{1_i+1_j} \wedge d u_i^\alpha \wedge d^{m-1}x_j \, .
\end{align*}
Now, since the $C^\infty(J^2\pi^\ddagger)$-module of vector fields $\vf(J^2\pi^\ddagger)$ is locally given by
$$
\vf(J^2\pi^\ddagger) = \left\langle \derpar{}{x^i} \, , \,
\derpar{}{u^\alpha} \, , \, \derpar{}{u^\alpha_i} \, , \,
\derpar{}{p_\alpha^i} \, , \, \derpar{}{p_\alpha^I} \right\rangle \, ,
$$
we have
\begin{align*}
i\left(\derpar{}{x^k}\right)\Omega_h &=
- d H \wedge d^{m-1}x_k - d p_\alpha^i \wedge d u^\alpha \wedge d^{m-2}x_{ik} \\
&\quad{} - \frac{1}{n(ij)}d p_\alpha^{1_i+1_j} \wedge d u_i^\alpha \wedge d^{m-2}x_{jk} \, , \\
i\left(\derpar{}{u^\alpha}\right)\Omega_h &=
- \derpar{L}{u^\alpha} d^{m}x + d p_\alpha^i  \wedge d^{m-1}x_{i} \, , \\
i\left(\derpar{}{u^\alpha_i}\right)\Omega_h &=
\left( p_\alpha^i - \derpar{L}{u_i^\alpha} \right) d^{m}x + \frac{1}{n(ij)}d p_\alpha^{1_i+1_j} \wedge d^{m-1}x_{j} \, , \\
i\left(\derpar{}{p_\alpha^i}\right)\Omega_h &=
u_i^\alpha d^{m}x - d u^\alpha \wedge d^{m-1}x_{i}\, , \\
i\left(\derpar{}{p_\alpha^I}\right)\Omega_h &=
f_I^\alpha d^{m}x - \sum_{1_i+1_j=I} \frac{1}{n(ij)} d u_i^\alpha \wedge d^{m-1}x_{j} \, .
\end{align*}
From this it is clear that $i(X)\Omega_h = 0$ if, and only if, $X = 0$, that is, $\Omega_h$
is multisymplectic.
\end{proof}

Now we recover the field equations from the unified setting using the natural projection
$\rho_2^r \colon \W_r \to J^2\pi^\ddagger$. First, the sections solution in the Hamiltonian
formalism are recovered using the following result:

\begin{proposition}\label{prop:UnifToHamSectReg}
Let $\psi \in \Gamma(\rho_M^r)$ be a holonomic section solution to the equation
\eqref{eqn:UnifDynEqSect}. Then the section
$\psi_h = \rho_2^r \circ \psi \in \Gamma(\bar{\pi}_{J^1\pi}^\ddagger)$ is a solution to the equation
\begin{equation}\label{eqn:HamDynEqSectReg}
\psi_h^*i(X)\Omega_h = 0 \, , \quad \mbox{for every } X \in \vf(J^2\pi^\ddagger) \, .
\end{equation}
\end{proposition}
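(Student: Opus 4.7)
The plan is to mimic the strategy used for Proposition \ref{prop:UnifToLagSect}, replacing the role of the projection $\rho_1^r$ with $\rho_2^r$, and exploiting the hyperregularity hypothesis to relate the Hamilton--Cartan form $\Omega_h$ on $J^{2}\pi^\ddagger$ with the premultisymplectic form $\Omega_r$ on $\W_r$. Since no holonomy property of $\psi$ will be invoked, the projected section $\psi_h$ will inherit the equation directly from $\psi$ via pull-back and vector-field-lifting arguments.

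The first key step is to establish the identity $(\rho_2^r)^*\Omega_h = \Omega_r$. On the one hand, by the definition of $\Omega_r$ and $\Omega$ one has $\Omega_r = \hat{h}^*\Omega = \hat{h}^*\rho_2^*\Omega_1^s$. On the other hand, by the definition of $\Omega_h$,
\begin{equation*}
(\rho_2^r)^*\Omega_h = (\rho_2^r)^*h^*\Omega_1^s = (h\circ\rho_2^r)^*\Omega_1^s.
\end{equation*}
In the hyperregular case $\textnormal{Im}(\Leg)=J^2\pi^\ddagger$, and the commutative diagram just before the proposition gives $h\circ\rho_2^r = \rho_2\circ\hat{h}$, whence $(\rho_2^r)^*\Omega_h = \hat{h}^*\rho_2^*\Omega_1^s = \Omega_r$. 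An entirely analogous computation yields $(\rho_2^r)^*\Theta_h = \Theta_r$, although only the identity for $\Omega_h$ is needed in the sequel.

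With this in hand, I would lift arbitrary vector fields on $J^{2}\pi^\ddagger$ to $\W_r$. Since $\W_r = J^3\pi\times_{J^1\pi}J^{2}\pi^\ddagger$ is a fibre product and $\rho_2^r$ is the canonical projection onto the second factor, it is a surjective submersion; hence for every $X\in\vf(J^{2}\pi^\ddagger)$ there exists $Y\in\vf(\W_r)$ which is $\rho_2^r$-related to $X$ (any such $Y$ suffices, as in the Lagrangian case the lift is non-unique up to $\ker T\rho_2^r$). Then
\begin{align*}
\psi_h^*i(X)\Omega_h
&= (\rho_2^r\circ\psi)^* i(X)\Omega_h
= \psi^*\bigl((\rho_2^r)^* i(X)\Omega_h\bigr) \\
&= \psi^* i(Y)(\rho_2^r)^*\Omega_h
= \psi^* i(Y)\Omega_r = 0,
\end{align*}
the final equality being the hypothesis $\psi^*i(Y)\Omega_r=0$ for every $Y\in\vf(\W_r)$.

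The main subtlety is the verification of the commutativity $h\circ\rho_2^r=\rho_2\circ\hat{h}$, which is what forces the hyperregularity assumption: it is equivalent to the $\rho_2$-projectability of the local Hamiltonian function $\hat{H}$, which holds precisely on $\textnormal{Im}(\widetilde{\Leg})$. Once this is granted, the remainder of the argument is purely formal manipulation of pull-backs and related vector fields, and no holonomy condition on $\psi$ is needed for the Hamiltonian counterpart.
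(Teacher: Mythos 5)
Your proposal is correct and follows essentially the same route as the paper's proof: the paper likewise uses that $\rho_2^r$ is a submersion to lift $X$ to a $\rho_2^r$-related $Y\in\vf(\W_r)$, and then runs the identical chain $\psi_h^*i(X)\Omega_h=\psi^*i(Y)(h\circ\rho_2^r)^*\Omega_1^s=\psi^*i(Y)(\rho_2\circ\hat h)^*\Omega_1^s=\psi^*i(Y)\Omega_r=0$, relying on the same commutativity $h\circ\rho_2^r=\rho_2\circ\hat h$ that you isolate. Your only (harmless) reorganization is to state $(\rho_2^r)^*\Omega_h=\Omega_r$ as a separate preliminary identity rather than folding it into the displayed computation.
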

\begin{proof}
Since $\rho_2^r \colon \W_r \to J^3\pi$ is a submersion, for every vector field
$X \in \vf(J^2\pi^\ddagger)$ there exist some vector fields $Y \in \vf(\W_r)$ such that $X$ and $Y$
are $\rho_2^r$-related. Observe that this vector field $Y$ is not unique, the vector field $Y + Y_o$,
with $Y_o \in \ker T\rho_2^r$ is also $\rho_2^r$-related with $X$. Thus, using this particular
choice of $\rho_2^r$-related vector fields, we have
\begin{align*}
\psi_h^*i(X)\Omega_h &= (\rho_2^r \circ \psi)^*i(X)\Omega_h
= \psi^*((\rho_2^r)^*i(X)\Omega_h) = \psi^*i(Y)(\rho_2^r)^*\Omega_h \\
&= \psi^*i(Y)(h \circ \rho_2^r)^*\Omega_1^s = \psi^*i(Y)(\rho_2 \circ \hat{h})^*\Omega_1^s
= \psi^*i(Y)\Omega_r \, .
\end{align*}
Since the equality $\psi^*i(Y)\Omega_r = 0$ holds for every $Y \in \vf(\W_r)$, in particular
it holds for every $Y \in \vf(\W_r)$ which is $\rho_2^r$-related with $X \in \vf(J^2\pi^\ddagger)$.
Hence we obtain
\begin{equation*}
\psi_h^*i(X)\Omega_h = \psi^*i(Y)\Omega_r = 0 \, . \qedhere
\end{equation*}
\end{proof}

The diagram illustrating the situation of the above Proposition is the following:
$$
\xymatrix{
\W_r \ar[drr]^{\rho^r_2} \ar[dd]^{\rho^r_M} & \ & \\
\ & \ & J^2\pi^\ddagger \ar[dll]_{\bar{\pi}_{J^1\pi}^\ddagger} \\
M \ar@/_1pc/@{-->}[urr]_{\psi_h = \rho_2^r \circ \psi} \ar@/^1pc/[uu]^{\psi} & \ & \
}
$$

Let us compute the local equations for the section
$\psi_h = \rho_2^r \circ \psi \in \Gamma(\bar{\pi}_{J^1\pi}^\ddagger)$. If the section
$\psi \in \Gamma(\rho_M^r)$ is locally given by
$\psi(x^i) = (x^i,u^\alpha,u^\alpha_i,u^\alpha_I,u^\alpha_J,p_\alpha^i,p_\alpha^I)$,
then the section $\psi_h = \rho_2^r \circ \psi$ is given in coordinates by
$\psi_h(x^i) = (x^i,u^\alpha,u^\alpha_i,p_\alpha^i,p_\alpha^I)$. Now, bearing in mind that the
section $\psi$ solution to the equation \eqref{eqn:UnifDynEqSect} must satisfy the local equations
\eqref{eqn:UnifDynEqSectLocal}, \eqref{eqn:UnifDynEqSectRelationMomenta} and
\eqref{eqn:UnifDynEqSectHolonomy}, and that the section $\psi$ takes values in the submanifold
$\W_\Lag \cong \textnormal{graph}(\Leg)$ and the local expression
\eqref{eqn:HamiltonianFunctionLocalRegularLag} of the Hamiltonian function $H$ in the hyperregular
case, we obtain the following system of partial differential equations for the section $\psi_h$
\begin{equation}\label{eqn:HamiltonEquationsReg}
\begin{array}{c}
\displaystyle\derpar{u^\alpha}{x^i} = \derpar{H}{p_\alpha^i} \quad ; \quad
\sum_{1_i+1_j=I}\frac{1}{n(ij)}\,\derpar{u_i^\alpha}{x^j} = \derpar{H}{p_\alpha^I} \, , \\[15pt]
\displaystyle\sum_{i=1}^m\derpar{p_\alpha^i}{x^i} = -\derpar{H}{u^\alpha} \quad ; \quad
\sum_{j=1}^m\derpar{p_\alpha^{1_i+1_j}}{x^j} = -\derpar{H}{u_i^\alpha} \, .
\end{array}
\end{equation}

In order to recover the field equations for multivector fields, we first need the following technical
result, which is similar to Lemma \ref{lemma:LagRelatedMultiVF}.

\begin{lemma}\label{lemma:HamRelatedMultiVFReg}
Let $\X \in \vf^{m}(\W_r)$ be a multivector field tangent to $\W_\Lag \hookrightarrow \W_r$, and
let $\X_o \in \vf^{m}(\W_\Lag)$ be the unique multivector field which is $j_\Lag$-related to $\X$.
If $\X_o$ is $\rho_2^\Lag$-projectable, then there exists a unique multivector field
$\X_h \in \vf^m(J^2\pi^\ddagger)$ such that
$\X_h \circ \rho_2^r \circ j_\Lag = \Lambda^mT\rho_2^r \circ \X \circ j_\Lag$.

\noindent Conversely, if $\X_h \in \vf^m(J^2\pi^\ddagger)$, then there exist multivector fields
$\X \in \vf(\W_r)$ tangent to $\W_\Lag$ such that
$\X_h \circ \rho_2^r \circ j_\Lag = \Lambda^mT\rho_2^r \circ \X \circ j_\Lag$.
\end{lemma}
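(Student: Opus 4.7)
The plan is to mirror the strategy of Lemma \ref{lemma:LagRelatedMultiVF}, with one crucial modification: the map $\rho_2^\Lag := \rho_2^r \circ j_\Lag \colon \W_\Lag \to J^2\pi^\ddagger$ replaces the diffeomorphism $\rho_1^\Lag$, and it is only a surjective submersion rather than a diffeomorphism, which is precisely why the projectability hypothesis is needed. The key preliminary observation is the factorization $\rho_2^\Lag = \Leg \circ \rho_1^\Lag$, which follows immediately from the fact that $\W_\Lag$ is the graph of $\widetilde{\Leg}$: combining Proposition \ref{prop:LagDiffWL} (which gives the diffeomorphism $\rho_1^\Lag$) with the hyperregularity of $\Lag$ (which makes $\Leg$ a surjective submersion) yields that $\rho_2^\Lag$ is itself a surjective submersion.

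For the forward direction, the tangency of $\X$ to $\W_\Lag$ produces, as in Lemma \ref{lemma:LagRelatedMultiVF}, a unique $\X_o \in \vf^m(\W_\Lag)$ that is $j_\Lag$-related to $\X$, i.e. $\Lambda^m T j_\Lag \circ \X_o = \X \circ j_\Lag$. The hypothesis that $\X_o$ be $\rho_2^\Lag$-projectable then provides a multivector field $\X_h \in \vf^m(J^2\pi^\ddagger)$ satisfying $\X_h \circ \rho_2^\Lag = \Lambda^m T\rho_2^\Lag \circ \X_o$, and uniqueness is automatic from surjectivity of $\rho_2^\Lag$. The required identity then follows by a routine diagram chase:
\begin{align*}
\X_h \circ \rho_2^r \circ j_\Lag
&= \X_h \circ \rho_2^\Lag
= \Lambda^m T\rho_2^\Lag \circ \X_o \\
&= \Lambda^m T\rho_2^r \circ \Lambda^m T j_\Lag \circ \X_o
= \Lambda^m T\rho_2^r \circ \X \circ j_\Lag \, .
\end{align*}

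For the converse direction, given $\X_h \in \vf^m(J^2\pi^\ddagger)$, I would build $\X$ in three steps. First, using the global section $\Upsilon \in \Gamma(\Leg)$ guaranteed by hyperregularity, construct a multivector field $\X_\Lag \in \vf^m(J^3\pi)$ that is $\Leg$-projectable to $\X_h$; a convenient choice is to pushforward $\X_h$ along $\Upsilon$ on the image $\Upsilon(J^2\pi^\ddagger)$ and extend to all of $J^3\pi$ by a partition-of-unity argument. Second, transfer to $\W_\Lag$ via the diffeomorphism $\rho_1^\Lag$, obtaining $\X_o := ((\rho_1^\Lag)^{-1})_*\X_\Lag \in \vf^m(\W_\Lag)$. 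Third, extend $\X_o$ along the embedding $j_\Lag \colon \W_\Lag \hookrightarrow \W_r$ to a multivector field $\X \in \vf^m(\W_r)$ tangent to $\W_\Lag$, once again by a local construction glued with partitions of unity. The required intertwining identity then follows from the very construction, since $\rho_2^r \circ j_\Lag = \rho_2^\Lag = \Leg \circ \rho_1^\Lag$ transports each step of the lift back onto $\X_h$.

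The main obstacle I anticipate is the converse direction, specifically the non-canonical nature of the extension from $\W_\Lag$ to $\W_r$ (and the corresponding lift from $J^2\pi^\ddagger$ to $J^3\pi$): these are unique only up to addition of a multivector field that vanishes on $\W_\Lag$ or lies in $\ker T\Leg$ respectively, which accounts for the use of ``there exist'' rather than ``there exists a unique'' in the statement. Care must be taken to verify that tangency to $\W_\Lag$ is preserved by the chosen extension, but this is automatic from the prescription that $\X$ coincides with $(j_\Lag)_*\X_o$ along $\W_\Lag$.
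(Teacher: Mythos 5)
Your proof is correct and follows essentially the same route as the paper: the forward direction is the identical diagram chase through $\X_o$ and the submersion $\rho_2^\Lag = \rho_2^r \circ j_\Lag$, while the paper disposes of the converse with the single remark that one reverses the reasoning, noting only that $\X_o$ is no longer unique because $\ker T\rho_2^\Lag \neq \{0\}$. Your more explicit converse construction (lifting $\X_h$ along $\Upsilon$, transporting by the diffeomorphism $\rho_1^\Lag$, then extending off $\W_\Lag$) is a valid way to carry out that reversal, though the detour through $J^3\pi$ is not strictly needed, since $\X_h$ can be lifted directly through the surjective submersion $\rho_2^\Lag$ and the resulting $\X_o$ extended to $\W_r$.
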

\begin{proof}
The proof of this result is analogous to the proof of Lemma \ref{lemma:LagRelatedMultiVF}, bearing
in mind that $\rho_2^\Lag = \rho_2^r \circ j_\Lag \colon \W_\Lag \to J^2\pi^\ddagger$ is a
submersion onto $J^2\pi^\ddagger$. In particular, since the multivector fields $\X \in \vf^m(\W_r)$
and $\X_o \in \vf^{m}(\W_\Lag)$ are $j_\Lag$-related, the relation $\Lambda^mT j_\Lag \circ \X_o
= \X \circ j_\Lag$ is satisfied. On the other hand, as $\X_o$ is $\rho_2^\Lag$-projectable and
$\rho_2^\Lag \colon \W_\Lag \to J^2\pi^\ddagger$ is a submersion, there is a unique multivector field
$\X_h \in \vf^{m}(J^2\pi^\ddagger)$ which is $\rho_2^\Lag$-related to $\X_o$; that
is, $\X_h \circ \rho_2^\Lag = \Lambda^mT \rho_1^\Lag \circ \X_o$. Then we have
\begin{align*}
\X_h \circ \rho_2^r \circ j_\Lag &= \X_h \circ \rho_2^\Lag
= \Lambda^mT\rho_2^\Lag \circ \X_o \\
&= \Lambda^mT\rho_2^r \circ \Lambda^mT j_\Lag \circ \X_o
= \Lambda^mT\rho_2^r \circ \X \circ j_\Lag \, .
\end{align*}
The converse is proved reversing this reasoning, but now the multivector field
$\X_o \in \vf^m(\W_\Lag)$ which is $\rho_2^\Lag$-related
with the given $\X_h \in \vf^m(J^2\pi^\ddagger)$ is not unique, since
$\rho_2^\Lag$ is a submersion with $\ker T\rho_2^\Lag \neq \{ 0 \}$.
\end{proof}

As in the Lagrangian formalism, the previous result gives a correspondence between the set of
multivector fields $\X \in \vf^m(\W_r)$ tangent to $\W_\Lag$
and the set of multivector fields $\X_h \in \vf^{m}(J^2\pi^\ddagger)$ such that the
following diagram is commutative
$$
\xymatrix{
\Lambda^mT\W_r \ar[drrr]^{\Lambda^mT\rho_2^r} & \ & \ & \ \\
\Lambda^mT\W_\Lag \ar[rrr]_{\Lambda^mT\rho_2^\Lag} \ar@{_{(}->}[u]_{\Lambda^mT j_\Lag} & \ & \ & \Lambda^mT J^2\pi^\ddagger \\
\ & \ & \ & \ \\
\W_r \ar[drrr]^{\rho_2^r} \ar@/^2.25pc/[uuu]^{\X} & \ & \ & \ \\
\W_\Lag \ar[rrr]_{\rho_2^\Lag} \ar@{^{(}->}[u]^{j_\Lag} \ar@/_1.5pc/[uuu]_{\X_o}|(.32)\hole & \ & \ & J^2\pi^\ddagger \ar[uuu]^{\X_h}
}
$$
Nevertheless, observe that in the Hamiltonian formalism, the map
$\rho_2^\Lag = \rho_2^r \circ j_\Lag \colon \W_\Lag \to J^2\pi^\ddagger$
is a submersion (instead of a diffeomorphism, as in the Lagrangian setting), and thus the correspondence
is not $1$-to-$1$. In particular, for every multivector field $\X \in \vf^m(\W_r)$
tangent to $\W_\Lag$ we can define a unique multivector field
$\X_h \in \vf^m(J^2\pi^\ddagger)$ such that the previous diagram commutes.
But since $\rho_2^\Lag$ is a submersion, for every $\X_h \in \vf^m(J^2\pi^\ddagger)$ there are
several multivector fields $\X \in \vf^m(\W_r)$, tangent to $\W_\Lag$,
satisfying the same property.

\begin{theorem}\label{thm:UnifToHamMultiVFReg}
Let $\X \in \vf^m(\W_r)$ be a locally decomposable, $\rho_M^r$-transverse and integrable
multivector field solution to the equation \eqref{eqn:UnifDynEqMultiVF}, tangent to $\W_\Lag$ and
such that the unique multivector field in $\X_o \in \vf^{m}(\W_\Lag)$ which is
$j_\Lag$-related to $\X$ is $\rho_2^\Lag$-projectable.
Then there exists a locally decomposable, $(\bar{\pi}_{J^1\pi}^\ddagger)$-transverse and integrable
multivector field $\X_h \in \vf^m(J^2\pi^\ddagger)$ solution to the equation
\begin{equation}\label{eqn:HamDynEqMultiVFReg}
i(\X_h)\Omega_h = 0 \, ,
\end{equation}

Conversely, if $\X_h \in \vf^{m}(J^2\pi^\ddagger)$ is a locally decomposable,
$(\bar{\pi}_{J^1\pi}^\ddagger)$-transverse and integrable multivector field solution to the equation
\eqref{eqn:HamDynEqMultiVFReg}, then there exist locally decomposable, integrable and
$\rho_M^r$-transverse multivector fields $\X \in \vf^{m}(\W_r)$ tangent to $\W_\Lag$ solution to
the equation \eqref{eqn:UnifDynEqMultiVF}.
\end{theorem}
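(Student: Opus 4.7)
The plan is to mirror the argument of Theorem \ref{thm:UnifToLagMultiVF}, replacing Lemma \ref{lemma:LagRelatedMultiVF} by its Hamiltonian analogue Lemma \ref{lemma:HamRelatedMultiVFReg}, and to exploit the identity $(\rho_2^r)^*\Omega_h = \Omega_r$. This identity is immediate from the commutative diagram $h \circ \rho_2^r = \rho_2 \circ \hat{h}$ valid in the hyperregular case, since $(\rho_2^r)^*\Omega_h = (\rho_2^r)^*h^*\Omega_1^s = (h \circ \rho_2^r)^*\Omega_1^s = (\rho_2 \circ \hat{h})^*\Omega_1^s = \hat{h}^*\rho_2^*\Omega_1^s = \hat{h}^*\Omega = \Omega_r$. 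Combined with the factorisation $\rho_M^r = \bar{\pi}_{J^1\pi}^\ddagger \circ \rho_2^r$, this will translate both the field equation and the transversality conditions between the two sides.

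For the forward direction, starting from $\X$ satisfying the hypotheses, the $\rho_2^\Lag$-projectability of the $j_\Lag$-related field $\X_o \in \vf^m(\W_\Lag)$ yields, by Lemma \ref{lemma:HamRelatedMultiVFReg}, a unique $\X_h \in \vf^m(J^2\pi^\ddagger)$. The dynamical equation transfers as
$$
\restric{(\rho_2^r)^*i(\X_h)\Omega_h}{\W_\Lag} = \restric{i(\X)(\rho_2^r)^*\Omega_h}{\W_\Lag} = \restric{i(\X)\Omega_r}{\W_\Lag} = 0 \, ,
$$
and, since $\rho_2^\Lag = \rho_2^r \circ j_\Lag$ is a surjective submersion onto $J^2\pi^\ddagger$, this forces $i(\X_h)\Omega_h = 0$ everywhere. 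Local decomposability of $\X_h$ is obtained by pushing the decomposable factors of $\X$ forward via $T\rho_2^r$, and $\bar{\pi}_{J^1\pi}^\ddagger$-transversality is a direct consequence of $\rho_M^r$-transversality of $\X$ combined with the factorisation above. Integrability is verified by pushing integral sections: if $\psi \in \Gamma(\rho_M^r)$ is an integral section of $\X$ (it takes values in $\W_\Lag$ by tangency), then $\psi_h = \rho_2^r \circ \psi$ is an integral section of $\X_h$, as $\X_h \circ \psi_h = \Lambda^mT\rho_2^r \circ \X \circ \psi = \Lambda^mT\rho_2^r \circ \psi^\prime = \psi_h^\prime$.

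For the converse, Lemma \ref{lemma:HamRelatedMultiVFReg} produces at least one $\X \in \vf^m(\W_r)$ tangent to $\W_\Lag$ whose $j_\Lag$-related field projects to $\X_h$ under $\rho_2^\Lag$, and reversing the computation above gives $\restric{i(\X)\Omega_r}{\W_\Lag} = 0$, so $\X$ solves equation \eqref{eqn:UnifDynEqMultiVF} on its natural support submanifold. The main obstacle, and where I expect the argument to require most care, is arranging integrability of $\X$: because $\rho_2^\Lag$ is only a submersion, the lift is far from unique and not every choice will admit integral sections. The plan is to start from an integral section $\psi_h$ of $\X_h$ and use the hyperregular Legendre map together with its global section $\Upsilon \colon J^2\pi^\ddagger \to J^3\pi$ (equivalently the diffeomorphism $\rho_1^\Lag$ from Proposition \ref{prop:LagDiffWL}) to lift $\psi_h$ to a distinguished holonomic section $\psi \in \Gamma(\rho_M^r)$ with values in $\W_\Lag$; among all lifts provided by Lemma \ref{lemma:HamRelatedMultiVFReg} one then selects one admitting $\psi$ as an integral section, which in a local chart amounts to fixing the free coefficients of $\X$ along $\ker T\rho_2^\Lag$, as in the tangency computation at the end of Section \ref{sec:UnifFieldEquationsMultiVF}. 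Local decomposability and $\rho_M^r$-transversality of this distinguished lift then follow from the corresponding properties of $\X_h$ and the submersion property of $\rho_2^r$.
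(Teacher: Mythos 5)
Your proposal is correct and follows essentially the same route as the paper, whose entire proof is the single remark that the argument is analogous to that of Theorem \ref{thm:UnifToLagMultiVF}; you fill in exactly the intended analogy, namely the identity $(\rho_2^r)^*\Omega_h = \Omega_r$ (via $h \circ \rho_2^r = \rho_2 \circ \hat{h}$) playing the role of $(\rho_1^r)^*\Omega_\Lag = \Omega_r$, and Lemma \ref{lemma:HamRelatedMultiVFReg} replacing Lemma \ref{lemma:LagRelatedMultiVF}. Your extra care about integrability in the converse (where $\rho_2^\Lag$ is only a submersion, so the lift must be chosen, e.g.\ via the global section $\Upsilon$ of $\Leg$) addresses a point the paper glosses over and is consistent with its statement that only \emph{some} lifts $\X$ exist with the required properties.
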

\begin{proof}
The proof of this result is analogous to the proof of Theorem \ref{thm:UnifToLagMultiVF}.
\end{proof}

Let $\X_h \in \vf^m(J^2\pi^\ddagger)$ be a locally decomposable multivector field given in the natural
coordinates of $J^2\pi^\ddagger$ by
\begin{equation}\label{eqn:HamGenericMultiVFLocalReg}
\X_h = f \bigwedge_{j=1}^{m}
\left(  \derpar{}{x^j} + F_j^\alpha\derpar{}{u^\alpha} + F_{i,j}^\alpha\derpar{}{u_i^\alpha}
+ G_{\alpha,j}^i\derpar{}{p_\alpha^i} + G_{\alpha,j}^{I}\derpar{}{p_\alpha^{I}} \right) \, ,
\end{equation}
Taking $f = 1$ as a representative of the equivalence class, since $\X_h$ is a solution
to the equation \eqref{eqn:HamDynEqMultiVFReg}, we obtain that the local equations for the component
functions of $\X_h$ are
$$
\begin{array}{c}
\displaystyle F_j^\alpha = \derpar{H}{p_\alpha^j} \quad ; \quad
\sum_{1_i+1_j=I}\frac{1}{n(ij)}\, F_{i,j}^\alpha = \derpar{H}{p_\alpha^I} \, , \\[15pt]
\displaystyle\sum_{i=1}^m G^{i}_{\alpha,i} = -\derpar{H}{u^\alpha} \quad ; \quad
\sum_{j=1}^m G^{1_i+1_j}_{\alpha,j} = -\derpar{H}{u_i^\alpha} \, .
\end{array}
$$

\begin{theorem}
The following assertions on a section $\psi_h \in \Gamma(\bar{\pi}_{J^1\pi}^\ddagger)$
are equivalent:
\begin{enumerate}
\item $\psi_h$ is a solution to equation \eqref{eqn:HamDynEqSectReg}, that is,
$$
\psi_h^*i(X)\Omega_h = 0 \, , \quad \mbox{for every } X \in \vf(J^2\pi^\ddagger) \, .
$$
\item In natural coordinates, if $\psi_h$ is given by
$\psi_h(x^i) = (x^i,u^\alpha,u_i^\alpha,p_\alpha^i,p_\alpha^I)$, then
its component functions are a solution to the equations \eqref{eqn:HamiltonEquationsReg}, that is,
$$
\begin{array}{c}
\displaystyle\derpar{u^\alpha}{x^i} = \derpar{H}{p_\alpha^i} \quad ; \quad
\sum_{1_i+1_j=I}\frac{1}{n(ij)}\,\derpar{u_i^\alpha}{x^j} = \derpar{H}{p_\alpha^I} \, , \\[15pt]
\displaystyle\sum_{i=1}^m\derpar{p_\alpha^i}{x^i} = -\derpar{H}{u^\alpha} \quad ; \quad
\sum_{j=1}^m\derpar{p_\alpha^{1_i+1_j}}{x^j} = -\derpar{H}{u_i^\alpha} \, .
\end{array}
$$
\item $\psi_h$ is a solution to the equation
\begin{equation*}
i(\Lambda^m\psi_h^\prime)(\Omega_h \circ \psi) = 0 \, ,
\end{equation*}
where $\Lambda^m\psi_h^\prime \colon M \to \Lambda^mT(J^2\pi^\ddagger)$ is the canonical
lifting of $\psi_h$.
\item $\psi_h$ is an integral section of a multivector field contained in a class of locally
decomposable, integrable and $(\bar{\pi}_{J^1\pi}^\ddagger)$-transverse multivector fields
$\{ \X_h \} \subset \vf^{m}(J^2\pi^\ddagger)$ satisfying equation \eqref{eqn:HamDynEqMultiVFReg}, that is,
$$
i(\X_h)\Omega_h = 0 \, .
$$
\end{enumerate}
\end{theorem}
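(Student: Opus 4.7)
The plan is to prove the four equivalences by direct computation in natural coordinates of $J^2\pi^\ddagger$, mimicking the strategy used for Theorem \ref{thm:EquivalenceTheoremUnified}. A key simplification relative to the unified formulation is that, in the (hyper)regular case, $\Omega_h$ is already multisymplectic; consequently no auxiliary constraint submanifold appears and no separate holonomy condition needs to be imposed on $\psi_h$, so none of the four conditions carries a hidden extra hypothesis.

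For $(1)\Leftrightarrow(2)$ I would take the coordinate expression of $\Omega_h$ derived earlier in this Section and contract it against the basis $\{\partial/\partial x^k,\partial/\partial u^\alpha,\partial/\partial u_i^\alpha,\partial/\partial p_\alpha^i,\partial/\partial p_\alpha^I\}$ of $\vf(J^2\pi^\ddagger)$. Pulling back by $\psi_h$, the pairings with $\partial/\partial p_\alpha^i$ and $\partial/\partial p_\alpha^I$ recover the two velocity-type equations in \eqref{eqn:HamiltonEquationsReg}, while those with $\partial/\partial u^\alpha$ and $\partial/\partial u_i^\alpha$ recover the two momentum-type ones; the contraction with $\partial/\partial x^k$ produces an equation that is a linear combination of the preceding ones and therefore carries no new information, exactly as in the analogous step of Theorem \ref{thm:EquivalenceTheoremUnified}. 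For $(2)\Leftrightarrow(3)$ I would use the elementary identity $\psi_h^* i(X)(\Omega_h\circ\psi_h)=i(\Lambda^m\psi_h')(\Omega_h\circ\psi_h)\cdot(\cdot)$ and write the canonical lifting $\Lambda^m\psi_h'$ componentwise as a wedge of local vector fields of the form $\partial/\partial x^j+(\partial u^\alpha/\partial x^j)\partial/\partial u^\alpha+\cdots$; contracting with $\Omega_h\circ\psi_h$ produces the same system of PDEs as in the previous step.

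For $(2)\Leftrightarrow(4)$ I would start from the generic locally decomposable, $\bar{\pi}_{J^1\pi}^\ddagger$-transverse representative \eqref{eqn:HamGenericMultiVFLocalReg}, normalize $f=1$, and expand $i(\X_h)\Omega_h=0$ to obtain the algebraic system on the coefficients $F_j^\alpha,F_{i,j}^\alpha,G_{\alpha,j}^i,G_{\alpha,j}^I$ displayed just below \eqref{eqn:HamGenericMultiVFLocalReg}. If $\psi_h$ is an integral section of $\X_h$, the integral-section identities $\partial u^\alpha/\partial x^j=F_j^\alpha\circ\psi_h$, $\partial u_i^\alpha/\partial x^j=F_{i,j}^\alpha\circ\psi_h$, $\partial p_\alpha^i/\partial x^j=G_{\alpha,j}^i\circ\psi_h$, $\partial p_\alpha^I/\partial x^j=G_{\alpha,j}^I\circ\psi_h$ convert the algebraic system into exactly \eqref{eqn:HamiltonEquationsReg} for $\psi_h$. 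In the reverse direction one starts from a $\psi_h$ satisfying the Hamilton equations, solves locally for a representative $\X_h$ (the Hamilton equations under-determine several coefficients, which is why the statement speaks of a \emph{class}), and glues with a partition of unity.

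The hard part will be the reverse implication $(4)\Rightarrow(2)$ from a conceptual viewpoint: because the projection $\rho_2^\Lag\colon\W_\Lag\to J^2\pi^\ddagger$ is only a submersion, unlike its Lagrangian counterpart $\rho_1^\Lag$ which is a diffeomorphism, Lemma \ref{lemma:HamRelatedMultiVFReg} does not give a one-to-one correspondence between $\X_h$ and an $\X\in\vf^m(\W_r)$, so one cannot simply transport the argument of Theorem \ref{thm:EquivalenceTheoremUnified} across the projection. Instead one must verify directly in $J^2\pi^\ddagger$ that integrability and $\bar{\pi}_{J^1\pi}^\ddagger$-transversality of $\X_h$ are compatible with \eqref{eqn:HamiltonEquationsReg} — here hyperregularity enters decisively, guaranteeing that the Hamilton equations form a well-posed first-order PDE system on the components of $\psi_h$. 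This asymmetry with the Lagrangian formulation of Section \ref{sec:LagrangianForm} is absorbed into the statement itself, which is phrased for classes of multivector fields rather than for individual ones.
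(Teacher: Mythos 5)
The paper itself prints no proof of this theorem: it is left as the direct Hamiltonian analogue of Theorem \ref{thm:EquivalenceTheoremUnified}, with every needed computation already displayed in this Section. Your proposal fills that gap correctly and in the intended way: the contractions $i(\partial/\partial u^\alpha)\Omega_h$, $i(\partial/\partial u_i^\alpha)\Omega_h$, $i(\partial/\partial p_\alpha^i)\Omega_h$, $i(\partial/\partial p_\alpha^I)\Omega_h$ are exactly those computed in the proof that $\Omega_h$ is multisymplectic, and pulling them back along $\psi_h$ yields \eqref{eqn:HamiltonEquationsReg} once one uses $\derpar{H}{u^\alpha}=-\derpar{L}{u^\alpha}$, $\derpar{H}{p_\alpha^i}=u_i^\alpha$, $\derpar{H}{p_\alpha^I}=f_I^\alpha$; the $\partial/\partial x^k$ component is redundant, and items (3) and (4) follow by the same substitutions as in the unified case. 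One correction of emphasis: you label $(4)\Rightarrow(2)$ as the hard direction, but that implication is immediate — one substitutes the integral-section identities into the algebraic system below \eqref{eqn:HamGenericMultiVFLocalReg}, entirely inside $J^2\pi^\ddagger$, so the non-injectivity of the correspondence in Lemma \ref{lemma:HamRelatedMultiVFReg} is irrelevant here (it only matters if one tries to deduce this theorem by projecting the unified one, which neither you nor the paper needs to do). The genuinely delicate direction is $(2)\Rightarrow(4)$: from a single solution $\psi_h$ one must manufacture a locally decomposable, \emph{integrable}, transverse $\X_h$ with $i(\X_h)\Omega_h=0$ admitting $\psi_h$ as integral section, and a partition of unity preserves the linear equation and transversality but not integrability; one should instead extend the tangent distribution of $\textnormal{Im}(\psi_h)$, or invoke the existence argument via the regular Hessian as in Section \ref{sec:UnifFieldEquationsMultiVF}. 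The paper treats this point at the same informal level in the Lagrangian and unified cases, so this is a shared imprecision rather than a flaw peculiar to your argument.
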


\subsection{Singular (almost-regular) Lagrangian densities}

For singular (almost-regular) Lagrangian densities, only in the most favourable cases does there
exists a submanifold $\W_f \hookrightarrow \W_\Lag$ where the field equations can be solved. In this
situation, the solutions in the Hamiltonian formalism cannot be obtained directly from the projection
of the solutions in the unified setting, but rather by passing through the Lagrangian formalism and
using the Legendre map. Recall that, in this case, the phase space of the system is
$\P = \textnormal{Im}(\Leg) \hookrightarrow J^2\pi^\ddagger$.

\begin{proposition}
Let $\Lag \in \Omega^{m}(J^2\pi)$ be an almost-regular Lagrangian density. Let $\psi \in \Gamma(\rho_M^r)$
be a solution to the equation \eqref{eqn:UnifDynEqSect}. Then, the section
$\psi_h = \Leg_o \circ \rho_1^r \circ \psi = \Leg_o \circ \psi_\Lag \in \Gamma(\bar{\pi}_\P)$ is a solution
to the equation
\begin{equation}\label{eqn:HamDynEqSectSing}
\psi_h^*i(X)\Omega_h = 0 \, , \quad \mbox{for every } X \in \vf(\P) \, .
\end{equation}
\end{proposition}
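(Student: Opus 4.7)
The plan is to imitate the strategy of Proposition \ref{prop:UnifToLagSect} and Proposition \ref{prop:UnifToHamSectReg}, but routing through the Lagrangian formalism rather than projecting directly by $\rho_2^r$, since in the almost-regular case the phase space is the proper submanifold $\P = \mathrm{Im}(\Leg) \hookrightarrow J^{2}\pi^\ddagger$ and $\psi$ need not project nicely onto $J^{2}\pi^\ddagger$.

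First I would invoke Proposition \ref{prop:UnifToLagSect}, which applies in exactly the same way for singular Lagrangians, to conclude that the section $\psi_\Lag = \rho_1^r \circ \psi \in \Gamma(\bar{\pi}^3)$ is holonomic and satisfies $\psi_\Lag^* i(Z)\Omega_\Lag = 0$ for every $Z \in \vf(J^3\pi)$. This reduces the problem to showing that $\psi_h = \Leg_o \circ \psi_\Lag$ is a solution to the Hamiltonian equation on $\P$.

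Next I would use the $\Leg_o$-relatedness of the Hamilton–Cartan and Poincaré–Cartan forms, namely the identity $\Leg_o^*\Omega_h = \Omega_\Lag$ already recorded in the paper just after the definition of the Hamiltonian $\mu$-section $h$. Because $\Lag$ is almost-regular, $\Leg_o \colon J^3\pi \to \P$ is a surjective submersion, so for every $X \in \vf(\P)$ there exists (non-uniquely) a vector field $Y \in \vf(J^3\pi)$ which is $\Leg_o$-related to $X$, i.e.\ $T\Leg_o \circ Y = X \circ \Leg_o$. Choosing any such $Y$, I would compute
\begin{align*}
\psi_h^* i(X)\Omega_h
&= (\Leg_o \circ \psi_\Lag)^* i(X)\Omega_h
= \psi_\Lag^* \, \Leg_o^* i(X)\Omega_h \\
&= \psi_\Lag^* i(Y)\Leg_o^* \Omega_h
= \psi_\Lag^* i(Y)\Omega_\Lag = 0,
\end{align*}
the last equality by the Lagrangian equation satisfied by $\psi_\Lag$. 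Since $X$ was arbitrary, the conclusion follows.

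The main (mild) obstacle is simply the existence of the $\Leg_o$-related lift $Y$, and checking that the resulting identity does not depend on the choice of $Y$. The independence is automatic from the computation, because any two such lifts differ by an element of $\ker T\Leg_o$, which is annihilated after pulling back through $\Leg_o$. The existence of $Y$ is precisely where the almost-regularity hypothesis enters, through the fact that $\Leg_o$ is a submersion onto $\P$; without this, one could not guarantee the local lift. Everything else parallels the regular case in Proposition \ref{prop:UnifToHamSectReg}, with the role of $\rho_2^r$ and the diffeomorphism there replaced by the Legendre submersion $\Leg_o$ composed with $\rho_1^r$.
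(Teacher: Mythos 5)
Your proposal is correct and follows essentially the same route as the paper's own proof: lift $X\in\vf(\P)$ to a $\Leg_o$-related $Y\in\vf(J^3\pi)$ using the submersion property guaranteed by almost-regularity, use $\Leg_o^*\Omega_h=\Omega_\Lag$, and conclude via Proposition \ref{prop:UnifToLagSect}. The remarks on independence of the choice of lift are a harmless elaboration; no gap.
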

\begin{proof}
Since the Lagrangian density $\Lag$ is assumed to be almost-regular, then the map $\Leg_o$ is a
submersion onto its image, $\P$. Thus, for every vector field $X \in \vf(\P)$ there exist some vector
fields $Y \in \vf(J^3\pi)$ such that $X$ and $Y$ are $\Leg_o$-related. Using this particular choice of
$\Leg_o$-related vector fields, we have
\begin{align*}
\psi_h^*i(X)\Omega_h &= (\Leg_o \circ \psi_\Lag)^*i(X)\Omega_h
=\psi_\Lag^*(\Leg_o^*i(X)\Omega_h) \\
&= \psi_\Lag^*i(Y)\Leg_o^*\Omega_h
= \psi_\Lag^*i(Y)\Omega_\Lag \, .
\end{align*}
Then, using Proposition \ref{prop:UnifToLagSect}, we have proved
$\psi_h^*i(X)\Omega_h = \psi_\Lag^*i(Y)\Omega_\Lag = 0$,
since the last equality holds for every $Y \in \vf(J^3\pi)$ and, in particular,
for every vector field $\Leg_o$-related to a vector field in $\P$.
\end{proof}

The diagram for this situation is the following
$$
\xymatrix{
\ & \ & \W_r \ar[dll]_{\rho^r_1} \ar[ddd]^<(0.4){\rho^r_M} & \ & \\
J^3\pi \ar[ddrr]^{\bar{\pi}^3} \ar[rrrr]^<(0.65){\Leg}|(.425){\hole}|(.49){\hole} \ar[rrrrd]_<(0.65){\Leg_o}|(.42){\hole}|(.495){\hole} & \ & \ & \ & J^2\pi^\ddagger \\
\ & \ & \ & \ & \P \ar@{^{(}->}[u]^{\jmath} \ar[dll]_{\bar{\pi}_\P} \\
\ & \ & M \ar@/^1pc/[uull]^{\psi_\Lag} \ar@/_1pc/@{-->}[urr]_{\psi_h = \Leg_o \circ \psi_\Lag} \ar@/^1pc/[uuu]^<(0.3){\psi} & \ & \
}
$$

Now, assume that there exists a submanifold $\W_f \hookrightarrow \W_\Lag$ and a multivector field
$\X \in \vf^m(\W_r)$, defined at support on $\W_f$ and tangent to $\W_f$, which is a solution to the
equation \eqref{eqn:UnifDynEqMultiVFSing}. Now consider the submanifolds
$S_f = \rho_1^\Lag(\W_f) \hookrightarrow J^{3}\pi$ and
$P_f = \Leg(S_f) \hookrightarrow \P \hookrightarrow J^2\pi^\ddagger$. Using Theorem
\ref{thm:UnifToLagMultiVF}, from the holonomic multivector field $\X \in \vf^m(\W_r)$ we obtain the
corresponding holonomic multivector fields $\X_\Lag \in \vf^m(J^3\pi)$ solution to the equation
\eqref{eqn:LagDynEqMultiVF} at support on $S_f$. From this, one can prove that there are multivector
fields in $S_f$ (perhaps only on the points of another submanifold), which are $\Leg$-projectable to
$P_f$. So we have the diagram
$$
\xymatrix{
\ & \ & \W_r \ar@/_1.3pc/[ddll]_{\rho_1^r} \ar@/^1.3pc/[ddrr]^{\rho_2^r} & \ & \ \\
\ & \ & \W_\Lag \ar[dll]_{\rho_1^\Lag} \ar[drr]^{\rho_2^\Lag} \ar@{^{(}->}[u]^{j_\Lag} & \ & \ \\
J^3\pi \ar[rrrr]^<(0.35){\Leg}|(.495){\hole} \ar[drrrr]_<(0.35){\Leg_o}|(.495){\hole} & \ & \ & \ & J^2\pi^\ddagger \\
\ & \ & \ & \ & \P \ar@{^{(}->}[u]^-{\jmath} \\
\ & \ & \W_f \ar@{^{(}->}[uuu] \ar[dll] \ar[drr] & \ & \ \\
S_f \ar@{^{(}->}[uuu] & \ & \ & \ & P_f \ar@{^{(}->}[uu] \\
}
$$

Moreover, we can state the following result, which is the analogous theorem to Theorem
\ref{thm:UnifToHamMultiVFReg} in the case of almost-regular Lagrangian densities.

\begin{theorem}
Let $\X \in \vf^m(\W_r)$ be a locally decomposable, $\rho_M^r$-transverse
and integrable multivector field, defined at support on $\W_f$ and tangent to $\W_f$,
which is a solution to the equation \eqref{eqn:UnifDynEqMultiVFSing}. Then there exists a locally
decomposable, integrable and $(\bar{\pi}_{\P}^\ddagger)$-transverse multivector field
$\X_h \in \vf^m(\P)$, defined at support on $P_f$ and tangent to
$P_f$, which is a solution to the equation
\begin{equation}\label{eqn:HamDynEqMultiVFSing}
\restric{i(X_h)\Omega_h}{P_f} = 0 \, .
\end{equation}

Conversely, if $\X_h \in \vf^{m}(\P)$ is a locally decomposable,
$(\bar{\pi}_{\P}^\ddagger)$-transverse and integrable multivector field defined at support
on $P_f$ and tangent to $P_f$ which is a solution to the equation \eqref{eqn:HamDynEqMultiVFSing},
then there exist locally decomposable, $\rho_M^r$-transverse and integrable multivector fields
$\X \in \vf^{m}(\W_r)$, defined at support on $\W_f$ and
tangent to $\W_f$, which are solutions to the equation \eqref{eqn:UnifDynEqMultiVFSing}.
\end{theorem}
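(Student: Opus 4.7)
The plan is to reduce the statement to the Lagrangian--unified correspondence already established in Theorem \ref{thm:UnifToLagMultiVF}, and then to transfer between the Lagrangian phase space $J^3\pi$ and the Hamiltonian phase space $\P$ via the restricted Legendre map $\Leg_o \colon J^3\pi \to \P$, using the key identity $\Leg_o^*\Omega_h = \Omega_\Lag$ noted earlier. The underlying diagram is the one displayed just before the statement, which I will follow step by step.

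For the direct implication, starting from $\X \in \vf^m(\W_r)$ as in the hypothesis, I would first invoke Theorem \ref{thm:UnifToLagMultiVF} to obtain a unique locally decomposable, integrable and $\bar{\pi}^3$-transverse (holonomic) multivector field $\X_\Lag \in \vf^m(J^3\pi)$, defined on $S_f = \rho_1^\Lag(\W_f)$ and tangent to $S_f$, which satisfies $\restric{i(\X_\Lag)\Omega_\Lag}{S_f} = 0$. The crucial step is to show that, perhaps after restricting to a further submanifold, $\X_\Lag$ admits a representative $\Leg_o$-projectable onto $P_f = \Leg_o(S_f)$; this is where the almost-regularity of $\Lag$ enters, via the connectedness of the fibres of $\Leg$ and the submersion property of $\Leg_o$. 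Once $\X_h \in \vf^m(\P)$ is obtained as the projection, we have $\Lambda^mT\Leg_o \circ \X_\Lag = \X_h \circ \Leg_o$ on $S_f$, and therefore on $P_f$
\[
i(\X_h)\Omega_h \circ \Leg_o = \Leg_o^*\bigl(i(\X_h)\Omega_h\bigr)
= i(\X_\Lag)\Leg_o^*\Omega_h = i(\X_\Lag)\Omega_\Lag = 0 \, ,
\]
which yields \eqref{eqn:HamDynEqMultiVFSing}, since $\Leg_o$ is a submersion onto $\P$. Local decomposability, integrability, $(\bar{\pi}_\P^\ddagger)$-transversality and tangency to $P_f$ all follow from the analogous properties of $\X_\Lag$ together with the surjectivity of $\Leg_o$ onto $\P$.

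For the converse, given $\X_h \in \vf^m(\P)$ tangent to $P_f$ and solving \eqref{eqn:HamDynEqMultiVFSing}, I would use a local section $\sigma \in \Gamma_\P(\Leg)$ (which exists by almost-regularity) to lift $\X_h$ to a locally decomposable multivector field $\X_\Lag \in \vf^m(J^3\pi)$ which is $\Leg_o$-related to $\X_h$; then $\restric{i(\X_\Lag)\Omega_\Lag}{S_f} = \Leg_o^*\restric{i(\X_h)\Omega_h}{P_f} = 0$. One must verify that within the class of such lifts there is a representative which is holonomic, integrable and $\bar{\pi}^3$-transverse: the ambiguity lies in $\ker T\Leg_o$, which, by Proposition \ref{prop:RankLegendreMaps} and the structure of $\widetilde{\Leg}$, can be used to adjust the top-order components so that the holonomy equations \eqref{eqn:MultiVFHolonomyLocal} are satisfied on $S_f$. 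Once such $\X_\Lag$ is obtained, the converse part of Theorem \ref{thm:UnifToLagMultiVF} produces the required $\X \in \vf^m(\W_r)$ supported on, and tangent to, $\W_f$ and solving \eqref{eqn:UnifDynEqMultiVFSing}.

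The main obstacle I anticipate is the projectability issue in both directions. Unlike the hyperregular situation treated in Theorem \ref{thm:UnifToHamMultiVFReg}, $\Leg_o$ is not a diffeomorphism onto $J^2\pi^\ddagger$, and a holonomic multivector field on $J^3\pi$ need not be $\Leg_o$-projectable in general. The remedy is to exploit the almost-regularity hypotheses (connectedness of the fibres of $\Leg$, submersion onto $\P$) to select representatives in the equivalence classes that descend to $P_f$ in the first direction, and to pick holonomic lifts in the converse direction; this may force a further restriction to a submanifold of $S_f$ (resp.\ $P_f$), in full parallel with the appearance of $\W_f \hookrightarrow \W_\Lag$ in the unified formalism. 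Beyond this point, the argument is a formal translation through the commutative diagram preceding the statement, relying only on $\Leg_o^*\Omega_h = \Omega_\Lag$ and the semibasic character of the forms involved.
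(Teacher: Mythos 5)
Your proposal is correct and follows essentially the same route the paper intends: the paper does not give a formal proof of this theorem, but the discussion immediately preceding it prescribes exactly your strategy --- pass through the Lagrangian formalism via Theorem \ref{thm:UnifToLagMultiVF}, then transfer to $\P$ along $\Leg_o$ using $\Leg_o^*\Omega_h = \Omega_\Lag$, with the caveat (which both you and the paper acknowledge without fully resolving) that $\Leg_o$-projectability may only hold after a further restriction of $S_f$. Your treatment of the projectability issue and of the converse via sections of $\Leg$ is a faithful, somewhat more explicit elaboration of the paper's sketch.
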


\section{Examples}
\label{sec:Examples}

\subsection{A first-order Lagragian density as a second-order one}

Let us first study the case of first-order classical field theories considered as second-order ones.
Hence, let $\pi \colon E \to M$ be the configuration bundle describing a classical field theory,
with $M$ being a $m$-dimensional orientable manifold and $E$ a $(m+n)$-dimensional
manifold. Let $\eta \in \Omega^{m}(M)$ be a fixed volume form for $M$, and $\Lag \in \Omega^m(J^1\pi)$
be a first-order Lagrangian density for this theory, that is, a $\bar{\pi}^1$-semibasic $m$-form
on $J^1\pi$. Since $\Lag$ is $\bar{\pi}^1$-semibasic, we can write $\Lag = L \cdot (\bar{\pi}^1)^*\eta$,
where $L \in C^\infty(J^1\pi)$ is the first-order Lagrangian function associated to $\Lag$ and
$\eta$.

Now, let $\Lag_o = (\pi_1^2)^*\Lag \in \Omega^{m}(J^2\pi)$ be the pull-back of $\Lag$ by the canonical
submersion $\pi_1^2 \colon J^2\pi \to J^1\pi$. Since $\Lag$ is $\bar{\pi}^1$-semibasic, we have that
$\Lag_o$ is $\bar{\pi}^2$-semibasic, and thus there exists a function $L_o = (\pi_1^2)^*L$ such that
$\Lag_o = L_o \cdot (\bar{\pi}^2)^*\eta$. Observe that we have
$$
\derpar{L_o}{u_I^\alpha} = 0 \, , \ \mbox{for every } |I| = 2 \, , \ 1 \leqslant \alpha \leqslant n \, ,
$$
and, therefore, this second-order Lagrangian density is always singular.

\paragraph{\textbf{Lagrangian-Hamiltonian formalism.}}

In this setting, the local expression of the local Hamiltonian function $\hat{H} \in C^\infty(\W_r)$
is exactly \eqref{eqn:HamiltonianFunctionLocal}, replacing $L$ by $L_o$. On the other hand, the
coordinate expressions of the forms $\Theta_r$ and $\Omega_r$ remain as in
\eqref{eqn:HamiltonCartanFormsLocal}.

Let $\psi \in \Gamma(\rho_M^r)$ be a section. Then, computing in coordinates the field equation
\eqref{eqn:UnifDynEqSect} in this particular case, we obtain the following system of equations
\begin{align*}
& \sum_{i=1}^{m}\derpar{p_\alpha^i}{x^i} - \derpar{\hat{L}_o}{u^\alpha} = 0 \, ,\\
& \sum_{j=1}^{m} \frac{1}{n(ij)} \, \derpar{p_\alpha^{1_i+1_j}}{x^j} + p_\alpha^i - \derpar{\hat{L}_o}{u_i^\alpha} = 0 \, , \\
& p_\alpha^{I} = 0 \, , \\
& u_i^\alpha - \derpar{u^\alpha}{x^i} = 0 \quad ; \quad u_{I}^\alpha - \sum_{1_i+1_j=I} \frac{1}{n(ij)} \, \derpar{u_i^\alpha}{x^j} = 0 \, .
\end{align*}
That is, the second-order multimomenta $p_\alpha^I$ vanish, and therefore these equations reduce to
\begin{align*}
& \sum_{i=1}^{m}\derpar{p_\alpha^i}{x^i} - \derpar{\hat{L}_o}{u^\alpha} = 0 \, , \\
& p_\alpha^i - \derpar{\hat{L}_o}{u_i^\alpha} = 0 \, , \\
& p_\alpha^{I} = 0 \, , \\
& u_i^\alpha - \derpar{u^\alpha}{x^i} = 0 \quad ; \quad u_{I}^\alpha - \sum_{1_i+1_j=I} \frac{1}{n(ij)} \, \derpar{u_i^\alpha}{x^j} = 0 \, .
\end{align*}
From these local equations, we obtain the coordinate expression of the Legendre map
$\Leg \colon J^3\pi \to J^2\pi^\ddagger$, which is
$$
\Leg^*p_\alpha^i = \derpar{\hat{L}_o}{u_i^\alpha} \quad ; \quad
\Leg^*p_\alpha^I = 0 \, ,
$$
that is, the coordinate expression of the Legendre map corresponding to a first-order classical
field theory.

On the other hand, by combining the first two groups of equations, we obtain the
Euler-Lagrange equations for classical field theories
$$
\restric{\derpar{\hat{L}_o}{u^\alpha}}{\psi}
- \restric{\frac{d}{dx^i} \, \derpar{\hat{L}_o}{u_i^\alpha}}{\psi} = 0 \, .
$$

Now, let $\X \in \vf^m(\W_r)$ be a locally decomposable multivector field given locally by
\eqref{eqn:UnifGenericMultiVFLocal}. Then the equation \eqref{eqn:UnifDynEqMultiVF} gives
locally the following system of equations
\begin{align*}
& F_j^\alpha = u_j^\alpha \quad ; \quad \sum_{1_i+1_j=I} \frac{1}{n(ij)} \, F_{i,j}^\alpha = u_I^\alpha \, , \\
& \sum_{i=1}^{m} G_{\alpha,i}^{i} = \derpar{\hat L_o}{u^\alpha} \, , \\
& \sum_{j=1}^{m} \frac{1}{n(ij)} \, G_{\alpha,j}^{1_i+1_j} = \derpar{\hat L_o}{u_i^\alpha} - p_\alpha^i \, , \\
& p_\alpha^K = 0 \, , \quad |K| = 2 \, .
\end{align*}
Furthermore, if we assume $\X$ to be holonomic, then we have the additional equations
$$
F_{i,j}^\alpha = u_{1_i+1_j}^\alpha \quad ; \quad
F_{I,j}^\alpha = u_{I + 1_j}^\alpha \, .
$$
From the field equations, we deduce that the first constraint submanifold $\W_c \hookrightarrow \W_r$
is given in coordinates by the local constraints $p^I_\alpha = 0$. The tangency condition for the
multivector field $\X$ along $\W_c$ enables us to determine all the coefficients $G_{\alpha,j}^I$,
with $1 \leqslant j \leqslant m$, $1 \leqslant \alpha \leqslant n$ and $|I| = 2$, in the following
way
$$
G_{\alpha,j}^I = 0 \, .
$$
Then, using the previous local field equations, we obtain the following additional constraints
$$
p_\alpha^i - \derpar{\hat{L}_o}{u_i^\alpha} = 0 \, ,
$$
which define a new submanifold $\W_\Lag \hookrightarrow \W_r$. Analyzing the tangency of $\X$ along
this new submanifold, we obtain the following equations
\begin{align*}
G_{\alpha,k}^{i} = \frac{d}{dx^k} \, \derpar{\hat{L}_o}{u_i^\alpha} \, .
\end{align*}
Using again the field equations, we obtain the Euler-Lagrange equations for a multivector
field, which are
\begin{align*}
\derpar{\hat{L}_o}{u^\alpha} - \frac{d}{dx^i} \, \derpar{\hat{L}_o}{u_i^\alpha}
+ \left( F_{i,j}^\beta - \frac{d}{dx^j} u_{i}^\beta \right) \derpars{\hat{L}_o}{u_i^\beta}{u_j^\alpha} = 0\, .
\end{align*}

That is, we obtain the coordinate expression of the field equations for first-order field
theories in the unified formalism, which were obtained previously in
\cite{art:Echeverria_Lopez_Marin_Munoz_Roman04}.

\paragraph{\textbf{Lagrangian formalism.}}

Now we recover the Lagrangian structures and equations from the unified setting. In order to obtain
the Poincar\'{e}-Cartan $m$-form $\Theta_\Lag = \widetilde{\Leg}^*\Theta_1^s \in \Omega^{m}(J^3\pi)$,
we need the extended Legendre map $\widetilde{\Leg} \colon J^3\pi \to J^2\pi^\dagger$. From the
results in Section \ref{sec:UnifFieldEquationsSect}, the extended Legendre map is locally given by
\eqref{eqn:ExtendedLegendreMapLocal}, which in our case reduces to
$$
\widetilde{\Leg}^*p_\alpha^i = \derpar{L_o}{u_i^\alpha} \quad ; \quad
\widetilde{\Leg}^*p_\alpha^I = 0 \quad ; \quad
\widetilde{\Leg}^*p = L_o - u_i^\alpha \derpar{L_o}{u_i^\alpha} \, .
$$
Therefore, the Poincar\'{e}-Cartan $m$-form is given locally by
$$
\Theta_\Lag = \derpar{L_o}{u_i^\alpha} (d u^\alpha \wedge d^{m-1}x_i - u_i^\alpha d^mx)
+ L_o d^mx \, ,
$$
which is exactly the Poincar\'{e}-Cartan $m$-form for a first-order classical field theory.

Now, if $\Omega_\Lag = -d\Theta_\Lag$, we recover the Lagrangian solutions for the field equations
from the unified formalism. In particular, if $\psi \in \Gamma(\rho^r_M)$ is a holonomic section
solution to the field equation \eqref{eqn:UnifDynEqSect}, then the section
$\psi_\Lag = \rho_1^r \circ \psi \in \Gamma(\bar{\pi}^3)$ is holonomic and is a solution to the field
equation \eqref{eqn:LagDynEqSect}. In coordinates, the component functions of the section
$\psi_\Lag = j^3\phi$, for some $\phi(x^i) = (x^i,u(x^i)) \in \Gamma(\pi)$, are a solution to the
Euler-Lagrange equation
$$
\restric{\derpar{L_o}{u^\alpha}}{j^3\phi}
- \restric{\frac{d}{dx^i} \, \derpar{L_o}{u_i^\alpha}}{j^3\phi} = 0 \, .
$$
Finally, if $\X \in \vf^{m}(\W_r)$ is a locally decomposable holonomic multivector field
solution to the field equation \eqref{eqn:UnifDynEqMultiVF}, then there exists a unique locally decomposable
holonomic multivector field $\X_\Lag \in \vf^{m}(J^3\pi)$ solution to the equation
\eqref{eqn:LagDynEqMultiVF}. In coordinates, the component functions of this multivector
field must satisfy the equation
$$
\derpar{L_o}{u^\alpha} - \frac{d}{dx^i} \, \derpar{L_o}{u_i^\alpha}
+ \left( F_{i,j}^\beta - \frac{d}{dx^j} u_{i}^\beta \right) \derpars{L_o}{u_i^\beta}{u_j^\alpha} = 0\, .
$$

\paragraph{\textbf{Hamiltonian formalism.}}

Observe that, in this situation, the second-order Lagrangian density $\Lag_o = (\pi^2_1)^*\Lag$
can not be regular. Nevertheless, it is straightforward to compute the coordinate expression of
a local Hamiltonian function $H$ that specifies the Hamiltonian $\mu$-section $h$ of a first-order
classical field theory as
$$
H(x^i,u^\alpha,u_i^\alpha,p_\alpha^i,p_\alpha^I) = p_\alpha^iu_i^\alpha - (\pi_1^3 \circ \sigma)^*L_o \, ,
$$
where $\sigma$ is any (local) section of the Legendre map associated to $L_o$. It is now straightforward
to obtain the Hamilton-De Donder-Weyl equations for this first-order classical field theory
\cite{art:Echeverria_DeLeon_Munoz_Roman07}.

\subsection{Loaded and clamped plate}

Let us consider a plate with clamped edges. We wish to determine the bending (or deflection)
perpendicular to the plane of the plate under the action of an external force given by a uniform
load. This system has been studied using a previous version of the unified formalism in
\cite{art:Campos_DeLeon_Martin_Vankerschaver09}, and can be modeled as a second-order field theory,
taking $M = \mathbb{R}^2$ as the base manifold (the plate) and the ``vertical'' bending as a fiber bundle
$E = \mathbb{R}^2 \times \mathbb{R} \stackrel{\pi}{\longrightarrow} \mathbb{R}^2$ (that is, the fibers are $1$-dimensional).

We consider in $M = \mathbb{R}^2$ the canonical coordinates $(x,y)$ of the Euclidean plane, and in $E = \mathbb{R}^3$
we take the global coordinates $(x,y,u)$ adapted to the bundle structure. Recall that $\mathbb{R}^2$ admits
a canonical volume form $\eta = d x \wedge d y \in \Omega^{2}(\mathbb{R}^2)$.

In the induced coordinates $(x,y,u,u_1,u_2,u_{(2,0)},u_{(1,1)},u_{(0,2)})$ of $J^2\pi$,
the Lagrangian density $\Lag \in \Omega^{2}(J^2\pi)$ for this field theory is given by
$$
\Lag = \frac{1}{2}( u_{(2,0)}^2 + 2u_{(1,1)}^2 + u_{(0,2)}^2 - 2qu ) \, dx \wedge dy \, ,
$$
where $q \in \mathbb{R}$ is a constant modeling the uniform load on the plate.

\paragraph{\textbf{Lagrangian-Hamiltonian formalism.}}

Following the results in Section \ref{sec:GeomSetting}, let us consider the fiber bundles
$$
\W = J^3\pi \times_{J^1\pi} J^2\pi^\dagger \quad ; \quad
\W_r = J^3\pi \times_{J^1\pi} J^2\pi^\ddagger \, ,
$$
with the natural coordinates introduced in the aforementioned Section.

Observe that, in this example, we have $\dim J^3\pi = 12$ and $\dim J^2\pi^\ddagger = 10$,
and therefore $\dim\W = 18$ and $\dim\W_r = 17$.

The Hamiltonian $\mu_\W$-section $\hat{h} \in \Gamma(\mu_\W)$ is specified by the local Hamiltonian function
\begin{align*}
\hat{H} &= p^1u_1 + p^2u_2 + p^{(2,0)}u_{(2,0)} + p^{(1,1)}u_{(1,1)} + p^{(0,2)}u_{(0,2)} \\
&\quad{} - \frac{1}{2}u_{(2,0)}^2 - u_{(1,1)}^2 - \frac{1}{2}u_{(0,2)}^2 + qu \, ,
\end{align*}
and the forms $\Theta_r \in \Omega^{m}(\W_r)$ and $\Omega_r \in \Omega^{m+1}(\W_r)$ are given by
\begin{align*}
\Theta_r &= -\hat{H}d x \wedge d y + p^1 d u \wedge d y - p^2 d u \wedge d x + p^{(2,0)}d u_1 \wedge d y
- \frac{1}{2} \, p^{(1,1)} d u_1 \wedge d x \\
& \quad{} + \frac{1}{2} \, p^{(1,1)} d u_2 \wedge d y - p^{(0,2)}d u_2 \wedge d x \, , \\
\Omega_r &= d \hat{H} \wedge d x \wedge d y - d p^1 d u \wedge d y + d p^2 d u \wedge d x - d p^{(2,0)}d u_1 \wedge d y \\
& \quad{} + \frac{1}{2} \, d p^{(1,1)} d u_1 \wedge d x
- \frac{1}{2} \, d p^{(1,1)} d u_2 \wedge d y + d p^{(0,2)}d u_2 \wedge d x \, .
\end{align*}

Let $\psi \in \Gamma(\rho_M^r)$ be a section. Then the field equation \eqref{eqn:UnifDynEqSect}
gives in coordinates the following system of equations
\begin{align*}
&\derpar{p^1}{x} + \derpar{p^2}{y} + q = 0 \, , \\
&\derpar{p^{(2,0)}}{x} + \frac{1}{2}\derpar{p^{(1,1)}}{y} + p^1 = 0 \quad ; \quad
\frac{1}{2} \derpar{p^{(1,1)}}{x} + \derpar{p^{(0,2)}}{y} + p^2 = 0 \, , \\[6pt]
&p^{(2,0)} - u_{(2,0)} = 0 \quad ; \quad p^{(1,1)} - 2u_{(1,1)} = 0 \quad ; \quad p^{(0,2)} - u_{(0,2)} = 0 \, , \\
&u_1 - \derpar{u}{x} = 0 \quad ; \quad u_2 - \derpar{u}{y} = 0 \, , \\
&u_{(2,0)} - \derpar{u_1}{x} = 0 \quad ; \quad
u_{(1,1)} - \frac{1}{2}\left( \derpar{u_1}{y} + \derpar{u_2}{x} \right) = 0 \quad ; \quad
u_{(0,2)} - \derpar{u_2}{y} = 0 \, .
\end{align*}
Combining the second and third group of equations, we obtain the constraints defining the submanifold
$\W_\Lag$, and hence the Legendre map associated to this Lagrangian density, which is the fiber bundle
map $\Leg \colon J^3\pi \to J^2\pi^\ddagger$ given locally by
\begin{equation*}
\begin{array}{c}
\displaystyle \Leg^*p^1 = -u_{(3,0)} - u_{(1,2)} \quad ; \quad
\Leg^*p^2 = -u_{(2,1)} - u_{(0,3)} \, , \\[10pt]
\displaystyle \Leg^*p^{(2,0)} = u_{(2,0)} \quad ; \quad
\Leg^*p^{(1,1)} = 2u_{(1,1)} \quad ; \quad
\Leg^*p^{(0,2)} = u_{(0,2)} \, .
\end{array}
\end{equation*}
Observe that the tangent map of $\Leg$ at every point $j^3\phi \in J^3\pi$ is given in coordinates
by the $10 \times 12$ real matrix
$$
T_{j^3\phi}\Leg =
\left(
\begin{array}{cccccccccccc}
1 & 0 & 0 & 0 & 0 & 0 & 0 & 0 & 0 & 0 & 0 & 0 \\
0 & 1 & 0 & 0 & 0 & 0 & 0 & 0 & 0 & 0 & 0 & 0 \\
0 & 0 & 1 & 0 & 0 & 0 & 0 & 0 & 0 & 0 & 0 & 0 \\
0 & 0 & 0 & 1 & 0 & 0 & 0 & 0 & 0 & 0 & 0 & 0 \\
0 & 0 & 0 & 0 & 1 & 0 & 0 & 0 & 0 & 0 & 0 & 0 \\
0 & 0 & 0 & 0 & 0 & 0 & 0 & 0 & -1 & 0 & -1 & 0 \\
0 & 0 & 0 & 0 & 0 & 0 & 0 & 0 & 0 & -1 & 0 & -1 \\
0 & 0 & 0 & 0 & 0 & 1 & 0 & 0 & 0 & 0 & 0 & 0 \\
0 & 0 & 0 & 0 & 0 & 0 & 2 & 0 & 0 & 0 & 0 & 0 \\
0 & 0 & 0 & 0 & 0 & 0 & 0 & 1 & 0 & 0 & 0 & 0
\end{array}
\right) \, .
$$
From this it is clear that $\textnormal{rank}(\Leg(j^3\phi)) = 10 = \dim J^2\pi^\ddagger$. Hence,
the restricted Legendre map is a submersion onto $J^2\pi^\ddagger$, and therefore the Lagrangian density
$\Lag \in \Omega^{2}(J^2\pi)$ is regular.

Finally, combining the first three groups of equations, we obtain the Euler-Lagrange equation
$$
u_{(4,0)} + 2u_{(2,2)} + u_{(0,4)} = q \Longleftrightarrow
\frac{\partial^4 u}{\partial x^4} + \frac{\partial^4 u}{\partial x^2 \partial y^2} + \frac{\partial^4 u}{\partial y^4} = q \, .
$$
This is the classical equation for the bending of a clamped plate under a uniform load $q$.

Now, let $\X \in \vf^2(\W_r)$ be a locally decomposable bivector field given locally by
\eqref{eqn:UnifGenericMultiVFLocal}. Then the equation \eqref{eqn:UnifDynEqMultiVF}
gives in coordinates the following system of equations
\begin{align*}
&F_1 = u_1 \quad ; \quad F_2 = u_2 \, , \\
&F_{1,1} = u_{(2,0)} \quad ; \quad
\frac{1}{2} \left( F_{1,2} + F_{2,1} \right) = u_{(1,1)} \quad ; \quad
F_{2,2} = u_{(0,2)} \, , \\
&G_1^1 + G_2^2 = - q \, , \\
&G_1^{(2,0)} + \frac{1}{2} \, G_2^{(1,1)} = -p^1 \quad ; \quad
\frac{1}{2} \, G_1^{(1,1)} + G_2^{(0,2)} = -p^2 \, , \\
&p^{(2,0)} - u_{(2,0)} = 0 \quad ; \quad
p^{(1,1)} - 2u_{(1,1)} = 0 \quad ; \quad
p^{(0,2)} - u_{(0,2)} = 0 \, .
\end{align*}
Moreover, if we assume that $\X$ is holonomic, then we have the following additional equations
$$
\begin{array}{c}
F_{1,2} = u_{(1,1)} \quad ; \quad F_{2,1} = u_{(1,1)} \quad ; \quad
F_{(2,0),1} = u_{(3,0)} \quad ; \quad F_{(2,0),2} = u_{(2,1)} \, , \\[5pt]
F_{(1,1),1} = u_{(2,1)} \quad ; \quad F_{(1,1),2} = u_{(1,2)} \quad ; \quad
F_{(0,2),1} = u_{(1,2)} \quad ; \quad F_{(0,2),2} = u_{(0,3)} \, .
\end{array}
$$
From the field equations, we deduce that the first constraint submanifold $\W_c \hookrightarrow \W_r$
is given in coordinates by the local constraints
$$
p^{(2,0)} - u_{(2,0)} = 0 \quad ; \quad p^{(1,1)} - 2u_{(1,1)} = 0 \quad ; \quad p^{(0,2)} - u_{(0,2)} = 0 \, .
$$
The tangency condition for the multivector field $\X$ along $\W_c$ enables us to determine all the
coefficients $G_i^I$, with $i = 1,2$ and $|I| = 2$, in the following way
\begin{align*}
G_1^{(2,0)} = u_{(3,0)} \quad ; \quad G_1^{(1,1)} = 2u_{(2,1)} \quad ; \quad G_1^{(0,2)} = u_{(1,2)} \, , \\
G_2^{(2,0)} = u_{(2,1)} \quad ; \quad G_2^{(1,1)} = 2u_{(1,2)} \quad ; \quad G_2^{(0,2)} = u_{(0,3)} \, .
\end{align*}
Then, using the previous field equations, we obtain the following additional constraints
$$
p^1 + u_{(3,0)} + u_{(1,2)} = 0 \quad ; \quad p^2 + u_{(2,1)} + u_{(0,3)} = 0 \, ,
$$
which define a new submanifold $\W_\Lag \hookrightarrow \W_r$. Analyzing the tangency of the multivector
field $\X$ along this new submanifold $\W_\Lag$, we obtain the following equations
\begin{align*}
G_1^1 + F_{(3,0),1} + F_{(1,2),1} = 0 \quad ; \quad G_1^2 + F_{(2,1),1} + F_{(0,3),1} = 0 \, , \\
G_2^1 + F_{(3,0),2} + F_{(1,2),2} = 0 \quad ; \quad G_1^2 + F_{(2,1),2} + F_{(0,3),2} = 0 \, .
\end{align*}
Using again the field equations, we obtain the Euler-Lagrange equation for
a multivector field, which is
$$
F_{(3,0),1} + F_{(1,2),1} + F_{(2,1),2} + F_{(0,3),2} = q \, .
$$
Observe that if $\psi \in \Gamma(\rho_M^r)$ is an integral section of $\X$, then its component functions
must satisfy the Euler-Lagrange equation previously obtained for sections.

\paragraph{\textbf{Lagrangian formalism.}}

Now we recover the Lagrangian structures and equations from the unified setting. In order to obtain
the Poincar\'{e}-Cartan $2$-form $\Theta_\Lag = \widetilde{\Leg}^*\Theta_1^s \in \Omega^{2}(J^3\pi)$,
we need the extended Legendre map $\widetilde{\Leg} \colon J^3\pi \to J^2\pi^\dagger$. From the results
in Section \ref{sec:UnifFieldEquationsSect}, the extended Legendre map is given locally by
\begin{gather*}
\widetilde{\Leg}^*p^1 = -u_{(3,0)} - u_{(1,2)} \quad ; \quad
\widetilde{\Leg}^*p^2 = -u_{(2,1)} - u_{(0,3)} \, , \\[10pt]
\widetilde{\Leg}^*p^{(2,0)} = u_{(2,0)} \quad ; \quad
\widetilde{\Leg}^*p^{(1,1)} = 2u_{(1,1)} \quad ; \quad
\widetilde{\Leg}^*p^{(0,2)} = u_{(0,2)} \, , \\
\widetilde{\Leg}^*p = u_{(3,0)}u_1 + u_{(1,2)}u_1 + u_{(2,1)}u_2 + u_{(0,3)}u_2
- \frac{1}{2}u_{(2,0)}^2 - u_{(1,1)}^2 - \frac{1}{2}u_{(0,2)}^2 - qu \, .
\end{gather*}
Therefore, the Poincar\'{e}-Cartan $2$-form is given locally by
\begin{align*}
\Theta_\Lag &=
\left( \frac{1}{2}u_{(2,0)}^2 + u_{(1,1)}^2 + \frac{1}{2}u_{(0,2)}^2 + qu
- u_{(3,0)}u_1 - u_{(1,2)}u_1 - u_{(2,1)}u_2 \right. \\
& \left. \phantom{\frac{1}{2}} - u_{(0,3)}u_2 \right) d x \wedge d y
- (u_{(3,0)} + u_{(1,2)}) d u \wedge d y + (u_{(2,1)} + u_{(0,3)}) d u \wedge d x \\
&\quad{} + u_{(2,0)}d u_1 \wedge d y - u_{(1,1)}d u_1 \wedge d x â?
+ u_{(1,1)}d u_2 \wedge d y - u_{(0,2)}d u_2 \wedge d x \, .
\end{align*}

Now, if $\Omega_\Lag = -d\Theta_\Lag$, we recover the Lagrangian solutions for the field equations
from the unified formalism. In particular, if $\psi \in \Gamma(\rho^r_M)$ is a holonomic section
solution to the field equation \eqref{eqn:UnifDynEqSect}, then the section
$\psi_\Lag = \rho_1^r \circ \psi \in \Gamma(\bar{\pi}^3)$ is holonomic and is a solution to the field
equation \eqref{eqn:LagDynEqSect}. In coordinates, the component functions of the section
$\psi_\Lag = j^3\phi$, for some $\phi(x,y) = (x,y,u(x,y)) \in \Gamma(\pi)$, are a solution to the
Euler-Lagrange equation
$$
u_{(4,0)} + 2u_{(2,2)} + u_{(0,4)} = q \, .
$$
Finally, if $\X \in \vf^{2}(\W_r)$ is a locally decomposable holonomic multivector field
solution to the field equation \eqref{eqn:UnifDynEqMultiVF}, then there exists a unique locally decomposable
holonomic multivector field $\X_\Lag \in \vf^{2}(J^3\pi)$ solution to the equation
\eqref{eqn:LagDynEqMultiVF}. In coordinates, the component functions of this multivector
field must satisfy the equation
$$
F_{(3,0),1} + F_{(1,2),1} + F_{(2,1),2} + F_{(0,3),2} = q \, .
$$

\paragraph{\textbf{Hamiltonian formalism.}}

Since the Lagrangian density is regular, the Hamiltonian formalism takes place in an open set of
$J^2\pi^\ddagger$. In fact, $\Lag \in \Omega^{2}(J^2\pi)$ is a hyperregular Lagrangian density, since
the restricted Legendre map admits global sections. For instance, the map
$$
\Upsilon =
\left(x,y,u,u_1,u_2,p^{(2,0)},\frac{1}{2} \, p^{(1,1)},p^{(0,2)},-\frac{1}{2} \, p^1,-\frac{1}{2} \, p^2,-\frac{1}{2} \, p^1,-\frac{1}{2} \, p^2\right) \, ,
$$
is a section of $\Leg$ defined everywhere in $J^2\pi^\ddagger$.

In the natural coordinates of $J^2\pi^\ddagger$, the local Hamiltonian function $H$ that specifies
the Hamiltonian $\mu$-section $h$ is given by
$$
H = p^1u_1 + p^2u_2 + \frac{1}{2}\left(p^{(2,0)}\right)^2 + \frac{1}{4}\left(p^{(1,1)}\right)^2
+ \frac{1}{2}\left(p^{(0,2)}\right)^2 + qu \, .
$$
Hence, the Hamilton-Cartan $2$-form $\Theta_h \in \Omega^{2}(J^2\pi^\ddagger)$ is given locally by
\begin{align*}
\Theta_h &= \left( - p^1u_1 - p^2u_2 - \frac{1}{2}\left(p^{(2,0)}\right)^2 - \frac{1}{4}\left(p^{(1,1)}\right)^2
- \frac{1}{2}\left(p^{(0,2)}\right)^2 - qu \right) d x \wedge d y \\
&\quad{} + p^1 d u \wedge d y - p^2 d u \wedge d x
+ p^{(2,0)}d u_1 \wedge d y - \frac{1}{2} \, p^{(1,1)} d u_1 \wedge d x \\
&\quad{} + \frac{1}{2} \, p^{(1,1)} d u_2 \wedge d y - p^{(0,2)}d u_2 \wedge d x \, .
\end{align*}

Now we recover the Hamiltonian field equations and solutions from the unified setting. First, let
$\psi \in \Gamma(\rho_M^r)$ be a (holonomic) section solution to the field equation \eqref{eqn:UnifDynEqSect}.
Then, the section $\psi_h = \rho_2^r \circ \psi \in \Gamma(\bar{\pi}_{J^1\pi}^\ddagger)$ is a
solution to the equation \eqref{eqn:HamDynEqSectReg}. In coordinates, the component functions
of $\psi_h$ must satisfy the following system of partial differential equations
\begin{gather*}
\derpar{u}{x} = u_1  \ \ ;  \ \ \derpar{u}{y} = u_2  \ \; \ \
\derpar{u_1}{x} = p^{(2,0)}  \ \ ;  \ \
\derpar{u_2}{x} + \derpar{u_1}{y} = p^{(1,1)}  \ \ ;  \ \
\derpar{u_2}{y} = p^{(0,2)} \, , \\
\derpar{p^1}{x} + \derpar{p^2}{y} = q  \ \ ;  \ \
\derpar{p^{(2,0)}}{x} + \frac{1}{2} \, \derpar{p^{(1,1)}}{y} = - p^1  \ \ ;  \ \
\frac{1}{2} \, \derpar{p^{(1,1)}}{x} + \derpar{p^{(0,2)}}{y} = - p^2 \, .
\end{gather*}
Finally, if $\X \in \vf^2(\W_r)$ is a locally decomposable multivector field solution to the equation
\eqref{eqn:UnifDynEqMultiVF}, then there exists a locally decomposable multivector field
$\X_h \in \vf^{2}(J^2\pi^\ddagger)$ solution to the equation \eqref{eqn:HamDynEqMultiVFReg}. If $\X_h$
is locally given by \eqref{eqn:HamGenericMultiVFLocalReg}, then its component functions must satisfy
the following equations
\begin{gather*}
F_1 = u_1  \ \ ;  \ \ F_2 = u_2  \ \ ;  \ \
F_{1,1} = p^{(2,0)}  \ \ ;  \ \
F_{2,1} + F_{1,2} = p^{(1,1)}  \ \ ;  \ \
F_{2,2} = p^{(0,2)} \, , \\
G^{1}_{1} + G^{2}_2 = q  \ \ ;  \ \
G^{(2,0)}_1 + \frac{1}{2} \, G^{(1,1)}_2 = - p^1  \ \ ;  \ \
\frac{1}{2} \, G^{(1,1)}_1 + G^{(0,2)}_2 = - p^2 \, .
\end{gather*}

\subsection{Korteweg-de Vries equation}

Next we derive the Korteweg-de Vries equation, usually denoted as the KdV equation
for short, using the geometric formalism introduced in this paper. The KdV equation is a
mathematical model of waves on shallow water surfaces, and has become the prototypical example
of a non-linear partial differential equation whose solutions can be specified exactly. Many papers
are devoted to analyzing this model and, in particular, some previous multisymplectic descriptions
of it are available for instance \cite{art:Ascher_McLachlan2005,proc:Gotay88,art:Zhao_Qin2000}.
A further analysis using a different version of the unified formalism is given in \cite{art:Vitagliano10}.

The usual form of the KdV equation is
$$
\derpar{y}{t} - 6y\derpar{y}{x} + \frac{\partial^3y}{\partial x^3} = 0 \, ,
$$
that is, a non-linear, dispersive partial differential equation for a real function $y$ depending
on two real variables, the space $x$ and the time $t$. It is known that the KdV equation can be
derived from a least action principle as the Euler-Lagrange equation of the Lagrangian density
$$
\Lag = \frac{1}{2}\derpar{u}{x}\derpar{u}{t} - \left(\derpar{u}{x}\right)^3 - \frac{1}{2} \left(\frac{\partial^2u}{\partial x^2}\right)^2 \, ,
$$
where $y = \partial u / \partial x$. It is therefore clear that we can use our formulation to derive
the KdV equation as the field equations of a second-order field theory with a
$2$-dimensional base manifold and a $1$-dimensional fiber over this base.

Hence, let $M = \mathbb{R}^2$ with global coordinates $(x,t)$, and $E = \mathbb{R}^2 \times \mathbb{R}$ with
natural coordinates adapted to the bundle structure, $(x,t,u)$. In these coordinates, the canonical
volume form in $\mathbb{R}^2$ is given by $\eta = d x \wedge d t \in \Omega^2(\mathbb{R}^2)$.

In the induced coordinates $(x,t,u,u_1,u_2,u_{(2,0)},u_{(1,1)},u_{(0,2)})$ of $J^2\pi$,
the Lagrangian density $\Lag \in \Omega^{2}(J^2\pi)$ given above may be written as
$$
\Lag = \frac{1}{2} \left( u_1u_2 - 2u_1^3 - u_{(2,0)}^2\right) d x \wedge d t \, .
$$

\paragraph{\textbf{Lagrangian-Hamiltonian formalism.}}

Following Section \ref{sec:GeomSetting}, consider the fiber bundles
$$
\W = J^3\pi \times_{J^1\pi} J^2\pi^\dagger \quad ; \quad
\W_r = J^3\pi \times_{J^1\pi} J^2\pi^\ddagger \, ,
$$
with the natural coordinates introduced in the aforementioned Section.
Observe that, as in the previous example, we have $\dim J^3\pi = 12$ and $\dim J^2\pi^\ddagger = 10$,
and therefore $\dim\W = 18$ and $\dim\W_r = 17$.

The Hamiltonian $\mu_\W$-section $\hat{h} \in \Gamma(\mu_\W)$ is specified
by the local Hamiltonian function
$$
\hat{H} = p^1u_1 + p^2u_2 + p^{(2,0)}u_{(2,0)} + p^{(1,1)}u_{(1,1)} + p^{(0,2)}u_{(0,2)}
- \frac{1}{2}u_1u_2 + u_1^3 + \frac{1}{2}u_{(2,0)}^2 \, ,
$$
and the Hamilton-Cartan forms have the same expressions as in the previous example, replacing the
local Hamiltonian function.

Let $\psi \in \Gamma(\rho_M^r)$ be a section. Then the field equation \eqref{eqn:UnifDynEqSect}
gives in coordinates the following system of equations
\begin{align*}
&\derpar{p^1}{x} + \derpar{p^2}{t} = 0 \, , \\
&\derpar{p^{(2,0)}}{x} + \frac{1}{2} \, \derpar{p^{(1,1)}}{t} + p^1 - \frac{1}{2}u_2 + 3u_1^2 = 0 \quad ; \quad
\frac{1}{2} \, \derpar{p^{(1,1)}}{x} + \derpar{p^{(0,2)}}{t} + p^2 - \frac{1}{2}u_1 = 0 \, , \\[6pt]
&p^{(2,0)} + u_{(2,0)} = 0 \quad ; \quad p^{(1,1)} = 0 \quad ; \quad p^{(0,2)} = 0 \, , \\[6pt]
&u_1 - \derpar{u}{x} = 0 \quad ; \quad u_2 - \derpar{u}{t} = 0 \, , \\
&u_{(2,0)} - \derpar{u_1}{x} = 0 \quad ; \quad
u_{(1,1)} - \frac{1}{2}\left( \derpar{u_1}{t} + \derpar{u_2}{x} \right) = 0 \quad ; \quad
u_{(0,2)} - \derpar{u_2}{t} = 0 \, .
\end{align*}
From these local equations, we obtain the coordinate expression of the Legendre map
$\Leg \colon J^3\pi \to J^2\pi^\ddagger$, which is
$$
\begin{array}{c}
\displaystyle \Leg^*p^1 = \frac{1}{2}u_2 - 3u_1^2 + u_{(3,0)} \quad ; \quad
\Leg^*p^2 = \frac{1}{2} u_1 \, , \\[10pt]
\displaystyle \Leg^*p^{(2,0)} = - u_{(2,0)} \quad ; \quad
\Leg^*p^{(1,1)} = 0 \quad ; \quad
\Leg^*p^{(0,2)} = 0 \, .
\end{array}
$$
The tangent map of $\Leg$ at every point $j^3\phi \in J^3\pi$ is given in coordinates by
$$
T_{j^3\phi}\Leg =
\left(
\begin{array}{cccccccccccc}
1 & 0 & 0 & 0 & 0 & 0 & 0 & 0 & 0 & 0 & 0 & 0 \\
0 & 1 & 0 & 0 & 0 & 0 & 0 & 0 & 0 & 0 & 0 & 0 \\
0 & 0 & 1 & 0 & 0 & 0 & 0 & 0 & 0 & 0 & 0 & 0 \\
0 & 0 & 0 & 1 & 0 & 0 & 0 & 0 & 0 & 0 & 0 & 0 \\
0 & 0 & 0 & 0 & 1 & 0 & 0 & 0 & 0 & 0 & 0 & 0 \\
0 & 0 & 0 & -6u_1 & 1/2 & 0 & 0 & 0 & 1 & 0 & 0 & 0 \\
0 & 0 & 0 & 1/2 & 0 & 0 & 0 & 0 & 0 & 0 & 0 & 0 \\
0 & 0 & 0 & 0 & 0 & -1 & 0 & 0 & 0 & 0 & 0 & 0 \\
0 & 0 & 0 & 0 & 0 & 0 & 0 & 0 & 0 & 0 & 0 & 0 \\
0 & 0 & 0 & 0 & 0 & 0 & 0 & 0 & 0 & 0 & 0 & 0
\end{array}
\right) \, .
$$
From this it is clear that $\textnormal{rank}(\Leg(j^3\phi)) = 7 < 10 = \dim J^2\pi^\ddagger$.
Therefore, the Lagrangian density $\Lag \in \Omega^{2}(J^2\pi)$ is singular.

Finally, by combining the first three groups of equations, we obtain the second-order
Euler-Lagrange equation for this field theory
$$
u_{(1,1)} - 6u_1u_{(2,0)} + u_{(4,0)} = 0 \longleftrightarrow
\frac{\partial^2 u}{\partial t \, \partial x} - 6\,\derpar{u}{x} \, \frac{\partial^2 u}{\partial x^2} + \frac{\partial^4 u}{\partial x^4} = 0 \, ,
$$
which, taking $y = \partial u / \partial x$, is the usual Korteweg-de Vries equation.

Now, let $\X \in \vf^2(\W_r)$ be a locally decomposable $2$-vector field with coordinate expression
\eqref{eqn:UnifGenericMultiVFLocal}. Then the field equation \eqref{eqn:UnifDynEqMultiVF}
gives in coordinates the following system of equations
\begin{align*}
&F_1 = u_1 \quad ; \quad F_2 = u_2 \, , \\
&F_{1,1} = u_{(2,0)} \quad ; \quad
\frac{1}{2} \left( F_{1,2} + F_{2,1} \right) = u_{(1,1)} \quad ; \quad
F_{2,2} = u_{(0,2)} \, , \\
&G_1^1 + G_2^2 = 0 \, , \\
&G_1^{(2,0)} + \frac{1}{2} \, G_2^{(1,1)} = \frac{1}{2} u_2 - 3u_1^2 - p^1 \quad ; \quad
\frac{1}{2}G_1^{(1,1)} + G_2^{(0,2)} = \frac{1}{2} u_1 - p^2 \, , \\
&p^{(2,0)} + u_{(2,0)} = 0 \quad ; \quad
p^{(1,1)} = 0 \quad ; \quad
p^{(0,2)} = 0 \, .
\end{align*}
Moreover, if we assume that $\X$ is holonomic, then we have the following additional equations
$$
\begin{array}{c}
F_{1,2} = u_{(1,1)} \quad ; \quad F_{2,1} = u_{(1,1)} \quad ; \quad
F_{(2,0),1} = u_{(3,0)} \quad ; \quad F_{(2,0),2} = u_{(2,1)} \, , \\[5pt]
F_{(1,1),1} = u_{(2,1)} \quad ; \quad F_{(1,1),2} = u_{(1,2)} \quad ; \quad
F_{(0,2),1} = u_{(1,2)} \quad ; \quad F_{(0,2),2} = u_{(0,3)} \, .
\end{array}
$$
From the coordinate expression of the field equation, we obtain the local constraints defining
the first constraint submanifold $\W_c \hookrightarrow \W_r$, which are
$$
p^{(2,0)} + u_{(2,0)} = 0 \quad ; \quad p^{(1,1)} = 0 \quad ; \quad p^{(0,2)} = 0 \, .
$$
The tangency condition for the $2$-vector field $\X$ along $\W_c$ gives the following local equations
\begin{align*}
G_1^{(2,0)} + u_{(3,0)} = 0 \quad ; \quad G_1^{(1,1)} = 0 \quad ; \quad G_1^{(0,2)} = 0 \, , \\
G_2^{(2,0)} + u_{(2,1)} = 0 \quad ; \quad G_2^{(1,1)} = 0 \quad ; \quad G_2^{(0,2)} = 0 \, .
\end{align*}
Then, using the local equations obtained above, we have the following additional constraints
$$
p^1 - \frac{1}{2}u_2 + 3u_1^2 - u_{(3,0)} = 0 \quad ; \quad
p^2 - \frac{1}{2} u_1 = 0 \, ,
$$
which define a new submanifold $\W_\Lag \hookrightarrow \W_r$. Analyzing the tangency
of the multivector field along this new submanifold $\W_\Lag$, we obtain the following equations
\begin{align*}
G_1^1 - \frac{1}{2}u_{(1,1)} + 6u_1u_{(2,0)} - F_{(3,0),1} = 0 \quad ; \quad
G_1^2 - \frac{1}{2}u_{(2,0)} = 0 \, , \\[6pt]
G_2^1 - \frac{1}{2}u_{(0,2)} + 6u_1u_{(1,1)} - F_{(3,0),2} = 0 \quad ; \quad
G_2^2 - \frac{1}{2}u_{(1,1)} = 0 \, .
\end{align*}
Using again the field equations, we obtain the Euler-Lagrange equation for a multivector field
$$
u_{(1,1)} - 6u_1u_{(2,0)} + F_{(3,0),1} = 0 \, ,
$$
from where we can determinate $F_{(3,0),1}$ as
$$
F_{(3,0),1} = 6u_1u_{(2,0)} - u_{(1,1)}  \, .
$$

\noindent\textbf{Remark:}
Observe that, in this case, the Lagrangian density is singular, but there are no additional
constraints. This implies that the final constraint submanifold is the whole submanifold $\W_\Lag$
in the unified formalism.

\paragraph{\textbf{Lagrangian formalism.}}

Now we recover the Lagrangian formalism from the unified setting. First, we need the coordinate
expression of the extended Legendre map $\widetilde{\Leg} \colon J^3\pi \to J^2\pi^\dagger$. From
the results in Section \ref{sec:UnifFieldEquationsSect}, the local expression of $\widetilde{\Leg}$ is
\begin{gather*}
\Leg^*p^1 = \frac{1}{2}u_2 - 3u_1^2 + u_{(3,0)} \quad ; \quad
\Leg^*p^2 = \frac{1}{2} u_1 \, , \\
\displaystyle \Leg^*p^{(2,0)} = - u_{(2,0)} \quad ; \quad
\Leg^*p^{(1,1)} = 0 \quad ; \quad
\Leg^*p^{(0,2)} = 0 \, , \\
\displaystyle \widetilde{\Leg}^*p =
- \frac{1}{2}u_1u_2 + 2u_1^3 - u_{(3,0)}u_1 + \frac{1}{2}u_{(2,0)}^2 \, .
\end{gather*}
Therefore, the Poincar\'{e}-Cartan $2$-form $\Theta_\Lag = \widetilde{\Leg}^*\Theta_1^s \in \Omega^{2}(J^3\pi)$
is given locally by
\begin{align*}
\Theta_\Lag &=
\left( \frac{1}{2}u_1u_2 - 2u_1^3 + u_{(3,0)}u_1 - \frac{1}{2}u_{(2,0)}^2 \right) d x \wedge d y \\
&\quad{} + \left( \frac{1}{2}u_2 - 3u_1^2 + u_{(3,0)} \right) d u \wedge d y
- \frac{1}{2}u_1 d u \wedge d x
- u_{(2,0)}d u_1 \wedge d y \, .
\end{align*}

Let $\psi \in \Gamma(\rho^r_M)$ be a holonomic section solution to the field equation
\eqref{eqn:UnifDynEqSect}. Then, the section $\psi_\Lag = \rho_1^r \circ \psi \in \Gamma(\bar{\pi}^3)$
is holonomic and is a solution to the Lagrangian field equation \eqref{eqn:LagDynEqSect}. In
coordinates, the component functions of the section $\psi_\Lag = j^3\phi$ for some
$\phi(x,t) = (x,t,u(x,t)) \in \Gamma(\pi)$, are a solution to the Euler-Lagrange equation
$$
u_{(1,1)} - 6u_1u_{(2,0)} + u_{(4,0)} = 0 \, .
$$
On the other hand, if $\X \in \vf^{2}(\W_r)$ is a locally decomposable holonomic multivector field
solution to the field equation \eqref{eqn:UnifDynEqMultiVF}, then there exists a unique locally decomposable
holonomic multivector field $\X_\Lag \in \vf^{2}(J^3\pi)$ solution to the equation
\eqref{eqn:LagDynEqMultiVF}. In coordinates, the component functions of this multivector
field must satisfy the equation
$$
F_{(3,0),1} = 6u_1u_{(2,0)} - u_{(1,1)}  \, .
$$

\paragraph{\textbf{Hamiltonian formalism.}}

Since the Lagrangian density is singular, the Hamiltonian formalism takes place in the submanifold
$\P = \textnormal{Im}(\Leg) \hookrightarrow J^2\pi^\ddagger$. Bearing in mind the coordinate
expression of the Legendre map, the submanifold $\P$ is locally defined by the constraints
$$
p^2 - \frac{1}{2}u_1 = 0 \quad ; \quad
p^{(1,1)} = 0 \quad ; \quad
p^{(0,2)} = 0 \, .
$$
Observe that $\dim\P = \textnormal{rank}(\Leg) = 7$.

The natural coordinates $(x,t,u,u_1,u_2,p^1,p^2,p^{(2,0)},p^{(1,1)},p^{(0,2)})$ in $J^2\pi^\ddagger$
induce coordinates $(x,t,u,u_1,u_2,p^1,p^{(2,0)})$ in $\P$, with the natural embedding
$\jmath \colon \P \hookrightarrow J^2\pi^\ddagger$ given locally by
$$
\jmath^*p^2 = \frac{1}{2}u_1 \quad ; \quad
\jmath^*p^{(1,1)} = 0 \quad ; \quad \jmath^*p^{(0,2)} = 0 \, .
$$

In these coordinates, the local Hamiltonian function that specifies the Hamiltonian section $h$ is given by
$$
H = p^1u_1 + u_1^3 - \frac{1}{2}\left(p^{(2,0)}\right)^2 \, .
$$

Therefore, the Hamilton-Cartan $2$-form $\Theta_h = h^*\Theta_1^s \in \Omega^{2}(\P)$ is given locally by
\begin{align*}
\Theta_h &= \left( \frac{1}{2}\left(p^{(2,0)}\right)^2 - p^1u_1 - u_1^3 \right) d x \wedge d t
+ p^1 d u \wedge d t \\
&\quad{} - \frac{1}{2}u_1 d u \wedge d x + p^{(2,0)}d u_1 \wedge d t  \, .
\end{align*}

Now we recover the Hamiltonian field equations. If $\psi \in \Gamma(\rho_M^r)$ is a (holonomic)
section solution to the field equation \eqref{eqn:UnifDynEqSect}, then the section
$\psi_h = \Leg \circ \rho_1^r \circ \psi \in \Gamma(\bar{\pi}_\P)$ is a solution to the equation
\eqref{eqn:HamDynEqSectSing}. In coordinates, the component functions of $\psi_h$ must satisfy the
following system of partial differential equations
$$
\derpar{u}{x} = u_1 \quad ; \quad
\frac{1}{2}\derpar{u}{t} = p^1 + 3u_1^2 \quad ; \quad
\derpar{p^1}{x} + \frac{1}{2}\derpar{u_1}{t} = 0 \quad ; \quad
\derpar{u_1}{x} = -p^{(2,0)} \, .
$$
Finally, if $\X \in \vf^2(\W_r)$ is a locally decomposable $2$-vector field solution to the equation
\eqref{eqn:UnifDynEqMultiVF}, then there exists a locally decomposable $2$-vector field
$\X_h \in \vf^{2}(\P)$ solution to the equation \eqref{eqn:HamDynEqMultiVFSing}. If $\X_h$ is locally given by
\begin{align*}
\X_h &= \left(  \derpar{}{x} + F_1\derpar{}{u} + F_{1,1}\derpar{}{u_1} + F_{2,1}\derpar{}{u_2}
+ G_{1}^1 \derpar{}{p^1} + G_{1}^{(2,0)} \derpar{}{p^{(2,0)}} \right) \\
&\quad{} \wedge \left(  \derpar{}{t} + F_2\derpar{}{u} + F_{1,2}\derpar{}{u_1} + F_{2,2}\derpar{}{u_2}
+ G_{2}^1 \derpar{}{p^1} + G_{2}^{(2,0)} \derpar{}{p^{(2,0)}} \right) \, ,
\end{align*}
then its component functions must satisfy the following equations
$$
F_1 = u_1 \quad ; \quad \frac{1}{2} F_2 = p^1 + 3u_1^2 \quad ; \quad
G^1_1 + \frac{1}{2}F_{1,2} = 0 \quad ; \quad
F_{1,1} = -p^{(2,0)} \, .
$$

\section{Conclusions and further research}
\label{sec:Conclusions}

We develop a new multisymplectic framework for describing higher-order field theories, and, in
particular, second-order ones which are the most relevant in physics (to the best of our knowledge,
the most interesting higher-order models and theories in physics are of second-order). This model is
based on the extension of the so-called \emph\emph{Skinner-Rusk unified formalism} from mechanical
systems to higher-order field theories, and thereby complements previous papers such as
\cite{art:Campos_DeLeon_Martin_Vankerschaver09,art:Vitagliano10},
in which analogous but different formulations are given.

The key points of the formalism are as follows:
\begin{itemize}
\item
The Skinner-Rusk formalism is a special case of what (in the modern terminology) is called a
\emph{Dirac structure}. It unifies in a single frame the Lagrangian and Hamiltonian formalisms,
and hence gives a unified version of the Euler-Lagrange and the Hamilton equations.

In our case, the $4th$-order Euler-Lagrange equations and the Hamilton-De Donder-Weil equations
for field theories described by $2nd$-order Lagrangian densities are stated in a combined form using
both sections and multivector fields  in a suitable fiber bundle over the configuration bundle of
the theory, $E \stackrel{\pi}{\longrightarrow} M$. This bundle is the restricted $2$-symmetric
jet-multimomentum bundle $\W_r=J^3\pi \times_{J^1\pi} J^{2}\pi^\ddagger$, which is a quotient bundle
of the extended $2$-symmetric jet-multimomentum bundle $\W=J^3\pi \times_{J^1\pi} J^{2}\pi^\dagger$,
where $J^{2}\pi^\dagger$ is the $2$-symmetric multimomentum bundle introduced in
\cite{art:Saunders_Crampin90}, and $J^{2}\pi^\ddagger = J^2\pi^\dagger / \Lambda^m_1(J^1\pi)$.
The use of this bundle is the crucial point for univocally defining a Legendre map, and therefore
the Poincar\'{e}-Cartan forms.

As usual, the physical information of the theory is given by a Lagrangian density, although the
geometry is provided by the canonical multisymplectic form $\Omega_1$ with which the $2$-symmetric
multimomentum bundle is endowed. This enables us to construct the form $\Omega_r$ which induces
the geometry of $\W_r$. Thus, in the unified formalism the geometry and the physical information are
separated.

\item
As is characteristic in the unified formalism, independently of the regularity of the Lagrangian
density, $\Omega_r$ is a premultisymplectic form in $\W_r$. Hence, the compatibility condition for
the field equations and the subsequent tangency or consistent condition for their solutions allows us
to determine univocally the Legendre map, thanks to the symmetry relation introduced in the highest-order
multimomenta coordinates. This relation equals the number of highest-order multimomenta with the number
of highest-order ``velocities'' in the Lagrangian density, and therefore enables us to establish a
$1$-to-$1$ correspondence between these two sets of coordinates, giving rise to the highest-order
equations defining the Legendre map. If the Lagrangian is regular (in the sense given in Definition
\ref{reglag}), then the constraint algorithm stops at the first level; otherwise it continues in the usual way.

Furthermore, as stated above, from the form $\Omega_r$ in the unified formalism we also recover the
Poincar\'e-Cartan form of
the Lagrangian formalism in an unambiguous way. Hence, the Lagrangian formalism for second-order field
theories is stated straightforwardly for the regular and singular (almost-regular) cases. In the same
way, we can obtain the associated Hamiltonian formalism in both cases using the unambiguously defined
Legendre map, and eventually a Hamiltonian section associated to the Lagrangian function.

\item
Despite what occurs in higher-order mechanics, the condition for the solutions to the field equations
to be holonomic is not guaranteed (even in the regular case), and neither can it be obtained from
the constraint algorithm. In higher-order field theory, this condition constitutes an additional
requirement of the theory.

\item Comparing our formulation with previous works found in the literature, we have that:

The unified formalism developed in \cite{art:Campos_DeLeon_Martin_Vankerschaver09} is different from
ours, since it uses  $J^2\pi \times_{J^1\pi} \Lambda^{m}_2(J^1\pi)$ as the extended jet-multimomentum
bundle, and, as pointed out in the introduction, some parameters appearing in the solutions of the
higher-order field equations (which are written in terms of sections and Ehresmann connections),
and in the definition of the Legendre map remain undetermined and must be fixed ``ad-hoc''. This does
not occur in our formalism; in fact, the constraint algorithm plays a crucial role in the determination
of all these arbitrary parameters. In addition,  in \cite{art:Campos_DeLeon_Martin_Vankerschaver09}
the theory is stated only in the unified setting, and the Lagrangian and Hamiltonian formalisms
are not explicitly recovered.

In \cite{art:Grabowska_Vitagliano14} the authors use a different approach to higher-order field
theories by means of a generalized version of Tulczyjew's triple, where the field equations are
obtained as Lagrangian submanifolds of the suitable extended phase spaces, and no explicit use is
made of Poincar\'{e}-Cartan forms.

Our formalism is also different from the unified formalism developed in \cite{art:Vitagliano10},
where infinite-order jet bundles are used, which are infinite-dimensional manifolds.

Another construction of a unique Poincar\'{e}-Cartan form for second-order classical field theories is made
in \cite{art:Kouranbaeva_Shkoller00} using purely variational methods, whereas that in this work this form
is derived using a Legendre transformation obtained by means of the constraint algorithm.

Finally, in \cite{art:Aldaya_Azcarraga78_2,art:Munoz84,art:Munoz85}
the authors make a more standard formulation of higher-order field theories generalizing both the Lagrangian
and Hamiltonian formalisms separately.

\item
In addition to analyzing the example of the loaded and clamped plate, we use this unified framework
to give a multisymplectic description of the KdV equation, which is also different from the standard
ones existing in the literature.
\end{itemize}

As further research, we intend to study the variational principles of second-order field theories
from this perspective.

In the main, we wish to apply this formalism to provide a multisymplectic description of the
Hilbert-Einstein theory of gravitation and other classical theories in theoretical physics. We believe
that this formalism will be useful for studying new reduction procedures of the corresponding field
equations, or for developing new numerical techniques of integration of these equations
using multisymplectic integrators.

This formulation fails when we try to generalize it to a classical field theory of order greater or
equal than $3$. The main obstruction is also the fundamental tool that we have used to obtain a unique
Legendre map from the constraint algorithm in the unified setting: the space of $2$-symmetric multimomenta.
In particular, the relation among the multimomentum coordinates that we have introduced in Section
\ref{sec:SymmetricMultimomenta}, $p^{ij}_\alpha = p_{\alpha}^{ji}$ for every
$1 \leqslant i,j \leqslant m$ and every $1 \leqslant \alpha \leqslant n$, can indeed be generalized
to higher-order field theories \cite{phd:Campos}. That is, we can generalize both the extended and
restricted $2$-symmetric multimomentum bundles to higher-order field theories. The main issue, however,
is that only the ``symmetric'' relation among the multimomentum coordinates holds for the highest-order
multimomenta. That is, this relation of symmetry on the multimomenta is not invariant under change
of coordinates for lower orders, and hence we do not obtain a submanifold of $\Lambda^m_2(J^{k-1}\pi)$.
A work to overcome this obstruction and to obtain a coordinate-free definition of a suitable Hamiltonian
phase space for classical field theories of order greater or equal than $3$ is nowadays in progress.

\vspace{10pt}
\appendix
\section{Multivector fields}
\label{sec:MultiVF}

(See \cite{art:Echeverria_Munoz_Roman98} for details).

Let $\mathcal{M}$ be a $n$-dimensional differentiable manifold. Sections of
$\Lambda^m(T \mathcal{M})$ are called $m$-\emph{multivector fields} in $\mathcal{M}$
(they are the contravariant skew-symmetric tensors of order $m$ in $\mathcal{M}$).
We denote the set of $m$-multivector fields in $\mathcal{M}$ by $\vf^m (\mathcal{M})$.

If $\mathcal{Y}\in\vf^m(\mathcal{M})$, for every $p\in \mathcal{M}$, there exists an open neighbourhood
 $U_p\subset \mathcal{M}$ and $Y_1,\ldots ,Y_r\in\vf (U_p)$ such that
$$\mathcal{Y}\vert_{U_p}=\sum_{1\leq i_1<\ldots <i_m\leq r} f^{i_1\ldots i_m}Y_{i_1}\wedge\ldots\wedge Y_{i_m} \, ,
$$
with $f^{i_1\ldots i_m} \in C^\infty (U_p)$ and $m \leqslant r\leqslant{\rm dim}\, \mathcal{M}$.
Then, $\mathcal{Y} \in \vf^m(\mathcal{M})$ is said to be \emph{locally decomposable} if,
for every $p\in \mathcal{M}$, there exists an open neighbourhood  $U_p\subset \mathcal{M}$
and $Y_1,\ldots ,Y_m\in\vf (U_p)$ such that $\mathcal{Y}\vert_{U_p}=Y_1\wedge\ldots\wedge Y_m$.

A non-vanishing $m$-multivector field $\mathcal{Y}\in\vf^m(\mathcal{M})$ and a $m$-dimensional
distribution $D\subset T\mathcal{M}$ are \emph{locally associated} if there exists a connected
open set $U\subseteq \mathcal{M}$ such that $\mathcal{Y}\vert_U$ is a section of $\Lambda^mD\vert_U$.
If $\mathcal{Y},\mathcal{Y}'\in\vf^m(\mathcal{M})$ are non-vanishing multivector fields locally
associated with the same distribution $D$, on the same connected open set $U$, then there exists a
non-vanishing function $f \in C^\infty (U)$ such that $\mathcal{Y}'\vert_U=f\mathcal{Y}\vert_U$. This
fact defines an equivalence relation in the set of non-vanishing $m$-multivector fields in
$\mathcal{M}$, whose equivalence classes will be denoted by $\{ \mathcal{Y}\}_U$. Then there is a
one-to-one correspondence between the set of $m$-dimensional orientable distributions $D$ in
$T \mathcal{M}$ and the set of the equivalence classes $\{ \mathcal{Y}\}_\mathcal{M}$ of
non-vanishing, locally decomposable $m$-multivector fields in $\mathcal{M}$.

If $\mathcal{Y}\in\vf^m(\mathcal{M})$ is non-vanishing and locally decomposable, and
$U\subseteq \mathcal{M}$ is a connected open set, the distribution associated with the class
$\{ \mathcal{Y}\}_U$ is denoted by $\mathcal{D}_U(\mathcal{Y})$. If $U=\mathcal{M}$ we write
$\mathcal{D}(\mathcal{Y})$.

A non-vanishing, locally decomposable multivector field $\mathcal{Y}\in\vf^m(\mathcal{M})$ is said
to be \emph{integrable} (resp. \emph{involutive}) if  its associated distribution
$\mathcal{D}_U(\mathcal{Y})$ is integrable (resp. involutive). Of course, if
$\mathcal{Y}\in\vf^m(\mathcal{M})$ is integrable (resp. involutive), then so is every other in
its equivalence class $\{ \mathcal{Y}\}$, and all of them have the same integral manifolds.
Moreover, \emph{Frobenius theorem} allows us to state that a non-vanishing and locally decomposable
multivector field is integrable  if, and only if, it is involutive. Nevertheless, in many applications
we have locally decomposable multivector fields $\mathcal{Y}\in\vf^m(\mathcal{M})$ which are not
integrable in $\mathcal{M}$,  but integrable in a submanifold of $\mathcal{M}$. A (local) algorithm
for finding this submanifold has been developed \cite{art:Echeverria_Munoz_Roman98}.

The particular situation in which we are interested is the study of multivector fields in fiber
bundles. If $\pi\colon \mathcal{M}\to M$ is a fiber bundle, we will be interested in the case where
the integral manifolds of integrable multivector fields in $\mathcal{M}$ are sections of $\pi$.
Thus, $\mathcal{Y}\in\vf^m(\mathcal{M})$ is said to be \emph{$\pi$-transverse} if, at every point
$y\in \mathcal{M}$, $(i (\mathcal{Y})(\pi^*\beta))_y\not= 0$, for every $\beta\in\Omega^m(M)$ with
$\beta (\pi(y))\not= 0$. Then, if $\mathcal{Y}\in\vf^m(\mathcal{M})$ is integrable, it is
$\pi$-transverse if, and only if, its integral manifolds are local sections of
$\pi\colon \mathcal{M}\to M$. In this case, if $\phi\colon U\subset M\to \mathcal{M}$ is a local
section with $\phi (x)=y$ and $\phi (U)$ is the integral manifold of $\mathcal{Y}$ through $y$,
then $T_y({\rm Im}\,\phi) = \mathcal{D}_y(\mathcal{Y})$.


\section*{Acknowledgments}

We acknowledge the financial support of \emph{Ministerio de Ciencia e Innovaci\'on} (Spain),
project MTM2011-22585, and \emph{Ministerio de Econom\'{\i}a y Competitividad} (Spain), project MTM2014-54855-P.
P.D. Prieto-Mart\'{\i}nez wants to thank the UPC for a Ph.D grant. We thank Mr. Jeff Palmer for his
assistance in preparing the English version of the manuscript. We thank the referees for their useful
comments and suggestions.


\end{document}